\newcommand{\distas}[1]{\mathbin{\overset{#1}{\kern\z@\sim}}}%
\newsavebox{\mybox}\newsavebox{\mysim}
\newtheorem{theorem}{Theorem}[section]
\newtheorem{corollary}{Corollary}[section]
\theoremstyle{definition}
\newtheorem{theo}{Theorem}[subsection]
\newtheorem{prop}{Proposition}[subsection]
\newtheorem{assum}{Assumption}[subsection]
\newtheorem{defn}{Definition}[subsection]
\newtheorem{lemmmm}{Lemma}[subsection]
\newtheorem{coroll}{Corollary}[subsection]
\newtheorem{remark}[theorem]{Remark}
\newcommand{\distras}[1]{%
  \savebox{\mybox}{\hbox{\kern3pt$\scriptstyle#1$\kern3pt}}%
  \savebox{\mysim}{\hbox{$\sim$}}%
  \mathbin{\overset{#1}{\kern\z@\resizebox{\wd\mybox}{\ht\mysim}{$\sim$}}}%
}
\newcolumntype{C}[1]{>{\centering\let\newline\\\arraybackslash\hspace{0pt}}m{#1}}
\newcommand{\blind}{1}
\begin{document}

\def\spacingset#1{\renewcommand{\baselinestretch}%
{#1}\small\normalsize} \spacingset{1}


\if1\blind
{
  \centering{\bf\Large Multi-Resolution Functional ANOVA for Large-Scale, Many-Input Computer Experiments}\\
  \vspace{0.2in}
  \centering{Chih-Li Sung$^{*,a}$, Wenjia Wang\footnote{These authors contributed equally to the manuscript.}$^{,b}$, Matthew Plumlee$^c$, Benjamin Haaland\footnote{
    The authors gratefully acknowledge funding from NSF DMS-1739097 and DMS-1564438.}$^{,d,e}$\\
      \vspace{0.2in}
  \centering{
  $^a$Michigan State University\\
  $^b$Statistical and Applied Mathematical Sciences Institute\\
    $^c$Northwestern University\\
    $^d$University of Utah\\
    $^e$Georgia Institute of Technology}
} \fi

\if0\blind
{
  \bigskip
  \bigskip
  \bigskip
  \begin{center}
    {\bf\Large Multi-Resolution Functional ANOVA for Large-Scale, Many-Input Computer Experiments}
\end{center}
  \medskip
} \fi
\begin{abstract}
The Gaussian process is a standard tool for building emulators for both deterministic and stochastic computer experiments. However, application of Gaussian process models is greatly limited in practice, particularly for large-scale and many-input computer experiments that have become typical. We propose a multi-resolution functional ANOVA model as a computationally feasible emulation alternative. More generally, this model can be used for large-scale and many-input non-linear regression problems. 

An overlapping group lasso approach is used for estimation, ensuring computational feasibility in a large-scale and many-input setting. 
New results on consistency and inference
for the (potentially overlapping) group lasso in a high-dimensional setting are developed and applied to the proposed multi-resolution functional ANOVA model.
Importantly, these results 
allow us to quantify the uncertainty in our predictions.

Numerical examples demonstrate that the proposed model enjoys marked computational advantages. Data capabilities, both in terms of sample size and dimension, meet or exceed best available emulation tools while meeting or exceeding emulation accuracy. 
\end{abstract}

\justify

\noindent%
{\it Keywords:}  computer experiments, non-linear regression, large-scale, many-input, overlapping group lasso 
\vfill

\newpage
\spacingset{1.45}

\section{Introduction}

Computer models are implementations of
complex mathematical models using computer
codes.  They are used to study systems
of interest for which physical experimentation
is either infeasible or very limited.
For example, \cite{hotzer2015large} model crystalline micro-structure of alloys as a function of solidification velocity.
Another example is the simulation of 
population-wide cardiovascular effects based on
salt intake in the U.S. presented in \cite{bibbins2010projected}.

Calibration, exploration, and optimization of a computer model requires the response given many potential inputs. Computer models are often too computationally demanding for free generation of input/response combinations. A well-established  solution to this problem is the use of \emph{emulators} \citep{sacks1989design,santner2013design}.   This solution involves evaluating the  response at a series of well-distributed inputs.
Then, 
an emulator of the computer model is built using the collected data. Calibration, exploration, or optimization can then be carried out on
the emulator directly
\citep{pratola2016bayesian,santner2013design,goh2013prediction,wang2013optimisation,
asmussen2007stochastic,fang2005design}.

A standard method for building emulators after deterministic or stochastic computer experiments
is Gaussian process \citep{santner2013design}, or almost equivalently \citep{lukic2001stochastic} reproducing kernel Hilbert space regression \citep{wahba1990spline} . 
Gaussian process modeling leverages known properties of the underlying response surface to produce mathematically
simple predictions and statistical uncertainty quantification via confidence intervals after an experiment. 

Unfortunately, the use of Gaussian process emulators is limited for large-scale computer experiments.    
Let $X=\{x_1,\ldots,x_n\}$ denote the set of input locations for the experiment, $f(x)$ the computer model response at input $x$, and $\Phi(x,x')$ the kernel function at inputs $x$ and $x'$. 
Further, let $\Phi(X,X)$ denote the $n \times n$ matrix with entries $\Phi(x_i,x_j)$ and $f(X)$ the length $n$ vector of responses $f(x_i)$.  
The simplest form of Gaussian process emulator is then found by solving for the $n$ vector $\alpha$ with $\Phi(X,X) \alpha= f(X)$.   There are at least three major challenges that prevent using the Gaussian process emulator as $n$ gets large, ranked roughly in order of consequence for typical combinations of sample size, kernel, and experimental design.
\emph{(i)} More than $n^2/2$  values are needed to represent
$\Phi(X,X)$, which 
can cause memory challenges, particularly on a personal computer and for a non-sparse $\Phi(X,X)$.
\emph{(ii)} Numeric solutions to  $\Phi(X,X) \alpha= f(X)$ 
can be highly unstable, so that more data can lead to less accurate results.
\emph{(iii)} The computational complexity for solving
the linear system $\Phi(X,X) \alpha = f(X)$ can be burdensome for large $n$.

Overcoming these problems, which are also key bottlenecks for many related statistical methods, is an active area of research, particularly in statistical emulation of computer experiments. 
While much progress has been made in this area, much work remains. 
There have been partial solutions proposed in the literature: using less smooth kernels can address \emph{(ii)} \citep{wendland2004scattered}, covariance tapering \emph{(i, ii)} \citep{furrer2006covariance,kaufman2011efficient},  a nugget effect \emph{(ii)} \citep{ranjan2011computationally}, multi-step emulators \emph{(i, ii)} \citep{haaland2011accurate}, specialized design \emph{(i, iii)} \citep{plumlee2014fast}, and parallelization and computational methods \emph{(ii)} \citep{paciorek2013parallelizing}.
To address all three challenges simultaneously, one must exploit features present in the response surface.  Local approaches to emulation address \emph{(i, ii, iii)} using the principle that only a fraction of the total responses from an experiment  are needed to achieve accurate prediction at a particular input of interest \citep{sung2016potentially,gramacy2015speeding,gramacy2015local,gramacy2014massively}. 

This article discusses a new multi-resolution functional ANOVA (MRFA) approach to emulation of large-scale (large $n$) and many-input (many-dimensional $x$) computer experiments.  
The MRFA operates by exploiting features which are commonly encountered in practical computer models. 
The remainder of this article is organized as follows.
In Section \ref{MRFA}, we provide background and preliminary results, then introduce the MRFA model.
In Section \ref{fitting}, we formulate the model fitting as an overlapping group lasso problem and discuss efficient model fitting, as well as tuning parameter selection.
In Section \ref{inference}, we present new results on consistency in the presence of approximation bias.
In Section \ref{CIs}, we present new results on large-sample hypothesis testing for the high-dimensional, potentially overlapping, group lasso problem in the stochastic case.
A heuristic approach, with coverage correction, is presented for the deterministic case.
The tests are then inverted to obtain pointwise confidence intervals on the regression function.
Basis function selection is discussed in Section \ref{basisFunctions}.
In Section \ref{examples}, we present a few illustrative examples showcasing the capabilities of the MRFA technique in a large-scale, many-input setting.
Finally, in Section \ref{discussion}, we close with a brief discussion.
Proofs are provided in the Appendix.

\section{Multi-Resolution Functional ANOVA}\label{MRFA}

The motivation for the multi-resolution functional ANOVA emulator is as follows.  First, note that a function with a low-dimensional input can easily be approximated given a large number of responses provided sufficient smoothness.  One does not have to use anything as complex as even the simplest Gaussian process regression to achieve good emulation, and in many cases Gaussian process regression would fail for the reasons discussed in the introduction. For example, if one has $n=100,000$, then a Gaussian process emulator has $100,000$ basis functions, which is far more than necessary for arbitrarily high-accuracy approximation of most low-dimensional functions. Consider the example shown in Figure \ref{fig:BasisRepresentation}.   
In the example, 1000 evenly spaced data points are collected. 
Using Wendland's kernel \citep{wendland1995piecewise} with $k=4$ and width $0.75$ implies $\Phi(X,X)$ has condition number $4.6\times 10^{22}$,
so that the matrix inverse is not useful in a floating point setting. 
Briefly, Wendland's kernels are compactly supported kernels expressed as truncated polynomials, with $k=4$ and width $0.75$ ensuring that the kernels have $2k=8$ continuous derivatives with non-zero support radius $0.75$.
More detail on Wendland's kernels is provided in Section \ref{basisFunctions}.
Back to the function approximation example problem, we see that the true function is reasonably well-approximated by the set of five basis functions shown in gray in the left panel and very well-approximated by the set of 15 basis functions shown in gray in the right panel.  
This type of multi-resolution emulation \citep{nychka2014multi} has been successfully employed for function approximation, particularly in a low-dimensional input setting.

\begin{figure}[!h]
\centering
\includegraphics[width=\textwidth]{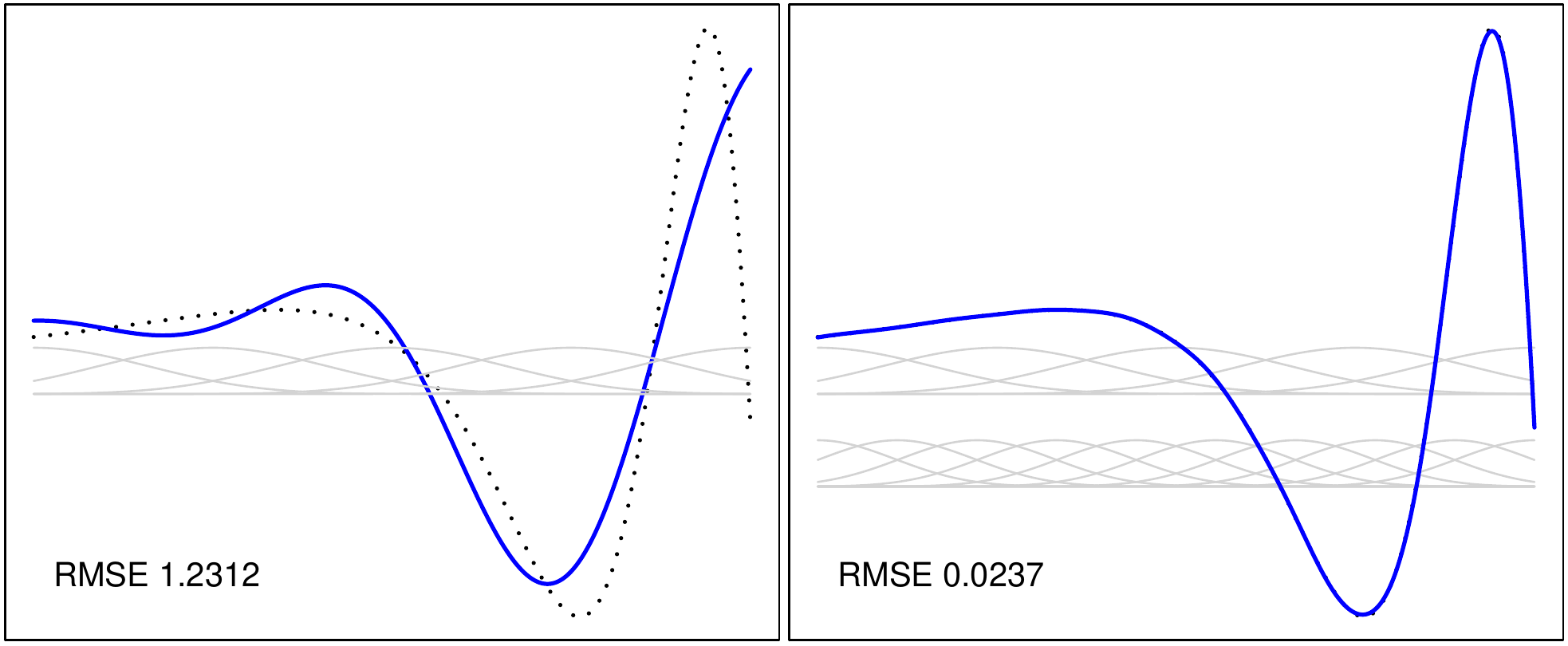}
\caption{Multi-resolution example with 5 basis function (left panel) and 15 basis functions (right panel). Here, the true function is shown in dotted black, the emulator in solid blue, and the basis functions are Wendland's kernels with $k=4$ and widths $0.75$ and $0.50$, shown in solid light gray.}
\label{fig:BasisRepresentation}
\end{figure}

Approximating easily in low-dimensions does not directly improve approximations in higher-dimensions, where coming up with a good set of basis functions is an onerous task.
Roughly speaking, if an unknown function has a high-dimensional input and no simplifying structure, then the exercise of trying to build an accurate emulator with finite data is essentially hopeless, so a means for detecting simplifying structure should be a corner-stone of any proposed technique.

Consider a relatively low-order functional ANOVA, where a function is represented as a sum of main effect functions, two-way interaction functions and so on.  Functional ANOVA has played an important role in variable screening  for many-input computer experiments. 
See for example Chap. 6.3 of \cite{fang2005design} or Chap. 7.1 of \cite{santner2013design}. 
Functional ANOVA has also been used for function approximation across a spectrum of other applications. 
For example, \cite{owen1997monte} used a functional ANOVA representation to approximate the variance of scrambled net quadrature and \cite{stone1997polynomial} approximated a general regression function using a functional ANOVA structure. 
By considering a function with a low-order functional ANOVA, the curse of dimensionality can be largely sidestepped.
While this modeling approach can increase the flexibility of additive modeling, it retains much of the interpretability. 

Our proposed multi-resolution functional ANOVA approach respects two types of \textit{strong} effect heredity \citep{wu2011experiments}, 
{\it (i)} in the order of functional ANOVA,
so that higher-order interaction functions are only entertained if all their lower-dimensional components are present, 
and 
{\it (ii)} in the {resolution} of approximation to these relatively low-dimensional component functions, so that not too many basis functions are used.
The hope is that by targeting a simpler representation (low-order functional ANOVA model), which is amenable to low-dimensional approximation (via multi-resolution model), accurate emulators can be formed in a very large-scale and many-input setting.

For an integrable function $f:\Omega\to\mathbb{R},\;\Omega\subset\mathbb{R}^d$, a functional ANOVA can be defined recursively as follows.
Let
$f_{\emptyset} =  \int_{\Omega} f(x) {\rm d}x$
and 
\begin{gather}
f_u(x) = \int_{\Omega_{-u}} \left(f(x) - \sum_{v \subsetneq u} f_v (x) \right) {\rm d}x_{-u}.\label{FA}
\end{gather}
Here, $u,v\subset\mathcal{D}=\{1,\ldots,d\}$ denote sets of indices
and
the notation $\int_{\Omega_{-u}} \cdots {\rm d}x_{-u}$ indicates integration over the variables not in $u$ for a fixed value of $x_u$. 
%
%
Now, 
$f$ can be represented via its ANOVA decomposition as
\begin{gather}
f(x) = \sum_{u \subseteq \mathcal{D}} f_u(x).\nonumber
\end{gather}
%
%
%
Note that in this decomposition, each component function $f_u(x)$ is a function of $x$ that only depends on $x_u$. 
$f_{\emptyset}$ is often referred to as the \textit{mean} function, 
$f_{\{i\}}(x)$, $i\in\mathcal{D}$ as the \textit{main effect} functions, 
$f_{\{i,j\}}(x)$, $i,j\in\mathcal{D},i\ne j$ as the \textit{two-way interaction} functions,
and so on.
The terms in the functional ANOVA (\ref{FA}) are orthogonal in $L_2(\Omega)$, which ensures uniqueness of the representation.
Generally, there is no closed form for the component functions $f_u$, so Monte Carlo techniques are commonly used to approximate them.

It turns out that if the full-dimensional function $f$ lives in a reproducing kernel Hilbert space (RKHS) \citep{aronszajn1950theory} on $[0,1]^d$ with a {product kernel}, then 
$f$ can be represented as a sum of 
component functions $f_u$, which live in RKHS's
whose kernels (and therefore norms) are determined by the full-dimensional kernel.
This result is summarized in Theorem \ref{thm:basis}, whose proof is given in Appendix \ref{append:basis}.
Define an RKHS $\mathcal{N}_\Phi(\Omega)$ for a symmetric positive-definite kernel $\Phi:\Omega\times\Omega\rightarrow\mathbb{R}$ as the \emph{closure} of the normed linear space,
\[\left\{ \left. \sum_{x \in X} \beta_x \Phi(\cdot, x) \right| \beta_x \in \mathbb{R},\;x\in\Omega \right\}, \]
with inner product $\sum_{x \in X} \sum_{y \in Y}  \alpha_x \beta_y \Phi(x, y)$ for component functions $\sum_{x \in X} \alpha_x \Phi(\cdot,x)$ and $\sum_{y \in Y} \beta_y \Phi(\cdot,y)$. 

\begin{theorem}\label{thm:basis}
Suppose $\Phi\in\Omega\times\Omega\rightarrow\mathbb{R}$ is a symmetric positive-definite kernel on $\Omega=[0,1]^d$ and $\Phi$ has a product structure, 
$\Phi(x,y)=\prod_{j=1}^d\phi_j(x_j,y_j)$.
Then, any $f\in\mathcal{N}_\Phi([0,1]^d)$
has representation $f=\sum_{u\subseteq\mathcal{D}}f_u$, where
$f_u\in\mathcal{N}_{\Phi_u}([0,1]^{|u|})$ and
$\Phi_{u}=\prod_{j \in u} \phi_j$,
where $|A|$ denotes the cardinality of a set $A$.
\end{theorem}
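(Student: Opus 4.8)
The plan is to reduce everything to the one-dimensional factors through the tensor-product structure of the native space. First I would invoke Aronszajn's characterization of the reproducing kernel Hilbert space of a product kernel: since $\Phi=\prod_{j=1}^d\phi_j$, the space $\mathcal{N}_\Phi([0,1]^d)$ is (isometrically) the Hilbert tensor product $\bigotimes_{j=1}^d\mathcal{N}_{\phi_j}([0,1])$, with $\Phi(x,\cdot)=\bigotimes_{j=1}^d\phi_j(x_j,\cdot)$; applying the same statement to the sub-product gives $\mathcal{N}_{\Phi_u}([0,1]^{|u|})=\bigotimes_{j\in u}\mathcal{N}_{\phi_j}([0,1])$. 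So it suffices to show that each ANOVA component $f_u$, regarded as a function of $x_u$ only, lies in $\bigotimes_{j\in u}\mathcal{N}_{\phi_j}$.

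Next I would introduce, for each coordinate $j$, two bounded linear maps on $\mathcal{N}_{\phi_j}([0,1])$: the integration functional $\mathcal{I}_j g=\int_0^1 g(t)\,{\rm d}t$ and the averaging operator $M_j g=(\mathcal{I}_j g)\,\mathbf{1}$. Because $\phi_j$ is continuous on the compact $[0,1]^2$, the map $t\mapsto\phi_j(\cdot,t)\in\mathcal{N}_{\phi_j}$ is norm-continuous, so $r_j:=\int_0^1\phi_j(\cdot,t)\,{\rm d}t$ exists as a Bochner integral in $\mathcal{N}_{\phi_j}$, and the reproducing property gives $\mathcal{I}_j g=\langle g,r_j\rangle_{\mathcal{N}_{\phi_j}}$; hence $\mathcal{I}_j$ is bounded and $M_j=\mathbf{1}\otimes r_j$ is a bounded rank-one operator (here one uses that the constant function lies in $\mathcal{N}_{\phi_j}([0,1])$, a mild condition met by the Wendland kernels used in this paper). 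Consequently $\mathcal{P}_j:={\rm id}-M_j$ is bounded, and by the standard rule for tensor products of bounded operators the map
\[
T_u:=\Big(\textstyle\bigotimes_{j\in u}\mathcal{P}_j\Big)\otimes\Big(\textstyle\bigotimes_{j\notin u}\mathcal{I}_j\Big)
\]
is a bounded operator from $\bigotimes_{j=1}^d\mathcal{N}_{\phi_j}$ into $\bigotimes_{j\in u}\mathcal{N}_{\phi_j}=\mathcal{N}_{\Phi_u}$. The remaining step is purely algebraic: check that $T_u f$ coincides with the ANOVA component $f_u$ of (\ref{FA}). Writing $f_u=\big(\prod_{j\in u}\mathcal{P}_j\big)\big(\prod_{j\notin u}\mathcal{I}_j\big)f$ and expanding each $\mathcal{P}_j={\rm id}-M_j$ yields the familiar Möbius/inclusion--exclusion identity between the recursive and operator forms of the functional ANOVA; I would verify this on rank-one tensors (equivalently on product functions $\prod_j g_j$, where it reduces to a one-dimensional computation) and extend by multilinearity and continuity. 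Once this identification is in place, $f_u=T_u f\in\mathcal{N}_{\Phi_u}$, and summing over $u$ recovers $f=\sum_{u\subseteq\mathcal{D}}f_u$.

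A more hands-on variant avoids the abstract tensor-product machinery: write $f$ as an $\mathcal{N}_\Phi$-norm limit of finite combinations $f^{(m)}=\sum_k\beta_k^{(m)}\Phi(\cdot,x^{(m,k)})$, compute the ANOVA component of each product kernel section explicitly as $[\Phi(\cdot,x^{(m,k)})]_u(x_u)=\big(\prod_{j\notin u}r_j(x_j^{(m,k)})\big)\prod_{j\in u}\big(\phi_j(\cdot_j,x_j^{(m,k)})-r_j(x_j^{(m,k)})\mathbf{1}\big)$, which is manifestly in $\mathcal{N}_{\Phi_u}$, and then pass to the limit using boundedness of $T_u$ and completeness of $\mathcal{N}_{\Phi_u}$. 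I expect the main obstacle to be exactly this interface between the $L_2$-orthogonality that defines the functional ANOVA and the RKHS norm: one must confirm that the integration functionals have genuine representers inside the native spaces and that the averaging operators map each $\mathcal{N}_{\phi_j}$ back into itself (this is where "constants belong to $\mathcal{N}_{\phi_j}$" is used), so that $f\mapsto f_u$ is a bounded operator into the \emph{closed} subspace $\mathcal{N}_{\Phi_u}$ rather than into some larger function space. The combinatorial equivalence of the recursive and operator forms of the ANOVA, and the bookkeeping distinguishing "function on $[0,1]^{|u|}$" from "function on $[0,1]^d$ depending only on $x_u$", are routine by comparison.
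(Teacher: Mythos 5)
Your argument is correct in outline but takes a genuinely different route from the paper, and it quietly strengthens the hypotheses. The paper's proof (Lemma \ref{lemma:fANOVA}) is an elementary induction on $|u|$ carried out on finite kernel expansions $f=\sum_{y}\beta_y\Phi(\cdot,y)$: the marginal integral $\int_{\Omega_{-u}}\Phi(x,y)\,{\rm d}x_{-u}$ is computed to be a scalar multiple of the section $\Phi_u(\cdot,y_u)$, so the only objects that ever need to lie in $\mathcal{N}_{\Phi_u}$ are finite combinations of sections of $\Phi_u$ itself, and the lower-order corrections $-\sum_{v\subsetneq u}f_v$ are pushed down into the components indexed by $v$ rather than forced into $\mathcal{N}_{\Phi_u}$. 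In other words, the paper's final $f_u$ are a \emph{regrouping} of the recursive ANOVA pieces (consistent with the remark after the theorem that orthogonality and zero means are not required), and the argument never needs the constant function to belong to any $\mathcal{N}_{\phi_j}$ except in the trivial $u=\emptyset$ slot. Your operator-theoretic construction $T_u=(\bigotimes_{j\in u}\mathcal{P}_j)\otimes(\bigotimes_{j\notin u}\mathcal{I}_j)$ instead targets the genuine $L_2$-orthogonal components of (\ref{FA}), which is a stronger conclusion --- but it stands or falls on the condition $\mathbf{1}\in\mathcal{N}_{\phi_j}([0,1])$ that you flag as ``mild.'' It is not mild for general symmetric positive-definite kernels: for a Gaussian factor $\phi_j$, for example, no native-space function can equal a nonzero constant on an interval, so $M_j$ does not map $\mathcal{N}_{\phi_j}$ into itself and $\mathcal{P}_j$ is not an endomorphism. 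The same issue infects your ``hands-on'' variant, since expanding $\prod_{j\in u}(\phi_j(\cdot_j,x_j)-r_j(x_j)\mathbf{1})$ produces terms that are constant in some coordinates of $u$ and hence need not lie in $\mathcal{N}_{\Phi_u}$. So as a proof of the theorem as stated (arbitrary product kernels, conclusion only that \emph{some} decomposition with $f_u\in\mathcal{N}_{\Phi_u}$ exists), your argument proves a different, stronger statement under an extra hypothesis, rather than the stated one.

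On the credit side, your insistence that the marginalization $\bigotimes_{j\notin u}\mathcal{I}_j$ be exhibited as a bounded operator (via Bochner-integral representers $r_j$) is doing real work that the paper compresses into a one-line dominated-convergence remark when passing from finite expansions to the closure of $\mathcal{N}_\Phi$; making that operator norm-continuity explicit is exactly what justifies interchanging the marginal integral with the $\mathcal{N}_\Phi$-limit. If you either (i) add the hypothesis $\mathbf{1}\in\mathcal{N}_{\phi_j}$ for all $j$ (true for the Wendland/Sobolev-type kernels used in the paper), or (ii) follow the paper and replace $\mathcal{P}_j={\rm id}-M_j$ by the bare marginalization operators, absorbing the subtracted lower-order terms into the lower-order components of the decomposition, your argument closes completely.
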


The proposed emulator is a low-resolution representation of a low-order functional ANOVA, $\hat{f}_{\rm ANOVA}$.
Clearly, this process introduces approximation errors due to both the resolution and the order of the ANOVA.
On the other hand, it is anticipated that for target functions encountered in practice, inaccuracy due to the low-order functional ANOVA and low-resolution approximation will be small. In other words, high-order interaction functions will be negligible and low-dimensional component functions will be well-approximated by a relatively small set of basis functions.

An MRFA emulator can be represented as
\begin{gather}
\hat{f}_{\rm MRFA}(x)=\sum_{u\in\mathcal{E}}\sum_{r\le R(u)}\hat f_{u,r}(x),\nonumber
\end{gather}
where 
$\mathcal{E}$ is a {set} of sets of indices which obeys strong effect heredity (if a set of indices is in $\mathcal{E}$, then every one of its subsets is also in $\mathcal{E}$) and 
$R(u)\in\mathbb{N}$ denotes the resolution {level} used to represent component function $f_u$.  
If each $\hat f_{u,r}$ is represented as a linear combination of $n_{u}(r)$ basis functions $\varphi_u^{rk}:\mathbb{R}^{|u|}\to\mathbb{R}$, $k=1,\ldots,n_{u}(r)$, 
then
\begin{gather}
\hat{f}_{\rm MRFA}(x)=\sum_{u\in\mathcal{E}}\sum_{r\le R(u)}\sum_{k=1}^{n_u(r)}\hat{\beta}_u^{rk}\varphi_u^{rk}(x_u).\nonumber
\end{gather}
For simplicity, the {level} of resolution is taken in pre-specified increments indexed by 
positive integers. 
$\mathcal{E}$ could also conceivably be a set of sets of indices which obeys \textit{weak} effect heredity (if a set of indices is in $\mathcal{E}$, then \textit{at least} one of its subsets of size one smaller
is also in $\mathcal{E}$). Depending on the objectives of the studies, either strong or weak effect heredity could be considered and the development herein is unchanged. 
On the other hand, strong effect heredity has computational advantages because more models are ruled out from the model search, while weak effect heredity may become computationally prohibitive in a many-input setting.   

It is important to note that for the proposed multi-resolution functional ANOVA model, we do not require zero means or orthogonality of components functions. 
While 
these properties 
ensure identifiability in a standard functional ANOVA model, as in equation (\ref{FA}), they are not required for obtaining an accurate representation.
A setup of the multi-resolution functional ANOVA which does satisfy mean zero, orthogonal effect functions could be obtained in a straightforward manner by forming functional ANOVA representations of the basis functions selected based on resolution and smoothness concerns (as outlined in Section \ref{basisFunctions}), then grouping terms appropriately. 
We chose not to pursue this line of development here because our primary interest is in strong effect heredity as a mechanism for encouraging simplicity of the function approximation.
Additionally, interpretability for the proposed multi-resolution functional ANOVA model and a standard functional ANOVA representation is similar, given the challenge of interpreting interaction functions outside the context of their parent effect functions.

{The proposed MRFA model is an example of a many-dimensional nonparametric regression model. 
In the surrounding literature, a large body of work has focused on additive models with main effect functions, such as generalized additive models (GAM) \citep{hastie1990}, regularization of derivative expectation operator (RODEO) \citep{wasserman2005rodeo} and sparse additive models (SpAM) \citep{ravikumar2009spam}. 
Related work has applied a functional ANOVA perspective to additive models, such as multivariate adaptive regression splines (MARS) \citep{friedman1991multivariate}, smoothing spline analysis of variance (SS-ANOVA) models \citep{gu2013smoothing,wahba1990spline,wahba1995smoothing}, and component selection and smoothing operator (COSSO) \citep{lin2006component}. 
Much of the work has been restricted to additive models with only main effect functions, and potentially two-way interaction functions.
In practice, this restriction may lead to biased and inaccurate regression models. 
On the other hand, the proposed model provides a mechanism to seek relevant higher-order interaction functions by considering strong effect heredity, which rules out many impractical models from the search.}

From a statistical learning perspective, the order of functional ANOVA and resolution of representation can likely be gleaned from the collected data.  This idea is adopted in the next section to enable the construction of MRFA emulators.

\section{Estimation and Regularization}\label{fitting}

A straight-forward approach to finding a set of sets of indices $\mathcal{E}$ which obeys strong effect heredity, in both functional ANOVA and resolution, and allows construction of an accurate model is {stepwise variable selection}.
Initial investigations along these lines indicate that stepwise variable selection is capable of producing a high-accuracy model, but introduces a {very} serious computational bottleneck to model fitting, particularly for large-scale and many-input problems.
Alternatively, posing the problem as a penalized regression can provide huge computational savings.

\cite{yuan2006model} proposed the group lasso penalty to build accurate models and perform variable selection with grouped variables, for example a set of basis function evaluations.
In the group lasso framework, the overall penalty term is the sum of {unsquared} $L_2$ norms of the coefficients of variables within groups. This type of penalty ensures that all the components of the groups have zero or non-zero coefficients simultaneously.
\cite{jacob2009group} noticed that the group lasso penalty could be used to enforce a spectrum of effect hierarchies by employing an \emph{overlapping group structure}.
In particular, if a group of variables' \emph{parents} (those variables which must be present if the group is present) are always included in the unsquared $L_2$ penalty component with the group of interest, then the group of variables can only have non-zero coefficients if the parents have non-zero coefficients. 
One can consider the penalized loss function
\begin{gather}\label{eq:lossfuncion}
\begin{split}
Q=\frac{1}{n}\sum_{i=1}^n&\left(y_i-\sum_{|u|=1}^{D_{\rm max}}\sum_{r=1}^{R_{\rm max}}\sum_{k=1}^{n_u(r)}\beta_u^{rk}\varphi_u^{rk}(x_{iu})\right)^2\\
&\quad+\lambda\sum_{|u|=1}^{D_{\rm max}}\sum_{r=1}^{R_{\rm max}}\sqrt{N_u(r)\sum_{v\subseteq u}\sum_{s\le r}\sum_{k=1}^{n_v(s)}(\beta_v^{sk})^2},
\end{split}
\end{gather}
where $D_{\rm max}$ and $R_{\rm max}$ respectively denote maximal orders of functional ANOVA and resolution level, and $N_u(r)=\sum_{v\subseteq u}\sum_{s\le r}n_v(s)$.
Notably, $D_{\rm max}\ll d$ and $R_{\rm max}\ll n$ to ensure computational feasibility in a large-scale, many-input setting.
Efficient, large-scale algorithms are available for coefficient estimation in the group lasso setting \citep{meier2008group,roth2008group}.
In particular, the algorithm described in \cite{meier2008group} is implemented in the \texttt{R} \citep{R2015} package \texttt{grplasso} \citep{packagegrplasso}.

Although the algorithm in \cite{meier2008group} is quite computationally efficient, storage requirements still have potential to cause computational infeasibility, particularly for a large-scale and many-input problem. 
We propose a modification of the algorithm where \emph{candidate} basis function evaluations are added sequentially along the lasso path, as necessary to ensure effects heredity,
rather than storing all the basis functions in advance. 
The modified algorithm is given in Appendix \ref{appnd:algorithm}.
The algorithm starts from a candidate set consisting only of main effect functions with resolution level one
and
an initial penalty $\lambda_{\max}$ set as suggested in \cite{meier2008group}.
Then, the penalty parameter is gradually decreased and the model is re-fit over steps.
If the active set changes in a particular step, the candidate set is enlarged to include \emph{child} basis function evaluations as required by effect heredity in functional ANOVA and resolution.
A small value of the penalty parameter increment $\Delta$ is required to ensure that
at most one new active group is included
in each update. 
The algorithm stops when some convergence criterion is met, or alternatively memory limits are approached.

The accuracy of the emulator can depend strongly on the tuning parameter $\lambda$. When overfitting is not a major concern, for example when constructing an emulator or {near interpolator} for a deterministic computer experiment, the smallest $\lambda$ (corresponding to the most complex model) with no evidence of numeric instability 
could be taken, which in turn would give near interpolation of outputs at input locations in the data used for fitting. 
On the other hand, if overfitting is a concern, a few sensible choices for tuning parameter selection include cross-validation or classical information criteria such as Akaike information criterion (AIC) and Bayesian information criterion (BIC). 
Under some conditions, BIC is consistent for the true model when the set of candidate models contains the true model,
while AIC will select a sequence of models which are
asymptotically equivalent to the
model whose average squared error is 
smallest 
among the candidate models. Generalized cross-validation (GCV) \citep{craven1978smoothing}, leave-one-out cross-validation and AIC have similar asymptotic behavior. 
Delete-$d$ cross-validation \citep{shao1997asymptotic} is asymptotically equivalent to the generalized information criterion (GIC) with parameter $\lambda_n=n/(n-d)+1$.
See \cite{shibata1984approximate}, \cite{li1987asymptotic} and \cite{shao1997asymptotic} for more details. 
The use of AIC and BIC for regularization parameter selection in penalized regression models has been discussed in recent literature (see \cite{wang2007tuning} and \cite{zhang2010regularization}). 
\cite{wang2007tuning} showed that BIC can consistently identify the true model for the smoothly clipped absolute deviation penalty \citep{fan2001variable}, whereas the models selected by AIC and GCV tend to overfit. For the group lasso framework, our numerical results indicate AIC has slightly better performance than BIC. On the other hand, if parallel computing environments are available, cross-validation can be computationally efficient and could be used for selecting the tuning parameter $\lambda$. 

{In addition to prediction, uncertainty quantification is essential in practice. 
In Sections \ref{inference} and \ref{CIs}, we develop new theoretical results for 
consistency and inference.
Further, an algorithm for constructing pointwise confidence intervals as a means to quantifying one's statistical uncertainty in the predicted values is provided in Appendix \ref{appnd:algorithmInf}.}


\section{Consistency of the MRFA Emulator}\label{inference}
In this section, we develop new consistency results for our estimator.
Notably, these results are general and relate to the MRFA emulator only in the sense that the MRFA model forms an application case of particular interest.
The results apply to the, possibly overlapping, group lasso problem in a large $n$, large $p$ setting, and are developed along the lines described in \cite{meinshausen2009lasso} and \cite{liu2009estimation}. 
Here, we make three major contributions.
First, we extend large $n$, large $p$ lasso consistency results to the overlapping group lasso problem. 
Second, we extend the results to the case where the true function is deterministic, as is the case for many computer experiments \citep{santner2013design}. 
Third, 
we show that the results hold for situations where the responses have random noise, in addition to the deterministic response situation.

Suppose for a particular input location ${x}$, the true value of the computer model is $y({x})$.    If we 
are modeling the responses as a linear combination of basis functions $\{\varphi(\cdot)\}=\{\varphi_{u}^{rk}(\cdot):k=1,\ldots,n_u(r), r=1,\ldots,R_{\rm max},|u|=1,\ldots,D_{\rm max}\}$, but do not
make additional assumptions about $y(x)$, then we may define the best model (in an $L_2(\Omega)$ sense) as 
\begin{equation}\label{betastardef}
\beta^* = \operatorname*{argmin}_\beta \underbrace{\int_{\Omega} (y(x) - \varphi (x)^T \beta)^2 \mathrm{d} x }_{\text{oracle risk}} .
\end{equation}
This represents the oracle's choice in coefficients, knowing the exact underlying model and the entire sequence of information.  
Note that $\{\varphi(\cdot)\}$ refers to the set of basis functions, while $\varphi(x)$ refers to the vector of basis function evaluations at $x$.
The vectors of  basis function evaluations $\varphi (x)$ and corresponding coefficients $\beta$ are of length $p$, which is assumed to grow as $n$ increases, though the dependence is notationally suppressed for clarity.  This represents the natural behavior of including more basis functions in larger computer experiments.  
We assume the coefficient vector is sparse in the sense that only relatively few coefficients will be useful in predicting the underlying function.

Throughout, we consider statistical modeling in the context where 
$x_1,\ldots,x_n,\ldots$ are a sequence of input locations 
whose corresponding sequence of empirical cumulative distribution functions converges to a uniform distribution. 
In this setting, the responses can be expressed in terms of the linear model as
\begin{equation} \label{eq:lin_model_det}
y_i= \varphi(x_i)^T \beta^* + B_i, 
\end{equation}
where $B_i$ is the resulting random bias term at $x_i$. 
In the context of the MRFA model, $\varphi (x_i)$ denotes the vector of {unique} basis function evaluations at $x_i$ (i.e. not duplicate basis function evaluations appearing in the overlapping group penalty), $\beta^* \in\mathbb{R}^p$ denotes the best possible basis function coefficients, and $B_i$ denotes the left-over. 
Since the responses are not corrupted by noise, we call  this the \textit{deterministic case}.

The \textit{stochastic case} is when the computer model does not produce the same output for repeated runs at a given input.  
Stochastic computer experiments commonly use random number generators to produce difficult to predict and control internal inputs, such as customer arrival times or weather.
In the stochastic case,the responses can be expressed as
\begin{equation} \label{eq:lin_model_stoch}
y_i= \varphi(x_i)^T \beta^* + B_i +\epsilon_i,
\end{equation}
where  $\epsilon_i$ represents the random noise on the $i$th observation. 
We assume that the $\epsilon_i$s are independent, identically distributed, sub-Gaussian random variables (see Definition \ref{defnSubGaussian}) with $\mathbb{E}(\epsilon_i)=0$ and $\mathbb{V}(\epsilon_i)=\sigma^2 > 0$ for $i=1,...,n$.


{Inference is considered in the $n\rightarrow\infty,p\rightarrow\infty,p\gg n$ setting for $n$ pairs $(\varphi(x_i),y_i)_{i=1}^n$, in which the large sample distribution of the inputs $x_i$'s converges to the uniform distribution. }
The following definitions are used. For two positive sequences $a_n$ and $b_n$, we write $a_n\asymp b_n$ if, for some $C,C'>0$, $C\leqslant a_n/b_n \leqslant C'$. Similarly, we write $a_n\lesssim b_n$ if $a_n\leqslant Cb_n$ for some constant $C>0$. 
We now present the following $l_2$ consistency result, whose proof follows the logic in \cite{meinshausen2009lasso}, which is valid in both the deterministic and stochastic situation.
\begin{theorem}\label{thm:L2consistency_det}
Suppose the estimated coefficients of the overlapping group lasso are $\hat{{\beta}}$ (see (\ref{groupLassoEstimator1})) with parameter $\lambda_n$, and the best coefficients are ${\beta}^{*}$, as defined in equation (\ref{betastardef}). Let $\varphi$ be the matrix with rows $\varphi(x_i)^T$, $i=1,...,n$, {and assume the large sample distribution of the inputs $x_i$ converges to the uniform distribution.} Under assumptions on the $m$-sparse eigenvalues (Definition \ref{msparseDef} and Assumption \ref{mSparse}) of matrix $\frac{1}{n}{\varphi}^T{\varphi}$, $\lambda_n\asymp \sqrt{\frac{\log p}{n}}$, $\bar d^2=o(\log n)$, and $\|y(\cdot) - \varphi (\cdot)^T \beta^*\|_\infty = O_p(\lambda_n)$, with probability tending to 1 for $n\rightarrow \infty$,
\begin{align}\label{thmconsisteq}
\|\hat{{\beta}}-{\beta}^{*}\|^2_2\lesssim \frac{ \bar c^2 s\bar d \log p}{n},
\end{align}
where $\bar c$, $\bar d$, and $s$ denote the largest number of groups that an element of $\varphi(x_i)$ appears in, the size of the largest group, and the number of non-zero elements in unique representation ${\beta}^{*}$, respectively. 
\end{theorem}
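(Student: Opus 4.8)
The plan is to follow the basic-inequality-plus-sparse-eigenvalue route of \cite{meinshausen2009lasso}, modified in three ways: to accommodate a group penalty in place of a coordinatewise one, to handle the \emph{overlapping} group structure of~(\ref{eq:lossfuncion}), and to carry the approximation-bias terms $B_i$ of~(\ref{eq:lin_model_det})--(\ref{eq:lin_model_stoch}) through the argument alongside (in the stochastic case) the sub-Gaussian noise $\epsilon_i$. Write $P(\cdot)$ for the overlapping group penalty in~(\ref{eq:lossfuncion}), $\{G_g\}$ for its groups (overlapping as index sets) with weights $w_g=\sqrt{N_u(r)}$, which equal the square roots of the group sizes and satisfy $w_g\le\sqrt{\bar d}$, let $\delta=\hat\beta-\beta^*$, and let $S$ be the set of groups on which $\beta^*$ is nonzero. \emph{Step 1 (basic inequality).} Since $\hat\beta$ minimizes $Q$ and $\beta^*$ is feasible, $Q(\hat\beta)\le Q(\beta^*)$; substituting $y_i=\varphi(x_i)^T\beta^*+B_i+\epsilon_i$ (with $\epsilon\equiv 0$ in the deterministic case) and rearranging yields
\[
\tfrac1n\|\varphi\delta\|_2^2 \;\le\; \tfrac2n\,(B+\epsilon)^T\varphi\,\delta \;+\; \lambda_n\bigl(P(\beta^*)-P(\hat\beta)\bigr).
\]

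\emph{Step 2 (linear term and cone condition).} A groupwise H\"older inequality bounds the first term on the right by $2\bigl(\max_g w_g^{-1}\|\tfrac1n(\varphi^T(B+\epsilon))_{G_g}\|_2\bigr)P(\delta)$. The bias part is controlled by $\|B\|_\infty=\|y(\cdot)-\varphi(\cdot)^T\beta^*\|_\infty=O_p(\lambda_n)$ times a deterministic design factor (furnished by the convergence of the empirical input distribution to uniform, which also reconciles the $L_2(\Omega)$ definition~(\ref{betastardef}) of $\beta^*$ with the empirical Gram matrix $\tfrac1n\varphi^T\varphi$); the noise part is controlled, in the stochastic case, by sub-Gaussian tail bounds and a union bound over the $\le p$ groups, each of size $\le\bar d$, the weights $w_g$ absorbing the group sizes so that $\max_g w_g^{-1}\|\tfrac1n(\varphi^T\epsilon)_{G_g}\|_2\lesssim\sqrt{\log p/n}$ with probability tending to one. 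With $\lambda_n\asymp\sqrt{\log p/n}$ and the constant chosen suitably, these combine to show that on an event of probability tending to one the linear term is at most $\tfrac{\lambda_n}2 P(\delta)$. Feeding this and the decomposability bound $P(\beta^*)-P(\hat\beta)\le P_S(\delta)-P_{S^c}(\delta)$ (triangle inequality, group by group) into Step 1 gives $\tfrac1n\|\varphi\delta\|_2^2+\tfrac{\lambda_n}2P_{S^c}(\delta)\le\tfrac{3\lambda_n}2P_S(\delta)$, hence both the cone condition $P_{S^c}(\delta)\le 3P_S(\delta)$ and $\tfrac1n\|\varphi\delta\|_2^2\le\tfrac{3\lambda_n}2P_S(\delta)$. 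The overlap enters in bounding $P_S(\delta)$: using $w_g\le\sqrt{\bar d}$, $|S|\le\bar c\,s$ (each of the $s$ nonzero coordinates lies in at most $\bar c$ groups), Cauchy--Schwarz over the active groups, and $\sum_{g\in S}\|\delta_{G_g}\|_2^2\le\bar c\,\|\delta\|_2^2$ (again each coordinate counted at most $\bar c$ times), one obtains $P_S(\delta)\lesssim\bar c\sqrt{\bar d\,s}\,\|\delta\|_2$.

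\emph{Step 3 (restricted eigenvalue and conclusion).} The cone condition confines $\delta$ to directions essentially supported on $O(s\bar c\bar d)$ coordinates, on which the $m$-sparse eigenvalue hypothesis (Definition~\ref{msparseDef}, Assumption~\ref{mSparse}) applies to $\tfrac1n\varphi^T\varphi$ --- valid for the \emph{empirical} Gram matrix because the empirical input distribution converges to uniform and, via $\bar d^2=o(\log n)$, the relevant sparsity level stays within the admissible range --- yielding $\tfrac1n\|\varphi\delta\|_2^2\ge\kappa\,\|\delta\|_2^2$ for some $\kappa>0$. Combining with Step 2, $\kappa\,\|\delta\|_2^2\le\tfrac{3\lambda_n}2P_S(\delta)\lesssim\lambda_n\bar c\sqrt{\bar d\,s}\,\|\delta\|_2$, so $\|\delta\|_2\lesssim\lambda_n\bar c\sqrt{\bar d\,s}$ and, squaring and using $\lambda_n^2\asymp\log p/n$, $\|\delta\|_2^2\lesssim\bar c^2 s\bar d\log p/n$, which is~(\ref{thmconsisteq}).

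\emph{Main obstacle.} Beyond transcribing \cite{meinshausen2009lasso}, the genuinely new work is the overlap bookkeeping: since a coordinate may be replicated in up to $\bar c$ penalty groups, $P$ is not a norm on a product space, and both factors of $\bar c$ in~(\ref{thmconsisteq}) must be tracked carefully (one from $|S|\le\bar c s$, one from $\sum_{g\in S}\|\delta_{G_g}\|_2^2\le\bar c\|\delta\|_2^2$), while one must recheck that the cone and sparse-eigenvalue steps still close with the replicated design. The second delicate point is the bias $B_i$, which is random, non-i.i.d., and controlled only in sup-norm: its contribution in Step 2 must be routed through $\|B\|_\infty=O_p(\lambda_n)$ and a deterministic design bound rather than through any independence or concentration, and verifying that $\lambda_n\asymp\sqrt{\log p/n}$ still dominates once this is folded in --- and that the same design regularity lets one pass between~(\ref{betastardef}) and $\tfrac1n\varphi^T\varphi$ in Step 3 --- is exactly where the assumption that the empirical input distribution converges to uniform earns its keep.
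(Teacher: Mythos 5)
Your route is genuinely different from the paper's. You run the now-standard basic-inequality argument: $Q(\hat\beta)\le Q(\beta^*)$, a groupwise H\"older bound on the cross term, the cone condition $P_{S^c}(\delta)\le 3P_S(\delta)$, and a restricted-eigenvalue lower bound, with the two factors of $\bar c$ correctly tracked through $|S|\le\bar c s$ and $\sum_{g\in S}\|\delta_{G_g}\|_2^2\le\bar c\|\delta\|_2^2$. The paper instead follows \cite{meinshausen2009lasso} closely: it introduces the de-noised problem $y(\xi)=\varphi\beta^*+\xi(\epsilon+B)$ and splits $\|\hat\beta^{\lambda_n}-\beta^*\|_2^2$ into a shrinkage-bias term $\|\hat\beta^{\lambda_n,0}-\beta^*\|_2^2$, bounded in Lemma \ref{boundGamma} by analyzing the noiseless penalized problem for $\gamma=\beta^{\lambda_n}-\beta^*$, plus a variance term $\|\hat\beta^{\lambda_n}-\hat\beta^{\lambda_n,0}\|_2^2$, bounded in Lemmas \ref{lemma3Han}--\ref{lemma6Han} via a homotopy in $\xi$, restricted least-squares estimators of $\epsilon+B$, and a bound on the number of active groups. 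Your one-shot argument is shorter and treats noise and bias symmetrically; the paper's homotopy avoids requiring the cross term to be dominated by $\lambda_n$ with a specific constant, which matters here because $\|B\|_\infty=O_p(\lambda_n)$ carries an uncontrolled constant, so your cone constant $3$ is not actually guaranteed. Note also that the estimator in (\ref{groupLassoEstimator1}) is the latent, replicated-coefficient formulation, so your H\"older and decomposability steps should be run on $\beta^Z$, where they are clean, rather than on the raw overlapping penalty of (\ref{eq:lossfuncion}), where $\sum_g a_{G_g}^T\delta_{G_g}\ne a^T\delta$.

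The concrete gap is in your Step 3. Assumption \ref{mSparse} controls eigenvalues only over exactly $m$-sparse vectors; it is not a restricted-eigenvalue condition on the cone, and the cone $P_{S^c}(\delta)\le 3P_S(\delta)$ contains fully dense vectors. The passage from sparse eigenvalues to $\tfrac1n\|\varphi\delta\|_2^2\ge\kappa\|\delta\|_2^2$ on the cone is exactly the hard part of the paper's Lemma \ref{boundGamma}: order the groups by $\|\delta_{G_g}\|_2$, let $U$ collect the $u_n=s_n\log n$ largest, write $\varphi\delta=\varphi\delta(U)+\varphi\delta(U^c)$, lower-bound the first piece via $\phi_{\min}(u_n\bar d)$, upper-bound the second via $\phi_{\max}$ times a tail that the cone condition forces to be of order $\sqrt{s_n\bar d/u_n}\,\|\delta\|_2$, and use $\bar d^2=o(\log n)$ to make the cross term negligible. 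Without this transfer argument your assertion that the $m$-sparse hypothesis ``applies'' to $\delta$ does not follow, and the proof does not close; with it, your route goes through and reproduces the rate in (\ref{thmconsisteq}).
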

\begin{remark}
{Note that the dimension $p$ here is allowed to increase with $n$, and consequently the number of basis functions $n_u(r)$ is also allowed to increase with $n$ since $p=\sum^{D_{\rm max}}_{|u|=1}\sum^{R_{\rm max}}_{r=1}n_u(r)$, allowing for an improving quality of approximation of $f_u$ as the sample size increases.} 
Potential dependency of $\varphi(\cdot)$, $\bar c$, $\bar d$, $s$, and $p$ on $n$ is suppressed for notational simplicity. 
{Additionally, the error variance $\sigma^2$ also influences the convergence in \eqref{thmconsisteq} but is not presented because it is treated as a constant.}  
\end{remark}





Theorem \ref{thm:L2consistency_det} demonstrates pointwise convergence of the coefficient estimates under some conditions. 
Essentially, consistent coefficient estimates are achieved if the dimension of the MRFA representation does not grow so quickly that $\log p$ is large compared to $n$. 
The $l_2$ consistency in Theorem \ref{thm:L2consistency_det} is specifically provided by two major conditions.  
The first is that the numerator of the right hand side does not grow too fast, $o(n)$.  
This in turn requires the size of groups, number of nonzero (best) coefficients, and number of groups that a variable appears in are relatively small compared with the sample size $n$.  Secondly, the bias of the model at the $i$th input $B_i$, needs to shrink quickly.  


The following corollary is an immediate consequence of Theorem \ref{thm:L2consistency_det}, and states that the oracle risk at the estimated coefficients $\hat \beta$ can be bounded in terms of the oracle risk at the best coefficients.
\begin{corollary}\label{Coro421}
Suppose the assumptions of Theorem \ref{thm:L2consistency_det} hold. The oracle risk at $\hat \beta$ can be bounded as
\begin{align}
\int_{\Omega} (y(x) - \varphi (x)^T \hat{\beta})^2 \mathrm{d} x \lesssim  \frac{ \bar c^2 s\bar d \log p}{n}.
\end{align}
\end{corollary}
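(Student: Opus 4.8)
The plan is to split the oracle risk at $\hat\beta$ into the bias of the oracle model plus a term that the $\ell_2$ bound of Theorem~\ref{thm:L2consistency_det} already controls. Writing, for each $x\in\Omega$,
\[
y(x)-\varphi(x)^T\hat\beta \;=\; \bigl(y(x)-\varphi(x)^T\beta^{*}\bigr) \;-\; \varphi(x)^T(\hat\beta-\beta^{*}),
\]
squaring, integrating over $\Omega$, and using $(a-b)^2\le 2a^2+2b^2$ gives
\[
\int_{\Omega}\!\bigl(y(x)-\varphi(x)^T\hat\beta\bigr)^2\mathrm dx
\;\le\; 2\!\int_{\Omega}\!\bigl(y(x)-\varphi(x)^T\beta^{*}\bigr)^2\mathrm dx
\;+\; 2\,(\hat\beta-\beta^{*})^T\Sigma(\hat\beta-\beta^{*}),
\]
where $\Sigma=\int_{\Omega}\varphi(x)\varphi(x)^T\mathrm dx$ is the population Gram matrix under the uniform measure on $\Omega=[0,1]^d$. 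It then suffices to bound each of the two terms by $\bar c^2 s\bar d\log p/n$, up to constants, on the probability-tending-to-one event of Theorem~\ref{thm:L2consistency_det}.

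For the first term, since $\Omega$ has unit volume, $\int_{\Omega}(y(x)-\varphi(x)^T\beta^{*})^2\mathrm dx\le \|y(\cdot)-\varphi(\cdot)^T\beta^{*}\|_\infty^2$, which is $O_p(\lambda_n^2)$ by hypothesis; with $\lambda_n\asymp\sqrt{\log p/n}$ this is $O_p(\log p/n)$, and hence $\lesssim \bar c^2 s\bar d\log p/n$ since $\bar c,\bar d,s\ge 1$.

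For the second term, I would first pass from the population Gram matrix $\Sigma$ to the empirical one $\tfrac1n\varphi^T\varphi$ on which the hypotheses are stated: because the empirical distribution of the inputs $x_i$ converges to the uniform distribution and the basis functions $\varphi_u^{rk}$ are bounded, $(\hat\beta-\beta^{*})^T\Sigma(\hat\beta-\beta^{*})\lesssim \tfrac1n\|\varphi(\hat\beta-\beta^{*})\|_2^2$. The right-hand side is the empirical prediction error, which is already controlled at rate $\bar c^2 s\bar d\log p/n$ inside the proof of Theorem~\ref{thm:L2consistency_det}; alternatively, one invokes the maximal $m$-sparse eigenvalue of $\tfrac1n\varphi^T\varphi$ (Definition~\ref{msparseDef}, Assumption~\ref{mSparse}), valid because the active set of $\hat\beta$ — and hence the support of $\hat\beta-\beta^{*}$ — is $m$-sparse for the relevant $m$ (also established in the proof of Theorem~\ref{thm:L2consistency_det}), to write $\tfrac1n\|\varphi(\hat\beta-\beta^{*})\|_2^2\lesssim \|\hat\beta-\beta^{*}\|_2^2$ and then apply the conclusion \eqref{thmconsisteq} of Theorem~\ref{thm:L2consistency_det} directly. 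Combining the two bounds yields the stated inequality.

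The main obstacle is the transfer from the empirical Gram matrix — on which the $m$-sparse eigenvalue assumption is imposed — to the population Gram matrix $\Sigma$ that appears in the oracle risk: this must be done uniformly over the growing-dimensional but sparse set of directions that $\hat\beta-\beta^{*}$ can take, exploiting convergence of the input empirical measure to uniform and boundedness of the basis functions, while verifying that the approximation error does not inflate the rate. Everything else — the triangle inequality, the $O_p(\lambda_n)$ bias hypothesis, and the appeal to Theorem~\ref{thm:L2consistency_det} — is routine.
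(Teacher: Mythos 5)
Your proposal is correct and follows essentially the same route as the paper: the paper likewise reduces the oracle risk at $\hat\beta$ to the oracle risk at $\beta^*$ (bounded by $O_p(\lambda_n^2)\lesssim \log p/n$ via the sup-norm hypothesis) plus the quadratic form $(\hat\beta-\beta^*)^T\left(\int_\Omega\varphi(x)\varphi(x)^T\,\mathrm{d}x\right)(\hat\beta-\beta^*)$, which it bounds by $C\|\hat\beta-\beta^*\|_2^2$ via Assumption \ref{mSparse} and then by the conclusion of Theorem \ref{thm:L2consistency_det}. The only cosmetic differences are that the paper uses the exact orthogonality $\int_\Omega\varphi(x)(y(x)-\varphi(x)^T\beta^*)\,\mathrm{d}x=0$ from the definition of $\beta^*$ to obtain a Pythagorean identity with no factor of $2$ in place of your $(a-b)^2\leqslant 2a^2+2b^2$ step, and it invokes the eigenvalue bound on the population Gram matrix directly rather than spelling out the empirical-to-population transfer that you (rightly) flag as the step needing care.
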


\begin{remark}
{A related upper bound on the oracle risk is derived by \cite{juditsky2000functional}, in which the functional aggregation problem is considered, where the best combination of basis functions with coefficients in a convex compact subset of the $l_1$-ball is considered as the optimality target.
In our problem, we consider a larger class of functions when defining optimality,
which allows us to obtain a faster convergence rate.} 
\end{remark}









\section{Confidence intervals}\label{CIs}
This section develops and discusses theory  for the large sample distribution of a decorrelated score statistic \citep{ning2014general} that can be used to form confidence intervals for the stochastic case in (\ref{eq:lin_model_stoch}).  
A modification of this technique leveraging Apley's coverage correction \citep{apley2017spuq} is proposed for the deterministic case (\ref{eq:lin_model_det}), and has good coverage and interval width in our numeric examples.
Confidence intervals in the stochastic case are considerably easier.  The authors are not able to confirm similar results for the deterministic case. The end of this section will explain a modification that yielded good behavior in the deterministic examples we studied.



A pointwise confidence interval under the stochastic case (\ref{eq:lin_model_stoch}) is constructed by inverting a one-dimensional hypothesis test of $H_0: y^*({x}) =\delta$, as provided in Theorem \ref{thm:MainThmLinear}, after the model has been reparametrized so that $y^*({x})$ equals a particular coefficient in the model.
The one-dimensional hypothesis test uses a decorrelated score function, that converges weakly to standard normal, following \cite{ning2014general}.
Details are provided below and in Appendix \ref{ProofThmmainLinear}.



Without loss of generality, suppose the parameter of interest is $\beta_1 \in\mathbb{R} $,  and the remaining coefficients are nuisance parameters $\beta_{-1} = (\beta_2,\ldots,\beta_p)^T\in\mathbb{R}^{p-1}$. 
Then the linear model \eqref{eq:lin_model_stoch} can be written as $y_i=\beta_1 \varphi_{i1} +\beta_{-1}^T\varphi_{i,-1}+B_i + \epsilon_i$, where $\varphi_{i,-1}=(\varphi_{i2},\ldots,\varphi_{ip})^T$. Following \cite{ning2014general}, define a \textit{decorrelated} score function
\begin{align*}
S(\beta_{1},\beta_{-1})=-\frac{1}{n\sigma^2}\sum_{i=1}^n(y_i-\beta_1 \varphi_{i1} - \beta_{-1}^T\varphi_{i,-1})(\varphi_{i1}-{w}^T\varphi_{i,-1}),
\end{align*}
where ${w}=\mathbb{E} (\varphi_{i,-1}\varphi_{i,-1}^T)^{-1}\mathbb{E} (\varphi_{i,-1}\varphi_{i1})$. The score function for the target parameter has been decorrelated with the nuisance parameter score function. Here, the full parameter vector ${\beta}$, consisting of target and nuisance parameters $\beta_{1}$ and ${\beta_{-1}}$, can be estimated via the original overlapping group lasso problem, so that $\hat\beta=(\hat \beta_{1},\hat \beta_{-1}^T)^T$.
On the other hand, ${w}$ can be estimated via
\begin{equation}\label{eq:estimatew}
\hat{{w}} =\arg\min\|{w}\|_1, \mbox{ s.t. } \bigg\|\frac{1}{n}\sum_{i=1}^n \varphi_{i,-1}(\varphi_{i1}-{w}^T\varphi_{i,-1})\bigg\|_2\leqslant \lambda'
\end{equation}
and the error variance $\sigma^2$ can be estimated by a consistent estimator $\hat{\sigma}^2$.
Note that $\lambda'$ is another tuning parameter. 
The minimization is on the $l_1$ norm of $w$, since we want to ensure sparsity of $\hat w$.
Let $\beta_{1}^*$ and $\beta_{-1}^*$ denote the values of $\beta_{1}$ and $\beta_{-1}$ which minimize the oracle risk defined in (\ref{betastardef}). 
The following (one-dimensional) inference result can be obtained. A proof is provided in Appendix \ref{ProofThmmainLinear}.
\begin{theorem}\label{thm:MainThmLinear}
Under $H_0:\beta_{1}^*=\beta_{1,0}$, $\lambda'\asymp\sqrt{\frac{\log p}{n}}$, $\sigma^2>0$, and the assumptions of Theorem \ref{MainThmLinear},
\begin{align*}
\sqrt{n}\hat{S}_{\hat{\sigma}^2}(\beta_{1,0},\hat\beta_{-1})\hat{I}_{\beta_{1}|\beta_{-1}}^{-1/2}\stackrel{\rm dist.}{\longrightarrow} \mathcal{N}(0,1),
\end{align*}
where $\hat{I}_{\beta_{1}|{\beta_{-1}}}=\frac{1}{n\hat{\sigma}^2}\sum^n_{i=1}\varphi_{i1}(\varphi_{i1}-\hat{{w}}^T\varphi_{i,-1})$, and $\hat{S}_{\hat{\sigma}^2}(\beta_{1},\beta_{-1})=-\frac{1}{n\hat{\sigma}^2}\sum_{i=1}^n(y_i-\beta_{1} \varphi_{i1}-\beta_{-1}^T\varphi_{i,-1})(\varphi_{i1}-\hat{w}^T\varphi_{i,-1}).$
\end{theorem}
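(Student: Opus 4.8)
The plan is to adapt the decorrelated-score argument of \cite{ning2014general} to our setting, where two features depart from the standard i.i.d.\ sparse-regression analysis: the nuisance vector $\beta_{-1}$ is estimated by the overlapping group lasso rather than an ordinary lasso, and the responses carry a deterministic approximation bias $B_i$ in addition to the noise $\epsilon_i$. First I would substitute the model \eqref{eq:lin_model_stoch} under $H_0$, namely $y_i=\beta_{1,0}\varphi_{i1}+\beta_{-1}^{*T}\varphi_{i,-1}+B_i+\epsilon_i$, into the score at $(\beta_{1,0},\hat\beta_{-1})$ (with the true $\sigma^2$), obtaining the decomposition
\begin{align*}
-\frac{1}{n\sigma^2}\sum_{i=1}^n\bigl(y_i-\beta_{1,0}\varphi_{i1}-\hat\beta_{-1}^T\varphi_{i,-1}\bigr)\bigl(\varphi_{i1}-\hat{w}^T\varphi_{i,-1}\bigr)=T_0+T_1+T_2+T_3,
\end{align*}
where $T_0=-\tfrac{1}{n\sigma^2}\sum_i\epsilon_i(\varphi_{i1}-w^T\varphi_{i,-1})$ is the ideal score, $T_1=-\tfrac{1}{n\sigma^2}\sum_i\epsilon_i(w-\hat{w})^T\varphi_{i,-1}$ is the error from estimating $w$, $T_2=-\tfrac{1}{n\sigma^2}(\beta_{-1}^*-\hat\beta_{-1})^T\sum_i\varphi_{i,-1}(\varphi_{i1}-\hat{w}^T\varphi_{i,-1})$ is the nuisance-estimation contribution, and $T_3=-\tfrac{1}{n\sigma^2}\sum_i B_i(\varphi_{i1}-\hat{w}^T\varphi_{i,-1})$ is the approximation-bias contribution. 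The goal is $\sqrt n\,T_0\,I_{\beta_1|\beta_{-1}}^{-1/2}\stackrel{d}{\to}\mathcal{N}(0,1)$ together with $\sqrt n\,T_j=o_p(1)$ for $j=1,2,3$, after which Slutsky---using $\hat\sigma^2\xrightarrow{p}\sigma^2$, $\hat{I}_{\beta_1|\beta_{-1}}\xrightarrow{p}I_{\beta_1|\beta_{-1}}$, and $\hat{S}_{\hat\sigma^2}=(\sigma^2/\hat\sigma^2)\hat{S}_{\sigma^2}$---yields the claimed limit for the feasible statistic.

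For the leading term, the definition ${w}=\mathbb E(\varphi_{i,-1}\varphi_{i,-1}^T)^{-1}\mathbb E(\varphi_{i,-1}\varphi_{i1})$ gives $\mathbb E[\varphi_{i,-1}(\varphi_{i1}-w^T\varphi_{i,-1})]=0$, so each summand $\epsilon_i(\varphi_{i1}-w^T\varphi_{i,-1})$ has mean zero and variance $\sigma^2\mathbb E[(\varphi_{i1}-w^T\varphi_{i,-1})^2]=\sigma^2\mathbb E[\varphi_{i1}(\varphi_{i1}-w^T\varphi_{i,-1})]=\sigma^4 I_{\beta_1|\beta_{-1}}$, where $I_{\beta_1|\beta_{-1}}$ is the population limit of $\hat{I}_{\beta_1|\beta_{-1}}$. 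The summands of $\sqrt n\,T_0$ form a mean-zero triangular array of independent products of a sub-Gaussian $\epsilon_i$ with a factor controlled by the uniform boundedness of the basis functions and $\|w\|_1$; a Lindeberg--Feller central limit theorem then gives $\sqrt n\,T_0\stackrel{d}{\to}\mathcal{N}(0,I_{\beta_1|\beta_{-1}})$, and hence the normalized statement once $I_{\beta_1|\beta_{-1}}>0$ is confirmed from $\sigma^2>0$ and the $m$-sparse-eigenvalue conditions (Definition \ref{msparseDef}, Assumption \ref{mSparse}).

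The remainders are handled by pairing an $\ell_1$-rate with an $\ell_\infty$-bound. For $T_1$, $|T_1|\le\sigma^{-2}\|\hat{w}-w\|_1\,\bigl\|\tfrac1n\sum_i\epsilon_i\varphi_{i,-1}\bigr\|_\infty$, where the sub-Gaussian maximal inequality gives the second factor as $O_p(\sqrt{\log p/n})$ and a Dantzig-type argument gives the first as $O_p(s_w\lambda')$, with $s_w$ the number of nonzeros of $w$: $w$ is feasible in \eqref{eq:estimatew} with high probability, and a restricted-eigenvalue bound on $\tfrac1n\varphi_{-1}^T\varphi_{-1}$ (implied by the $m$-sparse-eigenvalue conditions) converts the constraint into the $\ell_1$-rate, so $\sqrt n\,T_1=O_p(s_w\log p/\sqrt n)=o_p(1)$ under the sparsity scaling. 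For $T_2$, the constraint in \eqref{eq:estimatew} forces $\bigl\|\tfrac1n\sum_i\varphi_{i,-1}(\varphi_{i1}-\hat{w}^T\varphi_{i,-1})\bigr\|_\infty\le\lambda'$, while $\|\hat\beta_{-1}-\beta_{-1}^*\|_1\lesssim\sqrt s\,\|\hat\beta-\beta^*\|_2$ follows by a cone argument from the $\ell_1$ bound established en route to Theorem \ref{thm:L2consistency_det}; multiplying and invoking that theorem gives $\sqrt n\,T_2=o_p(1)$ under the same scaling. Consistency $\hat{I}_{\beta_1|\beta_{-1}}\xrightarrow{p}I_{\beta_1|\beta_{-1}}$ follows from a law of large numbers for $\tfrac1n\sum_i\varphi_{i1}(\varphi_{i1}-w^T\varphi_{i,-1})$ together with $\hat{w}\to w$ and $\hat\sigma^2\xrightarrow{p}\sigma^2$.

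The main obstacle is the bias term $T_3$. Unlike in \cite{ning2014general}, $B_i$ is a deterministic approximation error whose $\ell_\infty$ size is only $O_p(\lambda_n)=O_p(\sqrt{\log p/n})$, which is not small enough to survive multiplication by $\sqrt n$ if bounded crudely. The key is the first-order optimality of $\beta^*$ in \eqref{betastardef}: the oracle residual $y(\cdot)-\varphi(\cdot)^T\beta^*$ is $L_2(\Omega)$-orthogonal to every basis function, hence to $\varphi_1-w^T\varphi_{-1}$, so $\int_\Omega B(x)\bigl(\varphi_1(x)-w^T\varphi_{-1}(x)\bigr)\,\mathrm{d}x=0$ exactly. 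Thus $T_3$ splits into (i) the discrepancy between this vanishing integral and its average over the design points $x_1,\dots,x_n$, and (ii) the error from replacing $w$ by $\hat w$, bounded exactly as in $T_2$; controlling (i) at rate $o_p(n^{-1/2})$ is where the additional regularity of the input sequence and of the basis carried in the assumptions of Theorem \ref{MainThmLinear}, together with $\|B\|_\infty=O_p(\lambda_n)$, are used. The second most delicate point is the $\ell_1$-rate for $\hat w$ from \eqref{eq:estimatew}, which requires checking that the sparse-eigenvalue conditions imposed for Theorem \ref{thm:L2consistency_det} also control the submatrix $\tfrac1n\varphi_{-1}^T\varphi_{-1}$ relevant to the Dantzig program.
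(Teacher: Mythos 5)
Your overall route is the same as the paper's: both follow the decorrelated-score program of \cite{ning2014general}, splitting the score at $(\beta_{1,0},\hat\beta_{-1})$ into an ideal term that is asymptotically normal plus remainders driven by $\|\hat w-w\|$ and $\|\hat\beta_{-1}-\beta_{-1}^*\|$. The paper packages your $T_1+T_2$ bound as a general result (Theorem \ref{GeneralThm}, a version of Ning--Liu's Theorem 3.5) under high-level assumptions that are then verified for the linear model; one technical difference is that it pairs the $\ell_2$ norm of $\hat\beta_{-1}-\beta_{-1}^*$ with an $\ell_2$ stability bound, whereas you pair an $\ell_1$ cone bound with the $\ell_\infty$ constraint of the Dantzig program --- both workable. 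Where you genuinely diverge is the bias term $T_3$: the paper does not invoke the $L_2(\Omega)$-orthogonality of the oracle residual at all; it simply folds $B_i$ into the same bounds as the noise via the assumption $B_i\lesssim\sqrt{\log p/n}$ (so that $\|\varphi^T B\|_\infty\lesssim\sqrt{n\log p}$, cf.\ the remark after Lemma \ref{supNormalBound} and the proof of Lemma \ref{lemma3Han}). Your observation that a crude $\ell_\infty$ bound on $B$ does not survive multiplication by $\sqrt n$ is a legitimate criticism of that shortcut, and your fix via the first-order condition of \eqref{betastardef} is more principled than what the paper offers --- but you leave the decisive step, the $o_p(n^{-1/2})$ control of the gap between $\int_\Omega B(x)(\varphi_1(x)-w^T\varphi_{-1}(x))\,\mathrm{d}x=0$ and its design average, as an unproved appeal to ``additional regularity,'' so as written neither your argument nor the paper's fully closes this term.

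The second place where you are materially lighter than the paper is the passage from the known-variance statistic to the feasible one. In the paper's organization, this substitution \emph{is} the displayed proof of Theorem \ref{thm:MainThmLinear}: it establishes an explicit rate for $|\hat\sigma^2-\sigma^{*2}|$ for the plug-in residual-variance estimator, via a decomposition into $\frac1n\sum\epsilon_i^2-\sigma^{*2}$, the quadratic form $\hat\Delta^T H_\varphi\hat\Delta$, cross terms with $\epsilon_i$ and with $B_i$, and the $\frac1n\sum B_i^2$ term, each controlled by the consistency lemmas, and then a truncation argument on $|\hat U_n|\,|1-\sigma^*/\hat\sigma|$. Your one-line Slutsky step is valid if consistency of $\hat\sigma^2$ is taken as a hypothesis (as the theorem statement's wording permits), but if the estimator is the residual-based one used in practice, the consistency itself must be proved in the presence of the approximation bias, which is exactly the content you are eliding.
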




The solution to optimization problem \eqref{eq:estimatew} can 
also be represented as
\begin{equation}\label{eq:estimatew2}
\hat{{w}} =\arg\min\bigg\|\frac{1}{n}\sum_{i=1}^n \varphi_{i,-1}(\varphi_{i1}-{w}^T\varphi_{i,-1})\bigg\|^2_2 + \lambda''\|{w}\|_1,
\end{equation}
where $\lambda''$ is a transformed tuning parameter. 
Notice that this is 
a lasso problem where the $j$-th response, $j=1,\ldots,p-1$, is $(\frac{1}{n}\sum_{i=1}^n \varphi_{i,-1}\varphi_{i1})_j$ and the covariance matrix is $\frac{1}{n}\sum_{i=1}^n \varphi_{i,-1}\varphi_{i,-1}^T$. The tuning parameter $\lambda''$ can be selected via cross-validation, aiming for a minimal sum of squared errors, or simply fixed.
Theorem \ref{thm:MainThmLinear} requires all the assumptions of Theorem \ref{thm:L2consistency_det}. 
In addition, 
it is required that the smallest eigenvalue of $\mathbb{E}(\varphi_{i,-1}\varphi_{i,-1}^T)$ 
is bounded away from zero, 
the number of nonzero elements in ${w}=\mathbb{E} (\varphi_{i,-1}\varphi_{i,-1}^T)^{-1}\mathbb{E} (\varphi_{i,-1}\varphi_{i1})$ is  small compared to $n$, and
the tail probabilities of residuals and basis function evaluations are small in the sense that they are sub-Gaussian. 

Note that Theorem \ref{thm:MainThmLinear} is not directly applicable to the deterministic case, since Theorem \ref{thm:MainThmLinear} requires that the error variance is non-zero. 
In the deterministic case, the $\sqrt{\log p/n}$ bias decay dominates the large sample behavior.
For more detail, see Appendix \ref{ProofThmmainLinear}. 
We propose using a small constant instead of $\hat \sigma^2$ for the deterministic case, which provides a conservative confidence interval. 



Now, we re-express the linear model \eqref{eq:lin_model_stoch} to obtain pointwise confidence interval on predictions.
Let ${\varphi}^*$ denote the basis function evaluations at a particular predictive location $x^*$, and $y^*$ denote the predictive output, $y^*={\beta}^T{\varphi}^*$. 
By extending ${\varphi}^*$ to a basis of $\mathbb{R}^p$, $A = ({\varphi}^*,c_2,\ldots,c_{p})$, the linear model \eqref{eq:lin_model_stoch} can be written as $y_i=\eta_1\tilde\varphi_{i1}+{\eta}_{-1}^T \tilde\varphi_{i,-1}+B_i + \epsilon_i$, where $(\tilde\varphi_{i1},\tilde\varphi_{i,-1}^T)^T=A^{-1}{\varphi}_i$ and $(\eta_1,{\eta}_{-1}^T)^T = A^T\beta$. Thus, the hypothesis test $H_0:y^*=\eta_{10}$ is equivalent to $H_0:\eta_1=\eta_{10}$, and a $(1-\alpha)\times 100\%$ confidence interval on $y^*$ can be constructed by inverting the hypothesis test, as stated in the following corollary. 
An algorithm for confidence interval construction is provided in Appendix \ref{appnd:algorithmInf}.
In the algorithm, a simple construction for the matrix $A$ is to take $c_i$ as a unit vector with $i$th element equaling one. 
{Note that after the transformation with this choice of $A$, the assumptions of Theorem \ref{thm:MainThmLinear} still hold.} 
Then, the inverse of $A$ can be computed efficiently  via partitioned matrix inverse results \citep{harville1997matrix}.

\begin{corollary}\label{cor:CI}
Under the assumptions of Theorem \ref{thm:MainThmLinear},
a $(1-\alpha)\times 100\%$ confidence interval on $y^*$ can be constructed as 
\[
\left\{y^*| \Phi^{-1}\left(\frac{\alpha}{2}\right)\leqslant \sqrt{n}\hat S_{\hat{\sigma}^2}(y^*,\hat{\eta}_{(-1)})\hat {I}_{y^*|{\eta}_{(-1)}}^{-1/2} \leqslant \Phi^{-1}\left(1-\frac{\alpha}{2}\right)\right\},
\]
where $\hat{I}_{y^*|{\eta}_{(-1)}}=\frac{1}{n\hat{\sigma}^2}\sum^n_{i=1}\tilde\varphi_{i1}(\tilde\varphi_{i1}-\hat{{w}}^T\tilde\varphi_{i,-1})$, $\hat S_{\hat{\sigma}^2}(y^*,\eta_{(-1)})=\frac{1}{n\hat{\sigma}^2}\sum_{i=1}^n(y_i-y^* \tilde\varphi_{i1}-\eta_{(-1)}^T\tilde\varphi_{i,-1})(\tilde\varphi_{i1} - \hat{w}^T\tilde\varphi_{i,-1})$, $\Phi$ is the cumulative distribution function of the standard normal distribution, and $\hat{\eta}_{-1}$ is an estimator of $\eta_{-1}$, which can be obtained by plugging in the estimator of $\beta$.
\end{corollary}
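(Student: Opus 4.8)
The plan is to obtain Corollary \ref{cor:CI} from Theorem \ref{thm:MainThmLinear} by showing that the change of basis $A=({\varphi}^*,c_2,\ldots,c_{p})$ turns the test of $H_0:y^*=\eta_{10}$ into a one-dimensional test on a single coordinate of a reparametrized linear model, to which Theorem \ref{thm:MainThmLinear} applies verbatim, and then inverting that test. Concretely, with $(\tilde\varphi_{i1},\tilde\varphi_{i,-1}^T)^T=A^{-1}{\varphi}_i$ and $(\eta_1,{\eta}_{-1}^T)^T=A^T\beta$, the identity $\eta^T\tilde\varphi_i=\beta^TAA^{-1}\varphi_i=\beta^T\varphi_i$ shows that \eqref{eq:lin_model_stoch} becomes $y_i=\eta_1\tilde\varphi_{i1}+{\eta}_{-1}^T\tilde\varphi_{i,-1}+B_i+\epsilon_i$ with the same bias and noise terms, and the first new coefficient is $\eta_1=({\varphi}^*)^T\beta=y^*$, so that $\eta_1^*=({\varphi}^*)^T\beta^*$ is precisely the oracle predictive value. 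Hence $H_0:y^*=\eta_{10}$ coincides with $H_0:\eta_1^*=\eta_{10}$ in the reparametrized model, as already observed in the text preceding the corollary.

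Next I would verify that the hypotheses of Theorem \ref{thm:MainThmLinear} are inherited by the reparametrized model for the specific $A$ described in the algorithm (the identity with its first column replaced by ${\varphi}^*$, the $c_i$ being standard unit vectors), whose inverse is available in closed form by the partitioned-inverse formula and which has condition number bounded uniformly in $n$. The $m$-sparse eigenvalue conditions on $\frac1n\varphi^T\varphi$ transform into $m'$-sparse eigenvalue conditions on the Gram matrix of $\tilde\varphi$ with $m'=m+O(1)$, since $A^{-1}$ alters only one coordinate; the smallest eigenvalue of $\mathbb{E}(\tilde\varphi_{i,-1}\tilde\varphi_{i,-1}^T)$ stays bounded away from zero for the same reason; sub-Gaussianity of the $\tilde\varphi_i$ and of the residuals is preserved since they are bounded linear images of sub-Gaussian vectors; and the oracle coefficient vector ${\eta}_{-1}^*$ together with the decorrelation vector ${w}$ stay sparse up to an additive $O(1)$. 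Consistency of $\hat{\eta}_{-1}=(A^T\hat\beta)_{-1}$ for ${\eta}_{-1}^*$ follows from Theorem \ref{thm:L2consistency_det} via $\|\hat\eta_{-1}-\eta_{-1}^*\|_2\leqslant\|A\|_{\mathrm{op}}\|\hat\beta-\beta^*\|_2$.

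Applying Theorem \ref{thm:MainThmLinear} to the coordinate $\eta_1$ of the reparametrized model then gives, under $H_0:y^*=\eta_{10}$,
\[
\sqrt{n}\,\hat S_{\hat\sigma^2}(\eta_{10},\hat\eta_{-1})\,\hat I_{y^*|{\eta}_{-1}}^{-1/2}\ \stackrel{\rm dist.}{\longrightarrow}\ \mathcal{N}(0,1),
\]
with $\hat I_{y^*|{\eta}_{-1}}$ and $\hat S_{\hat\sigma^2}$ as in the statement. The asserted confidence interval is obtained from the standard duality between level-$\alpha$ tests and $(1-\alpha)$ confidence sets: the collection of values $\eta_{10}$ not rejected is exactly the displayed set. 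Finally, because $\hat S_{\hat\sigma^2}(\eta_{10},\hat\eta_{-1})$ is affine in $\eta_{10}$ with slope $-\hat I_{y^*|{\eta}_{-1}}$, which is bounded away from zero with probability tending to one (again by the sparse-eigenvalue assumption), this set is a genuine interval, completing the argument.

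The step I expect to be the main obstacle is the second one: confirming that none of the regularity conditions of Theorem \ref{thm:MainThmLinear} — in particular the $m$-sparse eigenvalue assumption on the design and the sparsity of the decorrelation vector ${w}$ — deteriorate under the change of basis. This is precisely why the corollary and the accompanying algorithm commit to the well-conditioned choice $A=({\varphi}^*,c_2,\ldots,c_p)$ with $c_i$ unit vectors; a generic basis extension could simultaneously destroy sparsity of ${\eta}_{-1}^*$ and ${w}$ and collapse the eigenvalue bounds, invalidating the appeal to Theorem \ref{thm:MainThmLinear}.
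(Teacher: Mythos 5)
Your proposal is correct and follows essentially the same route as the paper: reparametrize via $A=(\varphi^*,c_2,\ldots,c_p)$ so that $\eta_1=(\varphi^*)^T\beta=y^*$, invoke Theorem \ref{thm:MainThmLinear} for the one-dimensional test of $\eta_1$, and invert the test to get the stated confidence set. The only difference is one of thoroughness: the paper merely asserts in passing that ``after the transformation with this choice of $A$, the assumptions of Theorem \ref{thm:MainThmLinear} still hold,'' whereas you sketch why the sparse-eigenvalue, sparsity, and sub-Gaussianity conditions survive the change of basis --- which is exactly the step the paper leaves implicit.
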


Optimization problems \eqref{eq:estimatew} and equivalently \eqref{eq:estimatew2} can be very computationally challenging when $n$ is large. 
In particular, for $R_{\rm max}=10$ and $D_{\rm max}=10$ (as used in the examples later), $p$ is nearly $10^7$, making storage of the $\varphi_{i,-1}:(p-1)\times 1$, $i=1,\ldots,n$ infeasible without specialized computational resources.
In Appendix \ref{appnd:algorithmInfalternative}, we provide a large $n$ modification to the confidence interval algorithm in Appendix \ref{appnd:algorithmInf}.
In the modification, only those nuisance basis function evaluations which have been included for consideration up to the selected stage of the group lasso problem are considered in $\varphi_{i,-1}$, reducing the size of $\varphi_{i,-1}$ by several orders of magnitude.
Given the reduced $\varphi_{i,-1}$, 
we propose to estimate $w$ via a ridge regression, since
sparsity of $w$ relative to the sample size $n$ is ensured by default for this reduced dimensional nuisance parameter set.
While the intervals are computationally feasible in a large scale, many-input setting, their coverage is somewhat liberal.
For the deterministic case (\ref{eq:lin_model_det}),
we can apply a post-hoc correction, as proposed by \cite{apley2017spuq}.
The idea is to regard $\sigma^2$ as a tuning parameter and then apply a cross-validation method to the confidence intervals constructed by Corollary \ref{cor:CI} to find the $\sigma^2$ which most closely achieves the nominal coverage $(1-\alpha)\times 100\%$. 

An illustration of these pointwise confidence intervals is shown in Figure \ref{fig:CI}. 
In the example, the true function is $f(x)=\exp(-1.4x)\cos(3.5\pi x)$, shown as a black dotted line, and we attempt to build an emulator using 14 evenly spaced data points {between 0 and 1}, shown as black dots. Consider a very simple MRFA model, with {{three levels of resolution and}} Wendland's kernel candidate basis functions with $k=2$, {shown as light gray in Figure \ref{fig:CI}}. 
{The left panel considers a stochastic case}, where the output values are sampled from $y=f(x)+\epsilon$, and the $\epsilon$ are independent, identically normally distributed with mean zero and standard deviation $\sigma=0.3$. 
The MRFA emulator for (penalized regression) tuning parameter $\lambda=0.647$, which is chosen via cross-validation, is shown as the solid blue line, and the 95\% confidence intervals are shown as the gray shaded region in Figure \ref{fig:CI}, with the consistent estimate of $\sigma^2$, $\hat{\sigma}^2=\frac{1}{n-s}\sum^n_{i=1}(y_i-\hat{\beta}^T{{\varphi}}(x_i))^2$, and the tuning parameter $\lambda''$ chosen via cross-validation at each untried input site of interest. 
{Although the MRFA emulator deviates from the true mean function, the confidence intervals are able to quantify the deviation and contain the true mean values.}
Given a set of testing samples of size 500, 95.4\% of the true mean function values are contained by the confidence interval, which achieves close to the nominal coverage 95\%. 
The right panel considers a deterministic case (i.e., without the noise $\epsilon$). 
The MRFA emulator for a small tuning parameter $\lambda=0.001$ is shown as the solid blue line and the 95\% confidence intervals are shown as the gray shaded region, with the post-hoc correction for the estimate of $\sigma^2$ proposed by \cite{apley2017spuq}.
With the post-hoc correction, the deterministic case 
confidence intervals achieve good performance using the same techniques developed in this section. 
The MRFA emulator almost interpolates every data point, and, importantly, the confidence intervals are able to quantify the model bias and contain the true values. 
Given a set of testing samples of size 500, 95.8\% of true values are contained by the confidence intervals, which achieves close to the nominal coverage 95\%.

\begin{figure}[h]
\centering
\includegraphics[width=\textwidth]{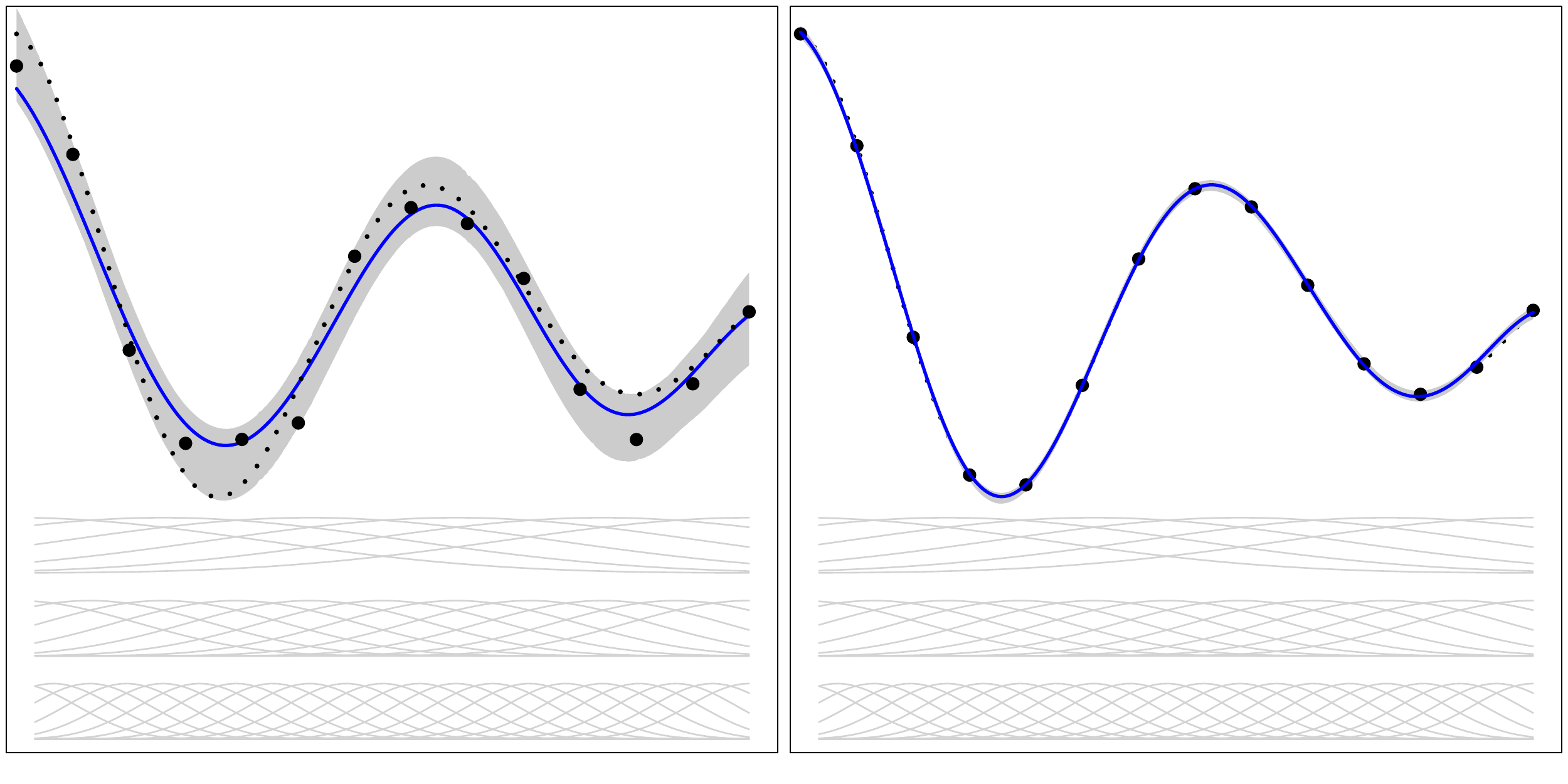}
\caption{Illustration of confidence intervals for stochastic (left) and deterministic (right) cases. Black dotted line represents the true function, black dots represent the collected data, and the MRFA emulator is represented as the blue lines, whose candidate basis functions are shown in solid light gray, with the gray shaded region providing a pointwise 95\% confidence band.}
\label{fig:CI}
\end{figure}



\section{Basis function selection}\label{basisFunctions}


Basis functions of a given input dimension should be selected so that they are capable of approximating a broad spectrum of practically encountered target functions, with flexibility increasing as the level of resolution increases.

For a particular dimensionality of component function $m=|u|$,
a reasonable building block for a set of basis functions is a positive definite function. The function $\phi:\mathbb{R}^{m}\to\mathbb{R}$ is positive definite if $\sum_{i,j}\alpha_i\alpha_j\phi(x_i-x_j)\ge 0$ for any $\alpha_i\in\mathbb{R}$, $x_i\in\mathbb{R}^{m}$ and strictly positive for distinct $x_i$ if at least one $\alpha_i$ is non-zero.
These could be constructed by integrating the full-dimensional kernel over margins as indicated in Theorem \ref{thm:basis}.
More simply, the kernels could be selected to ensure a desired smoothness of the target component functions.
Common example kernels include the Mat$\acute{\rm e}$rn and squared exponential correlation functions. We use Wendland's kernels \citep{wendland1995piecewise} in the examples presented here.
Notably, Wendland's kernels are compactly supported, potentially enabling construction of a sparse design matrix, which can in turn provide computational and numeric advantages. 
Wendland's kernels are based on evaluating inter-point distances in positive polynomials truncated to $[0,1]$ and otherwise zero.
The parameter $k$ determines the smoothness at zero ($2k$ continuous derivatives). 
The polynomial terms of Wendland's kernels are computed recursively based on the parameter $k$ and the dimension of input $m$. 

The \emph{center} and \emph{scale} of these basis functions, or \emph{kernels}, can be adjusted via $c$ and $h$, respectively, in the representation $\phi((x-c)/h)$.
For a particular resolution level, a straightforward choice is to take as basis functions a set of kernels with centers well-spread through the input space. 
The scale should be chosen large enough to ensure the desired smoothness of the target function, but not so large that numeric issues arise in parameter estimation. 
The number of centers, and in turn coefficients, concretely describes the complexity of the resolution level.
Take as an example the 5 basis functions shown in light gray in the left panel of Figure \ref{fig:BasisRepresentation}. 
With centers $0,0.25,\ldots,1$ and width $0.75$,
these 5 basis functions are capable of approximating a broad range of relatively smooth and slowly varying target functions. 
For the next resolution level, the same basic kernel can be used again, but with a denser set of centers and correspondingly smaller scale.
Take once again the example basis functions shown in Figure \ref{fig:BasisRepresentation}. 
The 10 second-level 
resolution basis functions with centers $0,0.11,\ldots,1$ and width $0.5$ augment the first-level resolution basis functions to allow approximation of an even broader range of target functions.
Note that for a fixed dimensionality $m$ and resolution level $r$ the span of these basis functions forms a linear subspace of the RKHS associated with kernel $\phi((\cdot-\cdot)/h_r)$, where $h_r$ denotes the bandwidth for the {highest} (or finest) resolution level $r$.
Another reasonable choice for basis functions could be polynomials of increasing degree.

\section{Examples}\label{examples}

Several examples are examined in this section, a ten-dimensional, large-scale example which demonstrates the 
algorithm and statistical inference, a larger-scale and many-input example with a relatively complicated underlying function, and a stochastic function example.
A few popular test functions are examined additionally. 
These examples show that the multi-resolution functional ANOVA typically substantially outperforms traditional Gaussian process methods in terms of computational time, emulator accuracy, model interpretability, and scalability. 
In addition, we also compare with the local Gaussian process method, which is a scalable method proposed by \cite{gramacy2015local}.
All the numerical results were obtained using \texttt{R} \citep{R2015} on a server with 2.3 GHz CPU and 256GB of RAM.
The traditional Gaussian process, local Gaussian process and MRFA approaches were compared and respectively implemented in \texttt{R} packages \texttt{mlegp} \citep{mlegp}, \texttt{laGP} \citep{gramacy2015lagp} and \texttt{MRFA} \citep{sung2017mrfa}. 
The default settings of the packages \texttt{mlegp} and \texttt{MRFA} were selected.
For the package \texttt{laGP}, initial values and maximum values for correlation  parameters were given as suggested in \cite{gramacy2015lagp}.
For \texttt{laGP} and \texttt{MRFA}, 10 CPUs were requested via \texttt{foreach} \citep{pkgforeach} for parallel computing.

In the implementation of the MRFA model, Wendland's kernels with $k=2$ are chosen, and at most 10-way interaction effects and 10 resolution levels are considered ($R_{\max}=10$ and $D_{\max}=10$). For the tuning parameter $\lambda$, in 
Sections \ref{sec:example1}, \ref{sec:example2} and \ref{sec:morefunctions} where the target functions are deterministic, the smallest $\lambda$, corresponding to the most complex model, without exceeding memory allocation is taken. In Section \ref{sec:example3} where a stochastic target is considered, AIC, BIC and CV criteria were considered for choosing the tuning parameter and the comparison is explicitly discussed.

\subsection{10-dimensional data set}\label{sec:example1}
Consider a 10-dimensional, uniformly distributed input set of size $n$ in a $[0,1]^{10}$ design space and $n_{\rm test}=10,000$ random predictive locations generated from the same design space. 
The deterministic target function
\[
f(x_1,\ldots,x_{10})=\sin(1.5 x_1\pi)+3\cos(3.5 x_2\pi)+5\exp(x_3)+2\cos(x_2\pi)\sin(x_3\pi)
\]
is considered. 
Note that changes in $x_3$ have a relatively large influence on the output. 
Further, $x_1,x_2$ and $x_3$ are active while $x_4,\ldots,x_{10}$ are inert. Table \ref{Tab:procedure_simple} presents the selected inputs by MRFA in the fitted model for $n=1,000$. The main effect of $x_3$ with resolution level one is first entertained, and in the final fitted model ($\lambda=0.003$) the influential inputs are correctly selected while the irrelevant inputs ($x_4,\ldots,x_{10}$) are also identified (in the sense that they do not appear in the fitted model). Noticeably, our algorithm finds the basis functions which obey strong effect heredity in the final fitted model. In particular, $\hat f_{\{3\},1},\hat f_{\{2\},1}$, and $\hat f_{\{2,3\},1}$ are selected in the final fitted model.

\begin{table}[h]
\centering
\begin{tabular}{c|c}
\hline  $\lambda$ &  Selected inputs\\
\hline
1904.819 & $\hat f_{\{3\},1}$\\ 
1885.866 & $\hat f_{\{3\},1},\hat f_{\{2\},1}$\\ 
551.225 & $\hat f_{\{3\},1},\hat f_{\{2\},1},\hat f_{\{1\},1}$\\ 
87.544 & $\hat f_{\{3\},1},\hat f_{\{2\},1},\hat f_{\{1\},1},\hat f_{\{2,3\},1}$\\ 
\vdots & \vdots\\ 
0.003 & $\hat f_{\{3\},1},\hat f_{\{2\},1},\hat f_{\{1\},1},\hat f_{\{2,3\},1},\hat f_{\{2\},2},\hat f_{\{3\},2},\hat f_{\{1\},2},\hat f_{\{2,3\},2},\hat f_{\{2\},3},\hat f_{\{3\},3},\hat f_{\{1\},3},\hat f_{\{2,3\},3}$\\ 
\hline
\end{tabular}
\caption{Selected effects and resolution by model complexity.}
\label{Tab:procedure_simple}
\end{table}

Table \ref{Tab:simplefunction} shows the performance of MRFA based on designs of increasing size $n$, in comparison to \texttt{mlegp} and \texttt{laGP}. The fitting time of \texttt{laGP} is not shown in the example (and the ones in the following sections) because the fitting process of the approach cannot be simply separated from prediction. Note that \texttt{mlegp} is only feasible at $n=1,000$ in the numerical study, so results for $n>1000$ are not reported.  In contrast, it can be seen that MRFA is feasible and accurate for large problems. Furthermore, it is \textit{much} faster to fit and predict from and, even in cases when traditional Gaussian process fitting is feasible, more accurate. In this example with several inert input variables, compared to local Gaussian process fitting, even though \texttt{laGP} is feasible for large problems, the accuracy of the emulators is not comparable with traditional Gaussian process fitting or MRFA. In particular, MRFA can improve the accuracy at least 10000-fold over the considered sample sizes and it is even faster than local Gaussian process fitting in the cases $n=1,000$ and $n=10,000$. 
In addition, in all examples, the true active variables (i.e., $x_1,x_2,x_3$ and the interaction effect) are correctly selected, while all inactive variables (i.e., $x_4,\ldots,x_{10}$) are excluded. 
This example demonstrates that the MRFA method is capable of not only providing an accurate emulator at a much smaller computational cost, but also identifying important variables, which can be useful for model interpretation. 

\begin{table}[h]
\centering
\begin{tabular}{|C{2cm}|C{2cm}|C{2cm}C{2cm}C{2cm}C{2cm}|}
\hline  & \multirow{2}{*}{$n$} & Fitting  & Prediction  & RMSE & Variable\\
 &  & time (sec.) & time (sec.) & $(\times 10^{-5})$ & detection\\
\hline \texttt{mlegp} & 1,000 & 1993 & 158 & 40.81 & -\\
\hline \multirow{4}{*}{\texttt{laGP}} &  1,000 & - & 318 & 172998 & -\\
      & 10,000 & - & 331 & 71027& -\\
     & 100,000 & - & 331 & 20437& -\\
   & 1,000,000 & - & 361 & 6893 & -\\
\hline \multirow{4}{*}{\texttt{MRFA}} &  1,000 & 44  &  6 & 3.24 & 100\%\\
      & 10,000 & 124 & 5 & 1.14  & 100\%\\
     & 100,000 & 1325 & 5 & 0.72& 100\%\\
   & 1,000,000 & 61515 & 74 & 0.38& 100\%\\
\hline
\end{tabular}
\caption{Performance of 10-dimensional example with $n_{\rm test}=10,000$ random predictive locations.}
\label{Tab:simplefunction}
\end{table}

To demonstrate the statistical inference results and techniques discussed in Section 5, confidence intervals on emulator predictions are compared. 
The evaluation includes coverage rate, average width of intervals, and average interval score \citep{gneiting2007strictly}. Coverage rate is the proportion of the time that the interval contains the true value, while interval score combines the coverage rate and the width of intervals, 
\[
S_{\alpha}(l,u;x)=(u-l)+\frac{2}{\alpha}(l-x)\mathbbm{1}\{x<l\}+\frac{2}{\alpha}(x-u)\mathbbm{1}\{x>u\},
\]
where $l$ and $u$ are the lower and upper confidence limits, and $(1-\alpha)\times 100\%$ is the confidence level. Note that a smaller score corresponds to a better interval.

Continuing the above example, we consider 95\% confidence intervals. 
Here, we consider the large $n$ modification to the confidence interval algorithm with the reduced dimensional nuisance parameter, as given in Appendix \ref{appnd:algorithmInfalternative}. 
The unmodified algorithm performs similarly for $n=1,000$ and $n=10,000$, but is not feasible for the larger sample sizes.
The results of the evaluations are given in Table \ref{Tab:CIresults}.  It can be seen that the MRFA intervals have
coverage rate close to the nominal coverage 95\%, while \texttt{mlegp} yields very poor intervals that are both wide and contain less than 80\% of the true values. While \texttt{laGP} has reasonable coverage, it yields very wide confidence intervals, which result in a poor interval score. In contrast, the confidence intervals of MRFA perform best in terms of the interval score, given their small width.
Notably, the technique of \cite{apley2017spuq} could also be applied to \texttt{mlegp} and \texttt{laGP} to bring their coverage near target, but their widths would still be much larger than MRFA.


\begin{table}[h]
\centering
\begin{tabular}{|C{2cm}|C{2cm}|C{2.5cm}C{3cm}C{3cm}|}
\hline & \multirow{2}{*}{$n$} & Coverage & Average width& Average interval\\
 & & rate (\%) &  ($\times 10^{-5}$)&  score ($\times 10^{-5}$)\\
\hline \texttt{mlegp} & 1,000 & 75.09 & 6139.29 & 6313.56\\
\hline \multirow{4}{*}{\texttt{laGP}} &  1,000 & 82.18 & 313469 & 1324927\\
      & 10,000 & 92.85 & 126172 & 392917\\
     & 100,000 & 93.54 & 53313 & 85058\\
   & 1,000,000 & 93.20 & 24073 & 31478\\
\hline \multirow{4}{*}{\texttt{MRFA}} &  1,000 & 100.00 &  27.39 & 27.39\\
      & 10,000 & 98.56 & 3.12 & 3.39\\
     & 100,000 & 95.84 & 2.31 & 3.06 \\
   & 1,000,000 & 97.69 & 1.64 & 1.94 \\
\hline
\end{tabular}
\caption{Performance of prediction intervals in the 10-dimensional example with $n_{\rm test}=10,000$ random predictive locations.}
\label{Tab:CIresults}
\end{table}

\subsection{Borehole function}\label{sec:example2}
In this subsection, we use a relatively complex target function for a variety of input dimensions to further examine the MRFA in a many-input context. The borehole function \citep{zacks1998modern} represents a model of water flow through a borehole, and has input-output relation
\[
f(\mathbf{x})=\frac{2\pi T_u(H_u-H_l)}{\ln (r/r_w)(1+\frac{2LT_u}{\ln (r/r_w)r^2_wK_w}+\frac{T_u}{T_l})},
\]
where $r_w\in[0.05, 0.15]$ is the radius of borehole (m), $r\in[100, 50000]$ is the radius of influence (m), $T_u\in[63070, 115600]$ is the transmissivity of upper aquifer (m$^2$/yr), $H_u\in[990, 1110]$ is the potentiometric head of upper aquifer (m), $T_l\in[63.1, 116]$ is the transmissivity of lower aquifer (m$^2$/yr), $H_l\in[700, 820]$ is the potentiometric head of lower aquifer (m), $L\in[1120, 1680]$ is the length of borehole (m), and $K_w\in[9855, 12045]$ is the hydraulic conductivity of borehole (m/yr). 
Here, all inputs are rescaled to the unit hypercube.

Similar to the setup in the previous subsection, $n$ training locations along with $n_{\rm test}=10,000$ predictive locations are randomly generated from a uniform distribution on $[0,1]^d$.
Notice in the borehole experiment, there are eight active variables.
We include $d-8$ irrelevant variables for demonstration. 
Table \ref{Tab:circuit} shows the performance of traditional Gaussian process, local Gaussian process, as well as MRFA based on designs of increasing size $n$ and input dimension $d$. 
For a fixed $d$, the MRFA is feasible and accurate for large problems, while traditional Gaussian process fitting is only feasible for the experiment of size $1,000$. Note that the accuracy for $n=1,000,000$ can be further improved if more memory allocation is in hand. 
Alternatives for the case where model fitting exceeds a user's limited budget are discussed in Section \ref{discussion}. In addition, in cases when traditional Gaussian process fitting is feasible, the fitting and prediction procedure of MRFA is \textit{much} faster while retaining the accuracy (in some cases MRFA is much more accurate, see $d=20$ and $60$). Similar to the results in the previous subsection, local Gaussian process fitting is feasible for large problems, but it is less accurate than both traditional Gaussian process and MRFA.
With increasing $d$, the performance of MRFA varies only slightly, while traditional Gaussian process and local Gaussian process fitting perform \textit{substantially} worse with larger $d$ in terms of time cost and accuracy. This result is not surprising, since the irrelevant inputs are screened out (or equivalently, the influential inputs are identified) by our proposed algorithm, as demonstrated in Section \ref{sec:example1}. 
Notice that the $d=20$ \texttt{mlegp} example has very poor accuracy. 
This example was explored quite extensively and for several random number seeds.
In all cases, the 
likelihood function was
highly ill-conditioned,
resulting in very low accuracy.
This numerical issue was also pointed out in \cite{GPfit}.

\begin{table}[!h]
\centering
\begin{tabular}{
|C{1cm}|C{2cm}|C{2cm}|C{2cm}C{2cm}C{2cm}|}
\hline
\multirow{2}{*}{$d$} & \multirow{2}{*}{Method} & \multirow{2}{*}{$n$}& Fitting  & Prediction & RMSE \\
& && Time (sec.) & Time (sec.) & \\
\hline \multirow{9}{*}{10} & \texttt{mlegp} & 1,000 & 9405 &  99 & 0.5406\\
\hhline{~-----} & \multirow{4}{*}{\texttt{laGP}} & 1,000 & - & 324 & 2.2541  \\
   & &  10,000 & - & 327 & 1.0952\\
   & & 100,000 & - & 326 & 0.5316\\
   &  & 1,000,000 & - & 343 & 0.2667\\
\hhline{~-----} & \multirow{4}{*}{\texttt{MRFA}} & 1,000 & 344 & 31 & 0.5659  \\
   & &  10,000 & 858 & 15 & 0.1777\\
   & & 100,000 & 8753 & 72 & 0.1186\\
   &  & 1,000,000 & 160326 & 179 &  0.0901*\\
\hline \multirow{9}{*}{20} & \texttt{mlegp} & 1,000 & 12358  & 172 & 16.4539\\
\hhline{~-----} & \multirow{4}{*}{\texttt{laGP}} & 1,000 & - & 356 &  10.1838 \\
   & &  10,000 & - & 359 & 9.7302\\
   & & 100,000 & - & 362 & 10.0245\\
   &  & 1,000,000 & - & 429 & 9.3887\\
\hhline{~-----} & \multirow{4}{*}{\texttt{MRFA}} & 1,000 & 278 & 24 & 0.5583 \\
   & &  10,000 & 786 & 14 &0.1853\\
   & & 100,000 & 8443 & 67 & 0.1220\\
   & & 1,000,000 & 254457 & 214 & 0.0924*\\
\hline \multirow{9}{*}{60} & \texttt{mlegp} & 1,000 & 15999 & 186  & 3.5841\\
\hhline{~-----} & \multirow{4}{*}{\texttt{laGP}} & 1,000 & - & 599 & 20.6825  \\
   & &  10,000 & - & 600 & 34.3782\\
   & & 100,000 & - & 638 & 45.3728\\
   &  & 1,000,000 & - & 924 & 51.2694\\
\hhline{~-----} & \multirow{4}{*}{\texttt{MRFA}} & 1,000 & 534 & 26 & 0.7034 \\
   & &  10,000 & 812 & 15 & 0.1770\\
   & & 100,000 & 6482 & 50  & 0.1312\\
   & & 1,000,000 & 150477 & 90 & 0.0980*\\
\hline
\end{tabular}
\caption{The borehole example with $n_{\rm test}=10,000$ random predictive locations. *Note that due to memory limits, in these cases $R_{\text{max}}=3$ and $D_{\text{max}}=3$ are considered instead.}
\label{Tab:borehole}
\end{table}

\subsection{Stochastic Function}\label{sec:example3}

In this subsection, a stochastic function is considered. In particular, this example demonstrates tuning parameter selection. We consider the following function, which was used in \cite{gramacy2009adaptive},
\begin{equation}\label{eq:gramacylee}
f(x_1,x_2,x_3,x_4,x_5,x_6)=\exp\left\{\sin([0.9\times (x_1+0.48)]^{10})\right\}+x_2x_3+x_4+\epsilon,
\end{equation}
where $\epsilon\sim\mathcal{N}(0,0.05^2)$ and $x_i\in[0,1],i=1,\ldots,6$. The function is nonlinear in $x_1,x_2$ and $x_3$, and linear in $x_4$. In $x_1$, it oscillates more quickly as it reaches the upper bound of the interval $[0,1]$. $x_5$ and $x_6$ are irrelevant variables. 

Here, we consider 5 replicates at each unique training location, $n=5m$, as indicated in \cite{wangStochKrigingPaper}, 
along with $n_{\rm test}=10,000$ unique predictive locations randomly generated from a uniform distribution on $[0,1]^d$. Since the choice of tuning parameter $\lambda$ in \eqref{eq:lossfuncion} can be particularly crucial in stochastic function emulation, we consider AIC, BIC and 10-fold CV as selection criteria. For the implementation of 10-fold CV, 10 CPUs are requested for parallel computing.
Table \ref{Tab:circuit} shows the performance of traditional Gaussian process, local Gaussian process, and MRFA with these three selection criterion based on designs of increasing size $n$. It can be seen that, similar to the results in the previous subsections, traditional Gaussian process is only feasible at $n=1,000$, while MRFA is feasible and accurate for large problems. Even when traditional Gaussian process is feasible, MRFA is much faster in terms of fitting and prediction, and more accurate with any tuning parameter selection method. Local Gaussian process fitting is feasible for large problems, but less accurate than MRFA and traditional Gaussian process.
Among the three criteria, it can be seen that AIC, BIC and CV have relatively small differences {in terms of prediction accuracy}. 
Computationally, the tuning parameters can be chosen within 2 seconds using AIC or BIC, while the computational costs of CV can be considerable.

This example also illustrates the flexibility of the proposed method. 
From \eqref{eq:gramacylee}, the function appears not to satisfy the strong effect heredity conditions, because the main effects of $x_2$ and $x_3$ are not present. 
On the other hand, the function can be easily re-expressed in a form that does satisfy strong effect heredity. 
For example,  
\[
f(x_1,\ldots,x_6)=-1+\exp\left\{\sin([0.9\times (x_1+0.48)]^{10})\right\}+x_2+x_3+(x_2-1)(x_3-1)+x_4+\epsilon,
\]
which satisfies the strong effect heredity assumption because main effect functions of $x_2$ and $x_3$ appear in the function in addition to the interaction function $(x_2-1)(x_3-1)$.

\begin{table}[h]
\centering
\begin{tabular}{
|C{2cm}|C{2cm}|C{2cm}C{2cm}C{1cm}C{2cm}C{2cm}|}
\hline
 & \multirow{2}{*}{$n$}& Fitting & Prediction&  & Selection & RMSE \\
 && Time (sec.)& Time (sec.) &  & Time (sec.) & $(\times 10^{-1})$\\
\hline  \texttt{mlegp} & 1,000 & 2524 & 88 &&  & 1.64\\
\hline \multirow{4}{*}{\texttt{laGP}} & 1,000 & - &394&  &  & 7.30 \\
 & 10,000 & - &439 &   &   & 6.07\\
 & 100,000& - &457&   &  & 4.70\\
  & 1,000,000& - & 433&   &  & 3.85\\
\hline
 \multirow{12}{*}{\texttt{MRFA}} & \multirow{3}{*}{1,000} & \multirow{3}{*}{96}& \multirow{3}{*}{8}& AIC & 1 & 1.36 \\
&   & & & BIC & 1 & 1.36\\
 & & & & CV & 92 & 1.32\\
\hhline{~------}
& \multirow{3}{*}{10,000} & \multirow{3}{*}{443} &\multirow{3}{*}{23} & AIC & 1&  0.18 \\
&   & & & BIC & 1& 0.19\\
 & & & & CV & 423 & 0.26 \\
\hhline{~------}
 & \multirow{3}{*}{100,000} & \multirow{3}{*}{2999} & \multirow{3}{*}{34}& AIC &1 & 0.14  \\
&   & & & BIC & 1& 0.14\\
 & & & & CV & 2213 & 0.14\\
\hhline{~------}
& \multirow{3}{*}{1,000,000} & \multirow{3}{*}{61504}& \multirow{3}{*}{103}& AIC &  1& 0.01  \\
&   & & & BIC & 1& 0.01\\
 & & & & CV & 55849 &0.05\\
\hline
\end{tabular}
\caption{The 6-dimensional stochastic function example with $n_{\rm test}=10,000$ random predictive locations.}
\label{Tab:circuit}
\end{table}

\subsection{Other Functions}\label{sec:morefunctions}
In this subsection, we present three more example functions in comparison with \texttt{laGP} and \texttt{mlegp}, the 3-dimensional bending function \citep{plumlee2017lifted}, the 6-dimensional OTL circuit function \citep{ben2007modeling}, and the 10-dimensional wing weight function \citep{forrester2008engineering}. 
The details of these examples and their input ranges are given in Appendix \ref{append:morefunctions}.

The comparison results are shown in Table \ref{Tab:otherexample}. Similar to the results in the previous subsections, the results indicate the MRFA outperforms the traditional Gaussian process in terms of prediction accuracy, except for the wing function at $n=1,000$ where the traditional Gaussian process fitting has better accuracy. The reason might be that the underlying wing weight function contains high-order interaction functions making it not particularly well-suited to low-order representation. See \eqref{eq:wingfunction} in the Appendix. Nevertheless, even when the traditional Gaussian process fitting is feasible (at $n=1,000$), the MRFA is much faster than traditional Gaussian process fitting.  Local Gaussian process fitting is feasible for large problems and has better accuracy in the low-dimensional example (see Table \ref{Tab:otherexample}(a)), but it is less accurate in the other two examples and in some cases slower than the MRFA.

\begin{table}[!h]
\begin{subtable}{\linewidth}\centering
{\begin{tabular}{|C{2cm}|C{2cm}|C{2cm}C{2cm}C{2cm}|}
\hline \multirow{2}{*}{$d=3$} & \multirow{2}{*}{$n$} & Fitting  & Prediction  & RMSE\\
 &  & time (sec.) & time (sec.) & $(\times 10^{-5})$ \\
\hline \texttt{mlegp} & 1,000 & 1807 & 140 & 5.64\\
\hline \multirow{4}{*}{\texttt{laGP}} &  1,000 & - & 310 & 0.66 \\
      & 10,000 & - & 312 & 0.21\\
     & 100,000 & - & 311 & 0.08\\
   & 1,000,000 & - & 316 &  0.04\\
\hline \multirow{4}{*}{\texttt{MRFA}} &  1,000 & 49  &  8 & 2.16 \\
      & 10,000 & 293 & 14 & 0.46  \\
     & 100,000 & 3311 & 25 & 0.20 \\
   & 1,000,000 & 113279 & 159 &  0.14*\\
\hline
\end{tabular}}
\caption{Performance of the 3-dimensional bending function. *Note that due to memory limits, in the cases $R_{\text{max}}=3$ and $D_{\text{max}}=3$ are considered instead.}
\end{subtable}

\begin{subtable}{\linewidth}\centering
{\begin{tabular}{|C{2cm}|C{2cm}|C{2cm}C{2cm}C{2cm}|}
\hline \multirow{2}{*}{$d=6$} & \multirow{2}{*}{$n$} & Fitting  & Prediction  & RMSE\\
 &  & time (sec.) & time (sec.) & $(\times 10^{-4})$ \\
\hline \texttt{mlegp} & 1,000 & 3976 & 173 & 13.70\\
\hline \multirow{4}{*}{\texttt{laGP}} &  1,000 & - & 314 & 102.71\\
      & 10,000 & - & 301 & 27.01\\
     & 100,000 & - & 323 & 11.43\\
   & 1,000,000 & - & 328 & 4.80\\
\hline \multirow{4}{*}{\texttt{MRFA}} &  1,000 & 294  &  19 & 7.81 \\
      & 10,000 & 798 & 17 & 2.05  \\
     & 100,000 & 6688 & 82 & 1.42 \\
   & 1,000,000 & 122075 & 133 & 1.18* \\
\hline
\end{tabular}}
\caption{Performance of the 6-dimensional OTL circuit function. *Note that due to memory limits, in the cases $R_{\text{max}}=3$ and $D_{\text{max}}=3$ are considered instead.}
\end{subtable}

\begin{subtable}{\linewidth}\centering
{\begin{tabular}{|C{2cm}|C{2cm}|C{2cm}C{2cm}C{2cm}|}
\hline \multirow{2}{*}{$d=10$} & \multirow{2}{*}{$n$} & Fitting  & Prediction  & RMSE\\
 &  & time (sec.) & time (sec.) & $(\times 10^{-1})$ \\
\hline \texttt{mlegp} & 1,000 & 2922 & 228 & 1.56\\
\hline \multirow{4}{*}{\texttt{laGP}} &  1,000 & - & 327 & 19.74\\
      & 10,000 & - & 325 & 10.72\\
     & 100,000 & - & 329 & 5.04\\
   & 1,000,000 & - & 347 & 2.22\\
\hline \multirow{4}{*}{\texttt{MRFA}} &  1,000 & 1319  &  28 & 7.77 \\
      & 10,000 & 1633 & 21 & 1.52  \\
     & 100,000 & 12289 & 84 & 1.39 \\
   & 1,000,000 & 168854 & 148 &  1.18*\\
\hline
\end{tabular}}
\caption{Performance of the 10-dimensional wing weight function. *Note that due to memory limits, in the cases $R_{\text{max}}=1$ and $D_{\text{max}}=3$ are considered instead.}
\end{subtable}

\caption{Performance of the bending, OTL circuit, and wing weight functions with $n_{\rm test}=10,000$ random predictive locations.}
\label{Tab:otherexample}
\end{table}

\section{Discussion}\label{discussion}
While large-scale and many-input nonlinear regression problems have become typical in the modern ``big data'' context, Gaussian process models are often impractical due to memory and numeric issues. In this paper, we proposed a multi-resolution functional ANOVA (MRFA) model, which targets a low resolution representation of a low order functional ANOVA, with respect to strong effect heredity, to form an accurate emulator in a large-scale and many-input setting. Implementing a forward-stepwise variable selection technique via the group lasso algorithm, the representation can be efficiently identified without supercomputing resources. Moreover, we provide new theoretical results regarding consistency and inference for a potentially overlapping group lasso problem, which can be applied to the MRFA model. Our numerical studies demonstrate that our proposed model not only successfully identifies influential inputs, but also provides accurate predictions for large-scale and many-input problems with a much faster computational time compared to traditional Gaussian process models.

The MRFA model has a similar flavor to multivariate adaptive regression splines (MARS) \citep{friedman1991multivariate}.
On the other hand, the flexibility in basis function choice along resolution levels, forward-stepwise variable selection via group lasso, and confidence interval development for the MRFA are quite different. 
Moreover, empirical studies in \cite{ben2007modeling} show the Gaussian process outperforming MARS in terms of prediction accuracy, while our numerical studies show MRFA outperforming Gaussian process.

The proposed MRFA indicates several avenues for future research.
First, when the sample size is too large due to a user's limited budget (e.g., memory limitation), sub-sampling methods can be naturally applied to the MRFA approach. For example, \cite{breiman1999pasting} proposed \textit{pasting Rvotes} and \textit{pasting Ivotes} methods, which use random sampling and importance sampling, respectively. Moreover, \textit{m-out-of-n bagging} (also known as \textit{subagging}) \citep{buchlmann2002analyzing,buja2006observations,friedman2007bagging} uses sub-samples for aggregation and might be expected to have similar accuracy to bagging, which uses bootstrap samples to improve the accuracy of prediction \citep{breiman1996bagging}. These sub-sampling methods provide the potential
to extend the MRFA model to even larger data sets.

Next, if the basis functions are constructed by integrating the full-dimensional kernel over margins as indicated in Theorem \ref{thm:basis}, 
one may consider the native space norm with kernel $\Phi$ instead of the 2-norm in the penalized loss function \eqref{eq:lossfuncion}. 
In fact, both norms were examined in our numeric studies and the results indicated that the penalized loss function with respect to the native space norm may increase computational costs without much improvement in prediction accuracy. For example, for the 10-dimensional example in Section 6.1, with $n=1,000$, the fitting with the native space norm costs about 6 minutes while fitting with the 2-norm only costs 44 seconds, and both result in roughly the same RMSE.

Last but not least, it is conceivable that the MRFA approach can be generalized to a non-continuous, for example binary, response. 
One might proceed by replacing the residual sum of squares in \eqref{eq:lossfuncion} by the corresponding negative log-likelihood function, and extending the group lasso algorithm to other exponential families, as done in \cite{meier2008group}. The inference results, however, cannot be directly applied to a non-continuous response. 

\section*{Appendices}
\renewcommand{\thesubsection}{\Alph{subsection}}
\renewcommand{\theequation}{\Alph{subsection}.\arabic{equation}}

\subsection{Proof of Theorem \ref{thm:basis}}\label{append:basis}
First, a useful lemma is given.

\begin{lemmmm}\label{lemma:fANOVA}
Denote $\mathcal{F}_u = \{\int_{\Omega_{-u}} \left(f(x) - \sum_{v \subset u} f_v (x) \right) {\rm d}x_{-u}|f\in \mathcal{N}_\Phi,f_v\in\mathcal{F}_v\}$. Suppose $\Phi\in\Omega\times\Omega\rightarrow\mathbb{R}$ is a symmetric positive-definite kernel on $\Omega=[0,1]^d$ and $\Phi$ is a product kernel. Then, $$f_u\in\mathcal{F}_u = \{f_v + g_u |g_u \in \mathcal{N}_{\Phi_{u}}, v \subset u, f_v \in \mathcal{F}_v \},$$
where $\Phi_{u}=\prod_{j \in u} \phi_j$.
\end{lemmmm}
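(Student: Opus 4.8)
The plan is to prove Lemma \ref{lemma:fANOVA} by induction on the cardinality $|u|$, mirroring the recursive definition of the functional ANOVA decomposition in \eqref{FA}. The base case $u=\emptyset$ amounts to checking that $f_\emptyset = \int_\Omega f(x)\,{\rm d}x$ is a constant, hence lies in $\mathcal{F}_\emptyset = \{\alpha\,|\,\alpha\in\mathbb{R}\} = \mathcal{N}_{\Phi_\emptyset}$ under the convention that the empty product kernel is the constant kernel $1$. For a generic $f\in\mathcal{N}_\Phi$ of the form $\sum_{y\in X}\beta_y\Phi(\cdot,y)$ this is immediate since $\int_\Omega\Phi(x,y)\,{\rm d}x$ is a scalar, and one extends to the closure by continuity of the integration functional on $\mathcal{N}_\Phi$.

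For the inductive step, I would assume the claim holds for all index sets of size $\le k$ and take $|u'|=k+1$. Starting from $f=\sum_{y\in X}\beta_y\Phi(\cdot,y)$ and using the product structure $\Phi(x,y)=\prod_{j=1}^d\phi_j(x_j,y_j)$, I would compute
\begin{align*}
\int_{\Omega_{-u'}} f(x)\,{\rm d}x_{-u'}
&= \sum_{y\in X}\beta_y\Bigl(\prod_{j\in u'}\phi_j(x_j,y_j)\Bigr)\prod_{j\notin u'}\int_{0}^{1}\phi_j(x_j,y_j)\,{\rm d}x_j\\
&= \sum_{y\in X}\tilde\beta_y\prod_{j\in u'}\phi_j(x_j,y_j),
\end{align*}
where $\tilde\beta_y = \beta_y\prod_{j\notin u'}\int_0^1\phi_j(x_j,y_j)\,{\rm d}x_j$, so that this integral is an element of $\mathcal{N}_{\Phi_{u'}}$ with $\Phi_{u'}=\prod_{j\in u'}\phi_j$ (again extending to the closure by continuity). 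Plugging into \eqref{FA} gives $f_{u'}(x) = \sum_{y\in X}\tilde\beta_y\prod_{j\in u'}\phi_j(x_j,y_j) - \sum_{v\subsetneq u'}f_v(x)$; since each $f_v$ with $v\subsetneq u'$ has $|v|\le k$ and lies in $\mathcal{F}_v$ by the inductive hypothesis, $f_{u'}$ has the asserted form $g_{u'} + (\text{lower-order pieces})$, i.e.\ $f_{u'}\in\mathcal{F}_{u'}$.

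The main obstacle, and the point that needs care rather than routine calculation, is the passage from the dense linear span of kernel sections to the full RKHS $\mathcal{N}_\Phi$ (its closure): one must argue that the integration-over-margins map $f\mapsto\int_{\Omega_{-u'}}f\,{\rm d}x_{-u'}$ is bounded (continuous) from $\mathcal{N}_\Phi$ into $\mathcal{N}_{\Phi_{u'}}$, so that the conclusion, established on a dense subset, persists on the closure; boundedness of point evaluations plus compactness of $\Omega=[0,1]^d$ and continuity/boundedness of the $\phi_j$ should suffice here. A secondary bookkeeping point is to fix the normalization convention $\int_0^1{\rm d}x_j=1$ (so $\int_{\Omega_{-u'}}{\rm d}x_{-u'}=1$), which keeps the $-\sum_{v\subsetneq u'}f_v(x)$ term clean after integration. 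Once the lemma is in hand, Theorem \ref{thm:basis} follows by applying it with $u=\mathcal{D}$ and then recursively peeling off the components: $f = \sum_{u\subseteq\mathcal{D}}f_u$ with each $f_u\in\mathcal{N}_{\Phi_u}([0,1]^{|u|})$, since the $\mathcal{F}_u$ membership, unwound, places the genuinely new part of each $f_u$ in $\mathcal{N}_{\Phi_u}$ while the inherited lower-order parts can be absorbed into the corresponding $f_v$ terms of the decomposition.
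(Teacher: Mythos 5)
Your proposal follows essentially the same route as the paper's proof: induction on $|u|$, the identical base case for $u=\emptyset$, and the same computation exploiting the product structure $\Phi(x,y)=\prod_{j=1}^d\phi_j(x_j,y_j)$ to show the integrated term lands in $\mathcal{N}_{\Phi_{u'}}$. The only (minor) divergence is in how the extension from the dense span of kernel sections to the full RKHS is justified: the paper invokes boundedness of RKHS elements together with the dominated convergence theorem to interchange the integral with the limit of the finite sums, whereas you propose establishing boundedness of the integration-over-margins operator; both are standard and adequate for this step.
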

\begin{proof}
Initially consider a finite element. 
The proof proceeds by induction. For $u=\emptyset$, we have that if $f\in\mathcal{N}_{\Phi}$, then 

\[
f_{\emptyset} =  \int_{\Omega} f(x) {\rm d}x=\int_{\Omega}\sum_{y \in X} \beta_y \Phi(x, y) {\rm d}x=\sum_{y \in X} \beta_y\int_{\Omega} \Phi(x, y) {\rm d}x:=\alpha\in\mathbb{R}.
\]
This shows $f_\emptyset\in \mathcal{F}_\emptyset = \{f(\cdot) = \alpha| \alpha \in \mathbb{R} \} $.


Let $f_u\in \mathcal{F}_u$ for any $|u|\leq k$. Note that $\int_{\Omega_{-u}}  {\rm d}x_{-u}=1$ for any $u$, since $\Omega=[0,1]^d$. 
Thus, for $|u'|=k+1$, 
\begin{align*}
 f_{u'}(x) &= \int_{\Omega_{-u'}} \left(f(x) - \sum_{v \subset u'} f_v (x) \right) {\rm d}x_{-u'} =\int_{\Omega_{-u'}} f(x) {\rm d}x_{-u'} - \sum_{v \subset u'} f_v (x)\\
  & =  \sum_{y \in X}\beta_y  \int_{\Omega_{-u'}}   \Phi (x, y) {\rm d}x_{-u'}- \sum_{v \subset u'} f_v (x) \\
  & = \sum_{y \in X}\beta_y  \int_{\Omega_{-u'}}   \prod^d_{j=1}\phi (x_j, y_j) {\rm d}x_{-u'} - \sum_{v \subset u'} f_v (x)\\
 &= \sum_{y \in X}  \beta_y  \prod_{j \in u'}\phi_j (x_j, y_j)  \int_{\Omega_{-u'}} \prod_{j \notin u'}  \phi_{j} (x_j, y_j){\rm d}x_{-u'}-\sum_{v \subset u'} f_v (x)\\
 &= \sum_{y \in X}  \tilde{\beta}_y \prod_{j \in u'} \phi_i (x_i,y_i)   - \sum_{v \subset u'} f_v (x),
\end{align*}
where $\tilde{\beta}_y=\beta_y\int_{\Omega_{-u'}} \prod_{j \notin u'}  \phi_{j} (x_j, y_j){\rm d}x_{-u'}.$ Hence, since 
$\sum_{y \in X}  \tilde{\beta}_y  \phi_{u'} (\cdot,y_i) \in \mathcal{N}_{\Phi_{u'}}$ and $f_v\in \mathcal{F}_v$ for any $|v|\leq k$, we have $f_{u'}\in \mathcal{F}_{u'} = \{f = f_v + g_{u'} |g_{u'} \in \mathcal{N}_{\Phi_{u'}}, v \subset u', f_v \in \mathcal{F}_v \}$. Therefore, by induction, $f_u\in \mathcal{F}_u = \{f_v + g_u |g_u \in \mathcal{N}_{\Phi_u}, v \subset u, f_v \in \mathcal{F}_v \}$ is true for any $u\subseteq D$.

Since any element of an RKHS is bounded \citep{aronszajn1950theory}, we may use the dominated convergence theorem \citep{bartle2014elements} to interchange the integral and the limit of the finite sums to extend to an arbitrary element.
\end{proof}

By Lemma \ref{lemma:fANOVA}, we have $f(x)=\sum_{u\subseteq D}f_u(x)$, where $f_u(x)\in\mathcal{F}_u=\{f_v + g_u |g_u \in \mathcal{N}_{\Phi_{u}}, v \subset u, f_v \in \mathcal{F}_v \}$. Thus, by the fact that $g^{(1)}_u+g^{(2)}_u\in\mathcal{N}_{\Phi_{u}}$ for $g^{(1)}_u,g^{(2)}_u\in\mathcal{N}_{\Phi_{u}}$, $f(x)$ can be represented as
$f(x)=\sum_{u\subseteq D}f_u(x)$, where $f_u \in \mathcal{N}_{\Phi_{u}}$.

\subsection{Algorithm for Estimation}\label{appnd:algorithm}
\begin{enumerate}
\item[1.] Let $\mathcal{A}$ denote the set of active groups and $\mathcal{C}$ the set of candidate groups. Start with $\mathcal{A}=\emptyset$ and $\mathcal{C}=\{(u,r)|u=\{1\},\ldots,\{d\},r=1 \}$.
Set an initial penalty $\lambda_{\max}$ and a small increment $\Delta$.	  
\item[2.] Set up an overlapping group lasso algorithm which minimizes the penalized likelihood function
\begin{displaymath}
\frac{1}{n}\sum_{i=1}^n\left(y_i-\sum_{(u,r)\in\mathcal{C}}\sum_{k=1}^{n_u(r)}\beta_u^{rk}\varphi_u^{rk}(x_{iu})\right)^2+\lambda\sum_{(u,r)\in\mathcal{C}}\sqrt{N_u(r)\sum_{v\subseteq u}\sum_{s\le r}\sum_{k=1}^{n_v(s)}(\beta_v^{sk})^2}.
\end{displaymath}
Denote the input-output function as $\hat{{\beta}}_{\lambda}=\texttt{grplasso}(\lambda,\mathcal{C},\hat{{\beta}}_{\lambda+\Delta})$. The inputs include a penalty value $\lambda$, the candidate set $\mathcal{C}$ and the estimated coefficient with penalty value $\lambda+\Delta$, and the output $\hat{{\beta}}_{\lambda}$ is the corresponding estimated coefficient by the algorithm. 
Start with $\lambda=\lambda_{\max}$ and $\hat{{\beta}}_{\lambda+\Delta}={0}$.

\item[3.] Do $\hat{{\beta}}_{\lambda}=\texttt{grplasso}(\lambda,\mathcal{C},\hat{{\beta}}_{\lambda+\Delta})$ and obtain the set of active groups $\mathcal{A}'\subseteq\mathcal{C}$ based on $\hat{{\beta}}_{\lambda}$. 
Set $\lambda=\lambda-\Delta$.
If $\mathcal{A}'\setminus\mathcal{A}\ne\emptyset$, then $\mathcal{A}\leftarrow\mathcal{A}'$ and $\mathcal{C}\leftarrow\mathcal{C}\cup\mathcal{C}',$ where $\mathcal{C}'$ contains the new candidate groups necessary to satisfy strong effects heredity given the updated $\mathcal{A}$.
	       
\item[4.] Repeat step 3 until some convergence criterion is met.
\end{enumerate}

\subsection{Confidence Interval Algorithm}\label{appnd:algorithmInf}
\begin{enumerate}
\item Let ${\varphi}^*$ denote the basis function evaluations at a particular predictive location $x^*$. Extend ${\varphi}^*$ to a basis of $\mathbb{R}^p$ and denote it as $A=({\varphi}^*,c_2,\ldots,c_{p})$. Compute $(\tilde{Z}_i,\tilde{{Q}}_i)^T=A^{-1}{\varphi}_i$ for $i=1,\ldots,n$ and $(\hat{\eta}_1,\hat{\eta}^T_{(-1)})= A^T\hat{\beta}_{\lambda}$, where $\hat{\beta}_{\lambda}$ is the estimated coefficient with penalty $\lambda$.
\item Compute the estimated decorrelated score function
\begin{align*}
\hat S(0,\hat{\eta}_{(-1)})=-\frac{1}{n\hat\sigma^2}\sum_{i=1}^n(y_i- \hat{\eta}_{(-1)}^T\tilde{{Q}}_i)(\tilde Z_i-\hat{w}^T\tilde{Q}_i),
\end{align*}
where
\begin{equation*}
\hat{{w}} =\arg\min\bigg\|\frac{1}{n}\sum_{i=1}^n {\tilde Q}_i(\tilde Z_i-{w}^T{\tilde Q}_i)\bigg\|_2 + \lambda''\|{w}\|_1,
\end{equation*}
and $\hat{\sigma}^2$ is a consistent estimator of $\sigma^2$. For example, $\sigma^2$ can be estimated by $\hat{\sigma}^2=\frac{1}{n-s}\sum^n_{i=1}(y_i-\hat{\beta}^T_{\lambda}{\varphi}_i)^2$, where $s$ the the number of non-zero elements in $\hat{\beta}_{\lambda}$. Another estimator is the cross-validation based variance estimator. Define the $K$ cross-validation folds as $\{D_1,\ldots,D_K\}$ and compute
\[
\hat{\sigma}^2=\min_{\lambda}\frac{1}{n}\sum^K_{k=1}\sum_{i\in D_k}(y_i-(\hat{\beta}_{\lambda}^{(-k)})^T{\varphi}_i)^2,
\]
where $\hat{\beta}_{\lambda}^{(-k)}$ is the overlapping group lasso estimate at $\lambda$ over the data after the $k^{th}$ fold is omitted. This estimator has been used for the variance estimation in lasso regression problems. See \cite{fan2012variance}.
\item Compute the interval
\begin{align*}
[c_{\alpha/2}/b,c_{1-\alpha/2}/b], 
\end{align*}
where $c_{\alpha/2}=-\hat S(0,\hat{\eta}_{(-1)}) +\sqrt{\frac{b}{n}}\Phi^{-1}(\alpha/2)$, $c_{1-\alpha/2}=-\hat S(0,\hat{\eta}_{(-1)}) +\sqrt{\frac{b}{n}}\Phi^{-1}(1-\alpha/2)$, $b=\frac{1}{n\hat \sigma^{2}}\sum_{i=1}^{n}\tilde Z_i(\tilde Z_i-\hat w^T\tilde Q_i)$. By some algebraic manipulation, one can show that this interval is same as the one in Corollary \ref{cor:CI}.
\end{enumerate}

\subsection{Confidence Interval Algorithm Modification for Large $n$}\label{appnd:algorithmInfalternative}
\begin{enumerate}

\item In Algorithm \ref{appnd:algorithmInf}, replace $\tilde{Q}_i$ by $\tilde{Q}_{*i}$ and $p$ by $p_*$, where the nuisance $\varphi_{ij}$, $j=1,\ldots,p_*$ only contain basis functions in the candidate groups at the selected $\lambda$, say $\mathcal{C}_\lambda$.

\item Replace $\hat{w}$ by
\begin{equation}\label{eq:alternativewhat}
\hat{w}_*=\left(\sum^n_{i=1}\tilde{Q}_{*i}\tilde{Q}^T_{*i}+\eta I_{p_*-1}\right)^{-1}\left(\sum^n_{i=1}\tilde{Q}_{*i}\tilde{Z}_i\right)
\end{equation}
with a small positive $\eta$, where $I_{p_*-1}$ is a $(p_*-1)\times(p_*-1)$ identity matrix.

\item For the deterministic case (\ref{eq:lin_model_det}),
\begin{enumerate}
\item[(i)] Define $K$ cross-validation folds as $\{D_1,\ldots,D_K\}$ and partition the original samples $\{x_i,y_i\}^n_{i=1}$ via the $k$ folds.

\item[(ii)] Regard $\hat{\sigma}^2$ in Algorithm \ref{appnd:algorithmInf} as an unknown parameter. Let $\hat{u}^{(-k)}(x^*,\hat{\sigma}^2)$ and $\hat{l}^{(-k)}(x^*,\hat{\sigma}^2)$ be the upper and lower limits at a predictive location $x^*$ by Algorithm \ref{appnd:algorithmInf} over the data after the $k^{th}$ fold is omitted, respectively. 

\item[(iii)] Replace $\hat{\sigma}^2$ by
\[
\hat{\sigma}^2_*=\arg\min_{\hat{\sigma}^2}\left|\left(\frac{1}{n}\sum^K_{k=1}\sum_{i\in D_k} \mathbbm{1}\{y_i \in [\hat{l}^{(-k)}(x_i,\hat{\sigma}^2),\hat{u}^{(-k)}(x_i,\hat{\sigma}^2)]\}\right)-(1-\alpha)\right|,
\]
where $\mathbbm{1}\{A\}$ is an indicator function of the set $A$.
\end{enumerate}
\end{enumerate}

\subsection{Proof of Theorem \ref{thm:L2consistency_det}}\label{proofThmL2}

\subsubsection{Notation and Reformulation}\label{AppReform}

First, we introduce some additional notation. 
For a matrix ${M}=[M_{jk}]$, let $\|{M}\|_{\max}=\max_{j,k}|M_{jk}|$, $\|{M}\|_1=\sum_{j,k}|M_{jk}|$, and $\|{M}\|_{l_\infty}=\max_j \sum_k |M_{jk}|$. 
For ${v}=(v_1,...,v_p)^T\in \mathbb{R}^p$, and $1\leqslant q< \infty$, 
define $\|{v}\|_q=(\sum_{i=1}^p|v_i|^q)^{1/q}$. Define $\|{v}\|_0=|\{i:v_i\neq 0\}|$. For $S\subseteq\{1,...,p\}$, let ${v}_S=\{v_j:j\in S\}$ and $\bar S$ be the complement of $S$. Given $a,d\in\mathbb{R}$, we use $a\vee b$ and $a\wedge b$ to denote the maximum and minimum of $a$ and $b$.

For convenience, we restate the loss function as follows. 
Consider groups $J_1,...,J_{p_n}$, where $J_j\subseteq \{1,...,p\}$, 
and $\bigcup_{j=1}^{p_n}J_j=\{1,...,p\}$. Notice that we do not require $J_{j_1}\bigcap J_{j_2}=\emptyset$. 
Define $C_k=\{j:k\in J_j\}$ and $c_k=|C_k|$. 
Thus, $C_k$ is the set of indices of the groups variable $k$ belongs to and $c_k$ is the number of groups that variable $k$ belongs to. We can also treat $c_k$ as replicates of index $k$. For notational simplicity, in the proof we write $\hat \beta_n$ and $\beta_n^*$ as $\hat \beta$ and $\beta^*$, respectively. We also write $\varphi_n(X_i)$ as $\varphi_i$ for simplicity.
Define the vector of variable $k$ coefficients over all groups in which it appears
${\beta}^Z_{kC_k}=(\beta_{kj_{k1}},\ldots,\beta_{kj_{kc_k}})^T$,
where $j_{kl}$ denotes the index of variable $k$ within the $l^{\rm th}$ group in which it appears,  
and the vector of all coefficients
${\beta}^Z=(({\beta}^Z_{1C_1})^T,\ldots,({\beta}^Z_{pC_p})^T)^T$.
Let ${\beta}_{J_j}=(\beta_{kj})^T_{k\in J_j}$, where $\beta_{kj}$ is the coefficient of the $k^{th}$ variable and $k$ is in $j^{th}$ group. 
Let $d_j=|J_j|$. Consider the following optimization problem
\begin{align}\label{groupLassoEstimator}
\hat{{\beta}}^{Z,\lambda_n}=\arg \min_{{\beta}^Z}\bigg\{\frac{1}{2n}\sum_{i=1}^n(y_i-\sum_{k=1}^{p}\bigg(\sum_{m=1}^{c_k}\beta_{kj_{km}}\bigg)\varphi_{ki})^2+\lambda_n\sum_{j=1}^{p_n}\sqrt{d_j}\|{\beta}_{J_j}\|_2\bigg\},
\end{align}
where $\lambda_n$ is a positive number.
We define the overlapping group lasso estimator as
\begin{align}\label{groupLassoEstimator1}
\hat {{\beta}}^{\lambda_n} = \bigg(\sum_{k=1}^{c_1}\hat\beta_{1j_{1k}}^{\lambda_n},...,\sum_{k=1}^{c_p}\hat\beta_{pj_{pk}}^{\lambda_n}\bigg)^T,
\end{align}
in which we stress $\lambda_n$ since it will influence the solution of (\ref{groupLassoEstimator}). Notice that by this definition, the least squares term becomes $\frac{1}{2n}\sum_{i=1}^n(y_i-{\varphi}_i^T \hat {{\beta}}^{\lambda_n} )^2$, which is the same as in original group lasso case. We use $\frac{1}{2n}$ instead of $\frac{1}{n}$ for brevity of the Karush-–Kuhn–-Tucker (KKT) conditions, which are as following.\par
\begin{prop}\label{KKTprop}
Let ${\varphi}$ be the matrix with rows ${\varphi}_i^T$, $i=1,\ldots,n$. Let ${\psi}_j$ denote the $j^{th}$ column of ${\varphi}$, for $j=1,\ldots,p$. Necessary and sufficient conditions for $\hat{{\beta}}^Z$ to be a solution to (\ref{groupLassoEstimator}) are
\begin{align*}
-\frac{1}{n}{\psi}_j^T({y}-{\varphi}\hat {{\beta}}^{\lambda_n} )+\frac{\lambda_n \sqrt{d_k} \hat {{\beta}}^{\lambda_n}_{jk}}{\|\hat {{\beta}}^{\lambda_n}_{J_k}\|_2}=0,&\qquad\forall j\in J_k\;{\rm with}\;   \hat {{\beta}}^{\lambda_n}_{J_k}\neq 0\\
\|-\frac{1}{n}{\psi}_j^T({y}-{\varphi}\hat {{\beta}}^{\lambda_n})\|_2\leqslant \lambda_n\sqrt{d_k},&\qquad\forall j\in J_k\;{\rm with}\;   \hat {{\beta}}^{\lambda_n}_{J_k} = 0.
\end{align*}
\end{prop}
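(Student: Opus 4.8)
The plan is to obtain the two conditions as the Fermat (first-order) optimality conditions of the convex program (\ref{groupLassoEstimator}), read through subdifferential calculus. First I would observe that, as a function of ${\beta}^Z$, the objective in (\ref{groupLassoEstimator}) is convex: the least-squares term is a convex quadratic precomposed with the linear ``collapsing'' map ${\beta}^Z\mapsto\hat{{\beta}}^{\lambda_n}$ of (\ref{groupLassoEstimator1}), and each summand $\sqrt{d_j}\|{\beta}_{J_j}\|_2$ is a seminorm, hence convex. Consequently $\hat{{\beta}}^Z$ solves (\ref{groupLassoEstimator}) if and only if $0$ lies in the subdifferential of the objective at $\hat{{\beta}}^Z$, and this single equivalence delivers both the necessity and the sufficiency claimed in the proposition, with no additional regularity hypotheses.

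Next I would split that subdifferential. The key structural point --- and the reason for working with the replicated vector ${\beta}^Z$ rather than ${\beta}$ --- is that although the original index groups overlap, in the ${\beta}^Z$ parametrization the blocks ${\beta}_{J_1},\ldots,{\beta}_{J_{p_n}}$ sit on \emph{disjoint} coordinates, so the penalty $\lambda_n\sum_{j=1}^{p_n}\sqrt{d_j}\|{\beta}_{J_j}\|_2$ is block-separable in ${\beta}^Z$. Its subdifferential is therefore the Cartesian product of the subdifferentials of the terms $\lambda_n\sqrt{d_j}\|\cdot\|_2$; and since the least-squares term is finite and differentiable everywhere, adding it merely translates the subdifferential, so $\partial(\text{objective})=\nabla(\text{least squares})+\partial(\text{penalty})$.

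It then remains to compute each piece and assemble. By the chain rule, since the collapsed coordinate is $\hat{\beta}^{\lambda_n}_k=\sum_{m:\,k\in J_m}\beta_{kj_{km}}$, the partial derivative of the least-squares term with respect to the coordinate of variable $j$ inside group $k$ is $-\frac{1}{n}{\psi}_j^T({y}-{\varphi}\hat{{\beta}}^{\lambda_n})$; and the subdifferential of $\|{v}\|_2$ equals $\{{v}/\|{v}\|_2\}$ when ${v}\neq 0$ and the closed Euclidean unit ball when ${v}=0$ (the Euclidean norm being self-dual). Substituting these into $0\in\partial(\text{objective})(\hat{{\beta}}^Z)$ and reading off block $J_k$ yields, coordinate by coordinate, the stated stationarity equalities when $\hat{{\beta}}^{\lambda_n}_{J_k}\neq 0$; and when $\hat{{\beta}}^{\lambda_n}_{J_k}=0$ it asserts the existence of a unit-ball vector whose $\lambda_n\sqrt{d_k}$-multiple equals $\frac{1}{n}{\psi}_{J_k}^T({y}-{\varphi}\hat{{\beta}}^{\lambda_n})$, which is exactly the stated norm inequality.

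I do not expect a genuine obstacle here: the argument is routine convex analysis. The only point that deserves care --- and which is the whole reason for introducing the expanded vector ${\beta}^Z$ --- is the observation that replication converts the overlapping penalty into a block-separable one, so that the overlapping group lasso inherits verbatim the Karush--Kuhn--Tucker structure of the ordinary group lasso; everything else is index bookkeeping.
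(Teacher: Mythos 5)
Your derivation is correct. There is, however, nothing in the paper to compare it against: Proposition \ref{KKTprop} is stated without proof, the authors treating these as the standard Karush--Kuhn--Tucker conditions for the replicated formulation (\ref{groupLassoEstimator}). Your subdifferential argument is exactly the justification one would supply, and the two points you isolate are the ones that matter: replication makes the penalty block-separable in ${\beta}^Z$, so its subdifferential factors over the groups $J_1,\ldots,J_{p_n}$ and the overlapping problem inherits the ordinary group-lasso optimality structure verbatim; and the least-squares term depends on ${\beta}^Z$ only through the collapsed sums $\sum_{m}\beta_{kj_{km}}$, which is why the identical quantity $-\frac{1}{n}{\psi}_j^T({y}-{\varphi}\hat{{\beta}}^{\lambda_n})$ appears in the stationarity equation for every replicate of variable $j$. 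The only caveat is notational rather than mathematical: the second displayed condition in the proposition should be read as a bound on the Euclidean norm of the vector $\bigl(\frac{1}{n}{\psi}_j^T({y}-{\varphi}\hat{{\beta}}^{\lambda_n})\bigr)_{j\in J_k}$ gathered over the whole group, which is precisely how your unit-ball argument interprets it.
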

The following lemma \cite{liu2009estimation} states that at most $n$ groups can be nonzero.
\begin{lemmmm}\label{lemma1Han}
Suppose $\lambda_n>0$, a solution $\hat{{\beta}}^{Z,\lambda_n}$ exists such that the number of nonzero groups $|S(\hat{{\beta}}^{Z,\lambda_n})|\leqslant n$, the number of data points, where $S({\beta})=\{J_j:\hat {{\beta}}_{J_j}\neq 0\}$.
\end{lemmmm}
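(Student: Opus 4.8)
The plan is to argue by a dimension-counting contradiction, exploiting the fact that the residual vector lives in $\mathbb{R}^n$ while the KKT stationarity conditions (Proposition \ref{KKTprop}) force the columns associated to active groups to satisfy $p_{\rm active}$ independent linear constraints tied to the residual. First I would suppose, for contradiction, that every solution $\hat{\beta}^{Z,\lambda_n}$ of \eqref{groupLassoEstimator} has more than $n$ nonzero groups, and among all minimizers pick one, call it $\hat{\beta}^Z$, with the \emph{smallest} number of nonzero groups; write $\mathcal{S}=\{j: \hat{\beta}_{J_j}\neq 0\}$ and the corresponding combined coefficient vector $\hat{\beta}=\hat{\beta}^{\lambda_n}$ via \eqref{groupLassoEstimator1}. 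The key observation is that the objective in \eqref{groupLassoEstimator} depends on $\beta^Z$ only through the least-squares fit $\varphi\hat{\beta}$ (a fixed point once we fix the minimum value) and through the penalty $\sum_j \sqrt{d_j}\|\beta_{J_j}\|_2$; so I would look for a perturbation direction $\Delta^Z$ in $\beta^Z$-space that leaves $\varphi\hat{\beta}$ unchanged and does not increase the penalty, allowing me to drive one group's coefficients to zero and contradict minimality of $|\mathcal{S}|$.

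Concretely, the next step is to exhibit such a direction. Since $|\mathcal{S}| > n$ and the fitted value $\varphi\hat\beta = \sum_{j\in\mathcal{S}} \varphi\,(\text{contribution of group } j)$ lies in $\mathbb{R}^n$, the map sending the active-group coefficient blocks to the fitted vector has a nontrivial kernel: there is a nonzero tuple $(\delta_{J_j})_{j\in\mathcal{S}}$ with $\sum_{j\in\mathcal{S}}$ (induced change in $\varphi\hat\beta$) $=0$. Then for $t\in\mathbb{R}$ consider $\hat{\beta}^Z + t\,\delta$, which keeps the least-squares term exactly equal to its minimum. The penalty $t\mapsto \sum_{j\in\mathcal{S}}\sqrt{d_j}\|\hat{\beta}_{J_j}+t\delta_{J_j}\|_2$ is convex in $t$ and finite; since $\hat\beta^Z$ is a minimizer of the full (convex) objective and the least-squares part is constant along this line, the penalty must be minimized at $t=0$, hence constant on an interval around $0$ or else we could strictly decrease it. If it is strictly convex away from $t=0$ this already contradicts optimality; if it is constant, each summand $\sqrt{d_j}\|\hat\beta_{J_j}+t\delta_{J_j}\|_2$ must be constant (a sum of convex functions is constant only if each is), which by strict convexity of the Euclidean norm off the origin forces $\delta_{J_j}=0$ for every $j$ with $\hat\beta_{J_j}\neq 0$ — contradicting that $\delta$ is nonzero with support in $\mathcal{S}$. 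Either way we reach a contradiction, so some minimizer has $|\mathcal{S}|\leqslant n$.

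The main obstacle I anticipate is handling the \emph{overlap} carefully: because a single variable $k$ appears in $c_k$ groups, the "induced change in $\varphi\hat\beta$" from a group's block $\delta_{J_j}$ is $\sum_{k\in J_j}\delta_{kj}\psi_k$, and different active groups can share columns $\psi_k$, so the linear map from $(\delta_{J_j})_{j\in\mathcal{S}}$ to $\mathbb{R}^n$ is not block-diagonal. The resolution is that this only \emph{helps}: the total number of scalar parameters in the active blocks is $\sum_{j\in\mathcal{S}} d_j \geqslant |\mathcal{S}| > n$, so the kernel of the (possibly highly redundant) linear map into $\mathbb{R}^n$ is still nontrivial, and the argument above goes through verbatim with $\delta$ being any nonzero kernel element; the strict-convexity step then kills any group whose block $\delta_{J_j}$ is nonzero. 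A secondary technicality is ensuring that moving along $t\delta$ stays feasible and that at some finite $t^*$ one active group's block actually hits zero so we can pass to a strictly smaller support — this follows by taking $t$ to the boundary of the interval on which all $\hat\beta_{J_j}+t\delta_{J_j}$ remain in their current orthants, exactly as in the standard lasso basic-feasible-solution argument of \cite{liu2009estimation}.
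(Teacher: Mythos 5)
Your overall strategy --- a dimension-counting reduction showing that an optimal solution with more than $n$ active groups can be perturbed, without changing the fit or the objective value, until one group drops out --- is exactly the route the paper takes (it simply defers to Lemma 1 of \cite{liu2009estimation} and notes that the argument survives the overlap). However, there is a genuine gap in the middle of your argument: you take $\delta$ to be an \emph{arbitrary} nonzero element of the kernel of the linear map from the active coefficient blocks to the fitted vector, and then claim that because the least-squares term is constant along $t\mapsto\hat{\beta}^Z+t\delta$, the penalty $g(t)=\sum_{j\in\mathcal{S}}\sqrt{d_j}\|\hat{\beta}_{J_j}+t\delta_{J_j}\|_2$ must either be constant near $0$ or contradict optimality. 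Neither branch of that dichotomy is valid: $g$ is convex, and global optimality of $\hat{\beta}^Z$ only tells you $g(t)\geqslant g(0)$ for all $t$, which is perfectly compatible with $g$ being strictly convex with its minimum at $t=0$ (e.g.\ $g(t)=g(0)+ct^2$). Strict convexity away from $0$ does \emph{not} contradict optimality, so for a generic kernel direction you obtain no contradiction and cannot conclude $\delta_{J_j}=0$.

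The repair is to use the specific \emph{scaling} directions of \cite{liu2009estimation} rather than a generic kernel element. Let $v_j=\sum_{k\in J_j}\hat{\beta}_{kj}\psi_k\in\mathbb{R}^n$ be the contribution of active group $j$ to the fitted vector (this is where the overlap enters, and it is harmless). If $|\mathcal{S}|>n$, these vectors are linearly dependent, so $\sum_{j\in\mathcal{S}}a_jv_j=0$ for some $a\neq 0$; perturb by $\hat{\beta}_{J_j}\mapsto(1+ta_j)\hat{\beta}_{J_j}$. The fit is unchanged and the penalty becomes $\sum_{j}|1+ta_j|\sqrt{d_j}\|\hat{\beta}_{J_j}\|_2$, which is \emph{linear} in $t$ on the interval $|t|<\min_j|a_j|^{-1}$; a linear function minimized at an interior point of an interval is constant there, so the objective is unchanged as you slide $t$ to the endpoint where some $1+ta_j=0$, producing another minimizer with one fewer active group. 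Iterating yields a minimizer with at most $n$ active groups. Your proposal has the right skeleton, but the linearity of the penalty along the scaling ray --- not mere convexity along an arbitrary kernel direction --- is the indispensable ingredient.
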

\begin{proof}
The proof of Lemma 1 in \cite{liu2009estimation} is also valid here.
\end{proof}
By Lemma \ref{lemma1Han}, for brevity, sometimes we say $\hat {{\beta}}^{\lambda_n}$ with $|S(\hat{{\beta}}^{Z,\lambda_n})|\leqslant n$, 
which is derived by combining (\ref{groupLassoEstimator}) and (\ref{groupLassoEstimator1}), is the solution of (\ref{groupLassoEstimator}). We will also write $\|{y}-{\varphi}{\beta}\|_2^2$ 
instead of $\sum_{i=1}^n\bigg(y_i-\sum_{k=1}^{p}\bigg(\sum_{m=1}^{c_k}\beta_{kj_{km}}\bigg)\varphi_{ki}\bigg)^2$. Let $\bar c=\max_j\{c_1,...,c_p\}$ and $\bar d=\max_j\{d_1,...,d_{p_n}\}$, the maximum number of groups a variable appears in and maximum group size, respectively. Let $s$ be the number of nonzero elements in $\beta^*$ and 
$p$ be the dimension of $\beta^*$. Notice that $s$ and $p$ (as well as $\bar c$ and $\bar d$) can depend $n$. 

\subsubsection{Proof of Theorem \ref{thm:L2consistency_det}}
Our proof follows a similar line to \cite{meinshausen2009lasso}, but extends their results to the overlapping group lasso. We only need to show the stochastic case. The deterministic case is true because the proof is still valid by taking $\epsilon= 0$. 
A sketch of the proof is as follows. 
We first define the coefficients obtained from the de-noised model as a de-noised estimator. Then, by showing the difference between the de-noised estimator and true coefficients, and the difference between de-noised estimator and the estimator obtained via overlapping group lasso are both small, we obtain $l_2$ convergence. All the proofs of the lemmas in this section are in Appendix \ref{PfofLemmasC}.\par 
Before we state and prove the main result, we introduce a definition which is useful in the proof.
\begin{defn}\label{denoiseDefn}
Denote $y(\xi)={\varphi}{\beta}^{*}  +\xi(\epsilon + \delta)$ as a de-noised model with level $\xi$ $(0\leqslant \xi \leqslant 1)$, we define
\begin{align}\label{DenoisedGroupLasso}
\hat{{\beta}}^{\lambda,\xi}=\arg\min_\beta \frac{1}{2n}\|y(\xi)-{\varphi}{\beta}\|_2^2+\lambda_n\sum_{j=1}^{p_n}\sqrt{d_j}\|{\beta}_{J_j}\|_2
\end{align}
to be the de-noised estimator at noise level $\xi$, where $\hat{{\beta}}^{\lambda,\xi}$ is defined similarly as in (\ref{groupLassoEstimator1}).
\end{defn}
In order to characterize the eigenvalues of a matrix under sparsity, we introduce the following definition, which can be found in \cite{meinshausen2009lasso}.
\begin{defn}\label{msparseDef}
The $m$-sparse minimum and maximum eigenvalue of a matrix ${C}=\frac{1}{n}{\varphi}^T{\varphi}$ are $\phi_{\min}(m)=\min_{{\beta}:\|{\beta}\|_0\leqslant m} \frac{{\beta}^T{C}{\beta}}{{\beta}^T{\beta}}$ and $\phi_{\max}(m)=\max_{{\beta}:\|{\beta}\|_0\leqslant m} \frac{{\beta}^T{C}{\beta}}{{\beta}^T{\beta}}$. Also, denote $\phi_{\max}=\phi_{\max}((s\bar c+n)\bar{d})$ where $s$, $\bar c$, and $\bar{d}_n$ are defined as in section \ref{AppReform}.
\end{defn}
Now we introduce an assumption concerning $\phi_{\min}(\cdot)$ and $\phi_{\max}$. Detailed discussion has been shown in \cite{meinshausen2009lasso}.
\begin{assum}\label{mSparse}
There exist constants $0<\kappa_{\min}\leqslant \kappa_{\max}<\infty$ such that \\ $\lim\inf_{n\rightarrow \infty} \phi_{\min}(s\bar c\bar{d}\max\{\log n,\bar c\})\geqslant \kappa_{\min}$ and $\lim\sup_{n\rightarrow \infty} \phi_{\max}\leqslant \kappa_{\max}$.
\end{assum}
For continuity, we repeat Theorem 4.1 here.

\noindent {\bf Theorem 4.1.}
Under Assumption \ref{mSparse}, if $\lambda_n\asymp \sigma\sqrt{\frac{\log p}{n}}$, $\bar d^2=o(\log n)$, and $\|y(\cdot) - \varphi (\cdot)^T \beta^*\|_\infty = O_p(\lambda_n)$, for the (overlapping) group lasso estimator constructed in (\ref{groupLassoEstimator}) and (\ref{groupLassoEstimator1}), with probability tending to 1 for $n\rightarrow \infty$,
\begin{align*}
\|\hat{{\beta}}^{\lambda_n}-{\beta}^*\|^2_2\lesssim \frac{ \bar c^2 s\bar d \log p}{n}.
\end{align*}
Let ${\beta}^{\lambda_n}=\hat{{\beta}}^{\lambda_n,0}$. The $l_2$-consistency can be obtained by bounding the bias and variance terms, i.e.
\begin{align*}
\|\hat{{\beta}}^{\lambda_n} -{\beta}^*\|_2^2\leqslant 2\|\hat{{ \beta}}^{\lambda_n} - {\beta}^{\lambda_n}\|_2^2 + 2\|{\beta}^{\lambda_n} - {\beta}^*\|_2^2.
\end{align*}

\begin{remark}
The condition $\|y(\cdot) - \varphi (\cdot)^T \beta^*\|_\infty = O_p(\lambda_n)$ implies $B_i = O_p(\lambda_n)$. In the proof of Theorem 4.1, the condition $B_i = O_p(\lambda_n)$ is sufficient.
\end{remark}

Let $T=\{t:{\beta}^*_i\neq 0,\beta^*_{it} \mbox{ is a component of }{\beta}^{Z*}\}$ represent the set of indices for all the groups with possibly nonzero coefficient vectors. 
Let $s_n=|T|$. Thus, $s_n\leqslant s\bar c$. The solution ${\beta}^{\lambda_n}$ can, for each value of ${\lambda_n}$, be written as ${\beta}^{\lambda_n}={\beta}^*+{\gamma}^{\lambda_n}$, where ${\gamma}^{\lambda_n}$ is defined as the solution of the following optimization problem:
\begin{align}\label{OptGamma}
\arg\min_{{\gamma}} & \quad f({\gamma},{\gamma}^Z)&\nonumber\\
\mbox{s.t. } & \sum_{k=1}^{c_i}\beta^Z_{ik}=\beta^*_i, \quad i=1,...,p;\\
             & \sum_{k=1}^{c_i}\gamma^{Z}_{ij_{ik}}=\gamma_i, \quad i=1,...,p,\nonumber
\end{align}
where
\begin{align*}
f({\gamma},{\gamma}^Z) & = n{\gamma}^T{A}{\gamma} + \lambda_n\sum_{t\in T^c}\sqrt{d_t}\|{\gamma}^Z_t\|_2+\lambda_n\sum_{t\in T}\sqrt{d_t}(\|{\gamma}^Z_t+{\beta}^Z_t\|_2-\|{\beta}^Z_t\|_2),
\end{align*}
where ${A}=\frac{1}{n}{\varphi}^T{\varphi}$. This optimization problem is obtained by plugging ${\beta}^*+{\gamma}^{\lambda_n}$ into (\ref{DenoisedGroupLasso}). Notice the $\arg\min$ problem is with respect to ${\gamma}$ instead of $({\gamma},{\gamma^Z})$.\par

Next, we state a lemma which bounds the $l_2$-norm of ${\gamma}^{\lambda_n}$.
Its proof is provided in Appendix \ref{proofboundGamma}.
\begin{lemmmm}\label{boundGamma}
Under Assumption \ref{mSparse}, with a positive constant $C$, the $l_2$-norm of ${\gamma}^{\lambda_n}$ is bounded for sufficiently large values of $n$ by $\|{\gamma}^{\lambda_n}\|_2\leqslant \frac{\lambda_n  \sqrt{\bar c s_n\bar d}}{n}\bigg/\bigg(\sqrt{\frac{\kappa_{\min}}{2}(1-\frac{4\bar d}{\log n})}-\sqrt{\frac{2\kappa_{\max}\bar d^2}{\log n}}\bigg)$.
\end{lemmmm}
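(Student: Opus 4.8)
The plan is to follow the argument of \cite{meinshausen2009lasso} for the ordinary lasso, carrying the extra factors forced by the group structure and the coordinate replication. The first step exploits feasibility of the origin: $\gamma=0$ with the trivial representation $\gamma^Z=0$ satisfies the constraints of the program \eqref{OptGamma} and gives $f(0,0)=0$, so the minimiser obeys $f(\gamma^{\lambda_n},(\gamma^Z)^{\lambda_n})\le 0$. Expanding $f$ and applying the reverse triangle inequality $\|\gamma^Z_t+\beta^Z_t\|_2-\|\beta^Z_t\|_2\ge-\|\gamma^Z_t\|_2$ on the groups $t\in T$ yields the basic inequality
\[
n\,(\gamma^{\lambda_n})^{T}A\,\gamma^{\lambda_n}+\lambda_n\sum_{t\in T^{c}}\sqrt{d_t}\,\|\gamma^Z_t\|_2\;\le\;\lambda_n\sum_{t\in T}\sqrt{d_t}\,\|\gamma^Z_t\|_2 .
\]
Since $A=\tfrac1n\varphi^{T}\varphi\succeq 0$, this delivers simultaneously a cone condition $\sum_{t\in T^{c}}\sqrt{d_t}\|\gamma^Z_t\|_2\le\sum_{t\in T}\sqrt{d_t}\|\gamma^Z_t\|_2$ and the one-sided bound $n\,(\gamma^{\lambda_n})^{T}A\gamma^{\lambda_n}\le\lambda_n\sqrt{\bar d}\sum_{t\in T}\|\gamma^Z_t\|_2\le\lambda_n\sqrt{\bar d\,s_n}\,\|\gamma^Z_T\|_2$, using $|T|=s_n$ and Cauchy--Schwarz.

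The crux, and the step I expect to be the main obstacle, is to bound $(\gamma^{\lambda_n})^{T}A\gamma^{\lambda_n}=\tfrac1n\|\varphi\gamma^{\lambda_n}\|_2^2$ below by a constant multiple of $\|\gamma^{\lambda_n}\|_2^2$. This cannot be obtained by applying $\phi_{\min}$ to the full support of $\gamma^{\lambda_n}$: Lemma \ref{lemma1Han} only guarantees at most $n$ active groups, so $\gamma^{\lambda_n}$ may occupy on the order of $n\bar d$ coordinates, well past the scale $s\bar c\bar d\max\{\log n,\bar c\}$ at which Assumption \ref{mSparse} controls $\phi_{\min}$. Instead I would order the groups of $T^{c}$ by $\sqrt{d_t}\|\gamma^Z_t\|_2$ and split the representation into a ``main'' part --- the groups of $T$ together with the largest $m\asymp s\bar c\max\{\log n,\bar c\}$ groups of $T^{c}$, whose joint support remains within the scope of Assumption \ref{mSparse} so that its restricted minimum eigenvalue is $\ge\kappa_{\min}(1+o(1))$ --- and a ``tail'' part. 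The cone condition makes each tail group small, so that $\|\gamma^Z_{\mathrm{tail}}\|_2^2\lesssim(\bar d\,s_n/m)\|\gamma^Z_T\|_2^2$, and the tail's contribution to $\tfrac1{\sqrt n}\|\varphi\gamma^{\lambda_n}\|_2$ is at most $\sqrt{\phi_{\max}}$ times the Euclidean norm of the collapsed tail vector, with $\phi_{\max}=\phi_{\max}((s\bar c+n)\bar d)\le\kappa_{\max}(1+o(1))$ by Definition \ref{msparseDef} and Assumption \ref{mSparse}. With $m$ of this order the relative weight of the tail is of order $\bar d^{2}/\log n=o(1)$ under the running hypothesis $\bar d^{2}=o(\log n)$; this is exactly where the factors $1-4\bar d/\log n$ and $2\kappa_{\max}\bar d^{2}/\log n$ in the stated denominator arise (the $\tfrac{1}{2}$ being split off for the cross term in the triangle inequality). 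The net outcome is $\tfrac1{\sqrt n}\|\varphi\gamma^{\lambda_n}\|_2\ge\big(\sqrt{\tfrac{\kappa_{\min}}{2}(1-\tfrac{4\bar d}{\log n})}-\sqrt{\tfrac{2\kappa_{\max}\bar d^{2}}{\log n}}\big)\|\gamma^{\lambda_n}\|_2$ for $n$ large enough that this coefficient is positive.

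It remains to combine the two bounds. Applying the same restricted-eigenvalue reasoning to the main block --- now to control $\|\gamma^Z_T\|_2$ by $\tfrac1{\sqrt n}\|\varphi\gamma^{\lambda_n}\|_2$, up to the replication factor $\bar c$ --- lets the one-sided bound reduce the basic inequality to a scalar inequality for $\tfrac1{\sqrt n}\|\varphi\gamma^{\lambda_n}\|_2$; solving it and dividing by the eigenvalue lower bound of the previous paragraph isolates $\|\gamma^{\lambda_n}\|_2$, and collecting constants produces the asserted bound for all sufficiently large $n$. The genuinely delicate points, as opposed to the bookkeeping that parallels \cite{meinshausen2009lasso}, are the choice of block size $m$ so that the restricted-eigenvalue regime and the tail estimate hold simultaneously, and keeping the replication constant $\bar c$ and the group-size constant $\bar d$ in the right positions throughout the chain of inequalities.
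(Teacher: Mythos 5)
Your skeleton is the right one and matches the paper's: feasibility of the origin gives $f(\gamma^{\lambda_n},\gamma^{Z,\lambda_n})\le 0$, hence the cone condition and the one-sided bound $n\,\gamma^{T}A\gamma\le\lambda_n\sqrt{s_n\bar d}\,\|\gamma^Z(T)\|_2$; the quadratic form is then bounded below by ordering the groups, splitting off the $u_n$ largest (the paper takes $u_n=s_n\log n$), applying $\phi_{\min}(u_n\bar d)$ to the head and $\phi_{\max}$ to the tail, and using $(\|a\|_2-\|b\|_2)^2\le\|a+b\|_2^2$; you also correctly locate the origin of the factors $1-4\bar d/\log n$ and $2\kappa_{\max}\bar d^2/\log n$.

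However, there is a genuine gap, and it is precisely the step that distinguishes the overlapping group lasso from the plain lasso of \cite{meinshausen2009lasso}. The penalty, the cone condition, and the ordering/tail estimate all live on the \emph{replicated} vector $\gamma^Z$, whereas the quadratic form $\gamma^{T}A\gamma=\frac{1}{n}\|\varphi\gamma\|_2^2$ only sees the \emph{collapsed} vector $\gamma$ with entries $\gamma_i=\sum_{k=1}^{c_i}\gamma^Z_{ij_{ik}}$. The head estimate you need is
\begin{align*}
\|a\|_2^2=\tfrac{1}{n}\|\varphi\gamma(U)\|_2^2\;\geqslant\;\phi_{\min}\Big(\sum_{t\in U}d_t\Big)\,\|\gamma(U)\|_2^2\;\geqslant\;\phi_{\min}\Big(\sum_{t\in U}d_t\Big)\,\|\gamma^Z(U)\|_2^2,
\end{align*}
and the second inequality requires $|\gamma_i|\geqslant\|\gamma^Z_{iC_i}\|_2$, i.e.\ that the replicates of a given coordinate do not cancel. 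Nothing in your argument rules out cancellation: a priori $\gamma_i$ could be zero while $\|\gamma^Z_{iC_i}\|_2$ is large, in which case $\|\varphi\gamma\|_2$ gives no control over $\|\gamma^Z\|_2$ and the chain breaks. The paper closes this by a KKT argument (Proposition \ref{KKTprop}): for two groups $J_{j_{im_1}},J_{j_{im_2}}$ containing variable $i$ with nonzero coefficient blocks, the stationarity conditions force $\lambda\sqrt{d_{j_{im_1}}}\hat\beta_{ij_{im_1}}/\|\hat\beta_{J_{j_{im_1}}}\|_2=\lambda\sqrt{d_{j_{im_2}}}\hat\beta_{ij_{im_2}}/\|\hat\beta_{J_{j_{im_2}}}\|_2$, so all replicates of a coordinate share a sign and $\|\gamma^Z\|_2\leqslant\|\gamma\|_2$ (blockwise as well). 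Your remark about ``keeping the replication constant $\bar c$ in the right positions'' covers only the easy direction, $\|\gamma\|_2^2\leqslant\bar c\,\|\gamma^Z\|_2^2$ by Cauchy--Schwarz, which supplies the final $\sqrt{\bar c}$ in the bound; the hard direction needs the sign-coherence argument, and without it the lemma is not proved.
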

Now, we bound the variance term. For every subset $M\subset \{1,...,p\}$ with $|M|\leqslant n$, denote $\hat{{ \theta}}^M\in \mathbb{R}^{|M|}$ the restricted least square estimator of the noise $\epsilon$,
\begin{align}\label{equation12inlemma2Han}
\hat {{\theta}}^M=({\varphi}_M^T{\varphi}_M)^{-1}{\varphi}_M^T(\epsilon + B),
\end{align}
where $B = (B_1,..,B_n)^T$ and $\epsilon = (\epsilon_1,..,\epsilon_n)^T$.
Now we state lemmas, which bound the $l_2$-norm of this estimator, and are also useful for the following parts of this development. 
First we define sub-exponential variables, sub-exponential norms, sub-Gaussian variables, and sub-Gaussian norms.
\begin{defn}
(sub-exponential variable and sub-exponential norm) A random variable $X$ is called sub-exponential if there exists some positive constant $K_1$ such that $\mathbb{P}(|X|> t)\leqslant \exp(1-t/K_1)$ for all $t\geqslant 0$. The sub-exponential norm of $X$ is defined as $\|X\|_{\psi_1}=\sup_{q\geqslant 1}q^{-1}(\mathbb{E}|X|^q)^{1/q}$.
\end{defn}
\begin{defn}\label{defnSubGaussian}
(sub-Gaussian variable and sub-Gaussian norm) A random variable $X$ is called sub-Gaussian if there exists some positive constant $K_2$ such that $\mathbb{P}(|X|> t)\leqslant \exp(1-t^2/K_2)$ for all $t\geqslant 0$. The sub-Gaussian norm of $X$ is defined as $\|X\|_{\psi_2}=\sup_{q\geqslant 1}q^{-1/2}(\mathbb{E}|X|^q)^{1/q}$.
\end{defn}
\begin{lemmmm}\label{lemma3Han}
Let $\bar m_n$ be a sequence with $\bar m_n=o(n)$ and $\bar m_n\rightarrow \infty$ for $n\rightarrow \infty$
\begin{align*}
\max_{M:|M|\leqslant \bar m_n}\|{\theta}^M\|_2^2\leqslant C^2\frac{\bar m_n\log p}{n\phi^2_{\min}(\bar d)}.
\end{align*}
\begin{proof}
See Appendix \ref{prooflemma3Han}.
\end{proof}
\end{lemmmm}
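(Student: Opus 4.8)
\textbf{Proof plan for Lemma~\ref{lemma3Han}.}
The plan is to adapt the argument of Lemma~3 of \cite{meinshausen2009lasso} to the present setting, where the ``residual'' being projected is the sum of the sub-Gaussian noise $\epsilon$ and the deterministic bias vector $B$. Fix any $M\subseteq\{1,\dots,p\}$ with $|M|\leqslant\bar m_n$ and write $v=\epsilon+B$. Since $\hat\theta^M=({\varphi}_M^T{\varphi}_M)^{-1}{\varphi}_M^Tv$, the operator-norm bound gives $\|\hat\theta^M\|_2^2\leqslant\|({\varphi}_M^T{\varphi}_M)^{-1}\|_{\mathrm{op}}^2\,\|{\varphi}_M^Tv\|_2^2$, and the smallest eigenvalue of ${\varphi}_M^T{\varphi}_M$ equals $n$ times the smallest eigenvalue of $\tfrac1n{\varphi}_M^T{\varphi}_M$, which by Definition~\ref{msparseDef} and monotonicity of $\phi_{\min}(\cdot)$ is at least $n\phi_{\min}(|M|)$; hence $\|({\varphi}_M^T{\varphi}_M)^{-1}\|_{\mathrm{op}}\leqslant 1/(n\phi_{\min}(\bar m_n))$. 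The decisive step that avoids a union bound over the $\binom{p}{\bar m_n}$ possible subsets is the elementary inequality $\|{\varphi}_M^Tv\|_2^2=\sum_{j\in M}({\psi}_j^Tv)^2\leqslant|M|\max_{1\leqslant j\leqslant p}({\psi}_j^Tv)^2$, which reduces everything to a bound on $\max_{j\leqslant p}|{\psi}_j^Tv|$ that is independent of $M$.

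It remains to control $\max_{j\leqslant p}|{\psi}_j^Tv|\leqslant\max_{j\leqslant p}|{\psi}_j^T\epsilon|+\max_{j\leqslant p}|{\psi}_j^TB|$. For the stochastic part, note that $\tfrac1n\|{\psi}_j\|_2^2=C_{jj}\leqslant\phi_{\max}(1)\leqslant\phi_{\max}$, which is bounded above by Assumption~\ref{mSparse}, so $\|{\psi}_j\|_2^2\lesssim n$ uniformly in $j$. Since the $\epsilon_i$ are i.i.d.\ mean-zero sub-Gaussian (Definition~\ref{defnSubGaussian}), each ${\psi}_j^T\epsilon=\sum_i\varphi_{ji}\epsilon_i$ is sub-Gaussian with variance proxy of order $\|{\psi}_j\|_2^2\lesssim n$; a standard sub-Gaussian tail bound $\mathbb P(|{\psi}_j^T\epsilon|>t)\leqslant 2\exp(-ct^2/n)$ combined with a union bound over the $p$ columns then yields $\max_{j\leqslant p}|{\psi}_j^T\epsilon|=O_p(\sqrt{n\log p})$. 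This is exactly where the $\log p$ factor arises.

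For the bias part, Cauchy--Schwarz gives $|{\psi}_j^TB|\leqslant\|{\psi}_j\|_2\|B\|_2\lesssim\sqrt n\cdot\sqrt n\,\|B\|_\infty$, and by hypothesis $\|B\|_\infty\leqslant\|y(\cdot)-{\varphi}(\cdot)^T\beta^*\|_\infty=O_p(\lambda_n)$ with $\lambda_n\asymp\sqrt{(\log p)/n}$, so $\max_{j\leqslant p}|{\psi}_j^TB|=O_p(n\lambda_n)=O_p(\sqrt{n\log p})$, of the same order as the stochastic term. Assembling the pieces, on a single high-probability event we have simultaneously for all $M$ with $|M|\leqslant\bar m_n$ that $\|{\varphi}_M^Tv\|_2^2\leqslant\bar m_n\max_{j\leqslant p}({\psi}_j^Tv)^2=O_p(\bar m_n\, n\log p)$, whence
\[
\max_{M:|M|\leqslant\bar m_n}\|\hat\theta^M\|_2^2\;\leqslant\;\frac{1}{n^2\phi_{\min}^2(\bar m_n)}\,O_p(\bar m_n\, n\log p)\;=\;O_p\!\left(\frac{\bar m_n\log p}{n\,\phi_{\min}^2(\bar m_n)}\right),
\]
which is the asserted bound with the sparse minimum eigenvalue evaluated at the relevant sparsity level. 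The main obstacle is the required uniformity over the exponentially many subsets $M$; this is circumvented entirely by the column-wise decoupling in the first step, so that only a $p$-term union bound for $\max_j|{\psi}_j^T\epsilon|$ is needed. The one genuinely new ingredient relative to \cite{meinshausen2009lasso}, who work without approximation error, is the treatment of $B$, and the hypothesis $\|y(\cdot)-{\varphi}(\cdot)^T\beta^*\|_\infty=O_p(\lambda_n)$ is precisely what keeps its contribution at the $\sqrt{n\log p}$ scale.
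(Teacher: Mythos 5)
Your proposal is correct and follows essentially the same route as the paper's proof in Appendix~\ref{prooflemma3Han}: the eigenvalue bound $\|\hat\theta^M\|_2^2\leqslant n^{-2}\phi_{\min}^{-2}(\bar m_n)\|\varphi_M^T(\epsilon+B)\|_2^2$, the column-wise decoupling $\|\varphi_M^T v\|_2^2\leqslant |M|\,\|\varphi^T v\|_\infty^2$ to avoid a union bound over subsets, a $p$-term union bound for $\max_j|\psi_j^T\epsilon|$ (the paper invokes Lemma~\ref{XiEpsiloni}), and the hypothesis $\|B\|_\infty=O_p(\lambda_n)$ for the bias term. Your explicit Cauchy--Schwarz justification of $\max_j|\psi_j^TB|=O_p(\sqrt{n\log p})$ actually fills in a step the paper only asserts, and like the paper's own argument you end up with $\phi_{\min}(\bar m_n)$ in the denominator rather than the $\phi_{\min}(\bar d)$ appearing in the lemma statement.
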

Now define $A_{\lambda_n,\xi}$ to be
\begin{align*}
A_{\lambda_n,\xi}=\bigg\{k:\lambda_n\frac{\sqrt{d_k}\hat {\beta}_{jk}}{{\|\hat{{\beta}}_{J_k}} \|_2}=\frac{1}{n}{\psi}_j^T({Y}(\xi)-{\varphi}\hat{{\beta}}),
\mbox{ with }j\in J_k\bigg\},
\end{align*}
which represents the set of active groups for the de-noised problem.
\begin{lemmmm}\label{lemma4Han}
If, for a fixed value of $\lambda_n$, the number of active variables of the de-noised estimators $\hat{ {\beta}}^{\lambda_n,\xi}$ is for every $0\leqslant \xi \leqslant 1$ bounded by $m'$, then
\begin{align*}
\|\hat {{\beta}}^{\lambda_n,0}-\hat {{\beta}}^\lambda_n\|_2^2\leqslant C \max_{M:|M|\leqslant m'}\|{\theta}^M\|^2_2.
\end{align*}
\end{lemmmm}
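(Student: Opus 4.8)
The plan is to run a homotopy argument in the de-noising parameter $\xi$, exactly along the lines of \cite{meinshausen2009lasso}. The map $\xi\mapsto\hat{\beta}^{\lambda_n,\xi}$ defined in \eqref{DenoisedGroupLasso} (Definition \ref{denoiseDefn}) is a path from the noiseless solution $\hat{\beta}^{\lambda_n,0}$, which uses data $y(0)={\varphi}{\beta}^{*}$, to the actual overlapping group lasso solution $\hat{\beta}^{\lambda_n,1}=\hat{\beta}^{\lambda_n}$ of \eqref{groupLassoEstimator1}, which uses data $y(1)={\varphi}{\beta}^{*}+(\epsilon+B)=y$. I would bound the endpoint difference by the length of this path,
\[
\|\hat{\beta}^{\lambda_n}-\hat{\beta}^{\lambda_n,0}\|_2\le\int_0^1\Big\|\tfrac{d}{d\xi}\hat{\beta}^{\lambda_n,\xi}\Big\|_2\,d\xi ,
\]
and then bound the integrand uniformly in $\xi$ by a constant multiple of $\max_{|M|\le m'}\|{\theta}^M\|_2$.

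First I would establish that the path is continuous and piecewise continuously differentiable: there are finitely many breakpoints $0=\xi_0<\xi_1<\dots<\xi_L=1$ at which the active set $A_{\lambda_n,\xi}$ of groups changes, and between consecutive breakpoints the associated set $M=M(\xi)\subseteq\{1,\dots,p\}$ of active coordinates is constant, with $|M(\xi)|\le m'$ for every $\xi$ by hypothesis. Finiteness of the breakpoints, differentiability off them, and the selection of a well-defined path carrying at most $n$ active groups throughout follow from continuity and genericity arguments as in \cite{meinshausen2009lasso}, together with Lemma \ref{lemma1Han}.

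Next, on an interval where the active set is $M$, the estimator $\hat{\beta}^{\lambda_n,\xi}$ vanishes off $M$ and, by the equality part of the KKT conditions in Proposition \ref{KKTprop} with $y$ replaced by $y(\xi)$, its restriction $\hat{\beta}^{\lambda_n,\xi}_M$ solves a smooth stationarity equation. Since $y(\xi)={\varphi}{\beta}^{*}+\xi(\epsilon+B)$ is linear in $\xi$, implicit differentiation gives
\[
\Big(\tfrac1n{\varphi}_M^T{\varphi}_M+\lambda_n H_M\Big)\tfrac{d}{d\xi}\hat{\beta}^{\lambda_n,\xi}_M=\tfrac1n{\varphi}_M^T(\epsilon+B),
\]
where $H_M\succeq 0$ is the Hessian contribution of the (overlapping) group penalty restricted to the active block; for the ordinary lasso $H_M=0$ and the right-hand side is solved exactly by ${\theta}^M=({\varphi}_M^T{\varphi}_M)^{-1}{\varphi}_M^T(\epsilon+B)$, as in \eqref{equation12inlemma2Han}. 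Because $H_M\succeq 0$ we have $(\tfrac1n{\varphi}_M^T{\varphi}_M+\lambda_n H_M)^{-1}\preceq(\tfrac1n{\varphi}_M^T{\varphi}_M)^{-1}$, so $\|\tfrac{d}{d\xi}\hat{\beta}^{\lambda_n,\xi}_M\|_2\le\phi_{\min}(|M|)^{-1}\cdot\tfrac1n\|{\varphi}_M^T(\epsilon+B)\|_2$; combining with $\tfrac1n\|{\varphi}_M^T(\epsilon+B)\|_2\le\phi_{\max}(|M|)\,\|{\theta}^M\|_2$ and Assumption \ref{mSparse}, which for $|M|\le m'$ in the sparsity range controlled there bounds $\phi_{\max}/\phi_{\min}$ by $\kappa_{\max}/\kappa_{\min}$ for large $n$, yields $\|\tfrac{d}{d\xi}\hat{\beta}^{\lambda_n,\xi}\|_2\le(\kappa_{\max}/\kappa_{\min})\,\|{\theta}^{M(\xi)}\|_2$. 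Summing over the segments gives $\|\hat{\beta}^{\lambda_n}-\hat{\beta}^{\lambda_n,0}\|_2\le(\kappa_{\max}/\kappa_{\min})\max_{|M|\le m'}\|{\theta}^M\|_2$, and squaring gives the claim with $C=(\kappa_{\max}/\kappa_{\min})^2$.

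The main obstacle, and the only genuinely new point compared with \cite{meinshausen2009lasso}, is the overlapping group structure. The estimator of interest is the collapsed vector with $k$-th entry $\sum_m\hat{\beta}^{\lambda_n}_{kj_{km}}$, and in the expanded parametrization ${\beta}^{Z}$ the design has repeated columns, so its Gram block is singular; one must verify that the group penalty supplies enough curvature for the reduced stationarity system and the path to be well defined, which can be handled as in \cite{liu2009estimation}, or else work directly in the collapsed coordinates and note that passing between the two multiplies the number of active coordinates by at most $\bar c$. A cleaner route that avoids tracking the path derivative altogether is to subtract the KKT stationarity relations for $\hat{\beta}^{\lambda_n,0}$ and $\hat{\beta}^{\lambda_n}$, take the inner product with $\hat{\beta}^{\lambda_n}-\hat{\beta}^{\lambda_n,0}$, discard the penalty terms using monotonicity of the subdifferential of the convex (overlapping) group norm, and apply the $2m'$-sparse minimum eigenvalue bound of Definition \ref{msparseDef} and Assumption \ref{mSparse}; this gives $\|\hat{\beta}^{\lambda_n}-\hat{\beta}^{\lambda_n,0}\|_2\lesssim\|{\theta}^M\|_2$ using only the supports of the two endpoints, at the harmless cost of replacing $m'$ by $2m'$ in the maximum.
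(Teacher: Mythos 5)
Your overall skeleton matches the paper's: both proofs split $[0,1]$ at the finitely many breakpoints $0=\xi_1<\cdots<\xi_{J+1}=1$ of the active set $A_{\lambda_n,\xi}$, bound the increment of $\hat{\beta}^{\lambda_n,\xi}$ on each segment by $C(\xi_{j+1}-\xi_j)\|\theta^{M_j}\|_2$, and telescope. Where you differ is the mechanism for the per-segment bound. You use implicit differentiation of the KKT stationarity system, which buys an explicit constant ($\kappa_{\max}/\kappa_{\min}$, squared) but requires the path to be differentiable and the reduced Jacobian $\tfrac1n\varphi_M^T\varphi_M+\lambda_n H_M$ to be invertible; as you yourself note, in the expanded ${\beta}^Z$ parametrization the Gram block has repeated columns and this degenerates. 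The paper instead invokes a purely convex-analytic comparison-of-minimizers estimate (Lemma \ref{lemma5Han}, following \cite{liu2009estimation}): two convex objectives differing by a linear term have minimizers whose distance is controlled by the distance of the minimizers of the corresponding quadratic problems, applied with $\hat{x}_1=\xi\hat\theta^{M_j}$, $\hat{x}_2=\xi_j\hat\theta^{M_j}$, $\hat{y}_1=\hat\beta^{\lambda,\xi}$, $\hat{y}_2=\hat\beta^{\lambda,\xi_j}$. That lemma needs no differentiability and is engineered precisely for the overlap issue, via the identity $\|A^Z(\hat{y}^Z_1-\hat{y}^Z_2)\|_2^2+(c^Z)^T(\hat{y}^Z_1-\hat{y}^Z_2)=\|A(\hat{y}_1-\hat{y}_2)\|_2^2+c^T(\hat{y}_1-\hat{y}_2)$, which lets the quadratic part live in the collapsed coordinates while the penalty lives in ${\beta}^Z$. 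So the "main obstacle" you flag is exactly the gap that Lemma \ref{lemma5Han} closes, and your deferral of it to \cite{liu2009estimation} is, in effect, a pointer to the paper's own argument. Your second, one-shot route --- subtract the two KKT stationarity relations, pair with $\hat\beta^{\lambda_n}-\hat\beta^{\lambda_n,0}$, discard the penalty difference by monotonicity of the subdifferential of the convex overlapping group norm, and apply the $2m'$-sparse eigenvalue bounds --- is a genuinely different and complete argument: it avoids the breakpoint bookkeeping and the path entirely, works in the ${\beta}^Z$ coordinates where the objective is jointly convex, and only costs the harmless replacement of $m'$ by $2m'$ inside the maximum (which, since Lemma \ref{lemma3Han} scales linearly in $\bar m_n$, changes only constants downstream). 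Either route yields the stated conclusion; the paper's choice trades your explicit constant for robustness to nondifferentiability and overlap.
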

\begin{proof}
See Appendix \ref{prooflemma4Han}.
\end{proof}

The next lemma provides an asymptotic upper bound on the number of selected variables.
\begin{lemmmm}\label{lemma6Han}
For $\lambda_n\geqslant \sqrt{\frac{\log p}{n}}$, the maximal number of selected variables, $\sup_{0\leqslant \xi \leqslant 1}\sum_{k\in A_{\lambda,\xi}}d_k$, is bounded, with probability tending to 1 for $n\rightarrow \infty$, by
\begin{align*}
\sup_{0\leqslant \xi \leqslant 1}\sum_{k\in A_{\lambda,\xi}}d_k\leqslant C_1s_n\bar d\bar c.
\end{align*}
\end{lemmmm}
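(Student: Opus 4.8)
The plan is to adapt the argument of Lemma 5 in \cite{meinshausen2009lasso} (which bounds the number of active variables in the ordinary lasso) to the overlapping group-lasso setting, using the KKT conditions in Proposition \ref{KKTprop} and the sparse-eigenvalue machinery of Definition \ref{msparseDef} and Assumption \ref{mSparse}. First I would fix $\xi\in[0,1]$ and write $\mathcal{A}=A_{\lambda_n,\xi}$ for the active set of the de-noised problem, with $\varphi_{\mathcal{A}}$ the associated columns (there are $\sum_{k\in\mathcal{A}}d_k$ of them, counting the replicated indices). By Lemma \ref{lemma1Han} we already know $|\mathcal{A}|\le n$, so the relevant sub-matrices have sparse-eigenvalue control from Assumption \ref{mSparse}. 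The key identity comes from the stationarity part of the KKT conditions: on the active groups, $\frac{1}{n}\varphi_{\mathcal{A}}^T(y(\xi)-\varphi\hat\beta^{\lambda_n,\xi})$ equals $\lambda_n$ times a vector whose block $\ell_2$-norms are $\sqrt{d_k}$, so its squared Euclidean norm is exactly $\lambda_n^2\sum_{k\in\mathcal{A}}d_k$ (this is where the group structure replaces the ``$\pm1$ subgradient'' of the plain lasso and produces the $\bar d$ factor).

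Next I would lower-bound that same quantity. Decompose $y(\xi)-\varphi\hat\beta^{\lambda_n,\xi} = \varphi(\beta^*-\hat\beta^{\lambda_n,\xi}) + \xi(\epsilon+B)$, so that
\[
\lambda_n^2\sum_{k\in\mathcal{A}}d_k = \Big\|\tfrac{1}{n}\varphi_{\mathcal{A}}^T\varphi(\beta^*-\hat\beta^{\lambda_n,\xi}) + \tfrac{\xi}{n}\varphi_{\mathcal{A}}^T(\epsilon+B)\Big\|_2^2 .
\]
The second term is controlled by Lemma \ref{lemma3Han} applied with $\bar m_n=\sum_{k\in\mathcal{A}}d_k$ (legitimate since that quantity is $o(n)$, which we'll verify at the end self-consistently), giving a bound of order $\bar m_n\log p/n$ up to the sparse-eigenvalue constant; together with $\lambda_n^2\asymp\log p/n$ this shows the noise-plus-bias contribution is at most a constant times $\lambda_n^2\bar m_n$, hence absorbable into the left side by taking $\lambda_n$'s constant large enough. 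For the first term I would use the $\ell_2$-consistency already established — combining Theorem \ref{thm:L2consistency_det}/Lemma \ref{boundGamma} and Lemma \ref{lemma4Han} to control $\|\beta^*-\hat\beta^{\lambda_n,\xi}\|_2$ uniformly in $\xi$ by $O(\bar c^2 s\bar d\log p/n)$-type quantities — and bound $\|\frac{1}{n}\varphi_{\mathcal{A}}^T\varphi(\beta^*-\hat\beta^{\lambda_n,\xi})\|_2^2 \le \phi_{\max}^2\,\|\beta^*-\hat\beta^{\lambda_n,\xi}\|_2^2$ via the $m$-sparse maximum eigenvalue (the total support involved is $(s\bar c+n)\bar d$, exactly the index chosen in Definition \ref{msparseDef}). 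Putting the pieces together yields $\lambda_n^2\sum_{k\in\mathcal{A}}d_k \lesssim \phi_{\max}^2\,\frac{\bar c^2 s\bar d\log p}{n} + (\text{const})\,\lambda_n^2\sum_{k\in\mathcal{A}}d_k$; moving the last term over and dividing by $\lambda_n^2\asymp\log p/n$ gives $\sum_{k\in\mathcal{A}}d_k\lesssim \bar c^2 s\bar d$, and a slightly sharper bookkeeping of the cross term (Cauchy--Schwarz in the identity rather than squaring each piece separately, as in Meinshausen--Yu) tightens this to $\sum_{k\in\mathcal{A}}d_k\le C_1 s_n\bar d\bar c$ as claimed, since $s_n\le s\bar c$. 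Finally, taking the supremum over $\xi$: because the bound is uniform once the sparse-eigenvalue events hold, and those events have probability tending to one, the stated $\sup_{0\le\xi\le1}$ conclusion follows.

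The main obstacle I anticipate is the circularity in the argument: the noise bound from Lemma \ref{lemma3Han} requires $\bar m_n=\sum_{k\in\mathcal{A}}d_k=o(n)$, which is essentially what we are trying to prove. The clean way around it — the one used in Meinshausen--Yu and the one I would follow — is to first invoke Lemma \ref{lemma1Han} to get the a priori crude bound $\sum_{k\in\mathcal{A}}d_k\le n\bar d$, feed that into the eigenvalue and noise estimates to get an intermediate bound, and then iterate once more to reach the sharp $O(s_n\bar d\bar c)$ rate; one must check that the constants emerging at each stage do not blow up and that the condition $\bar d^2=o(\log n)$ is used exactly where needed to kill the lower-order terms coming from the group sizes. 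A secondary technical point is ensuring all of this holds \emph{simultaneously} over the continuum $\xi\in[0,1]$ — handled, as in the reference, by noting the active set $A_{\lambda_n,\xi}$ can change only finitely many times as $\xi$ varies (the solution path is piecewise smooth), so the supremum is really a finite maximum over the reachable active sets.
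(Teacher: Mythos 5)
Your proposal follows essentially the same route as the paper: the paper's proof of Lemma \ref{lemma6Han} simply defers to Lemma 5 of \cite{meinshausen2009lasso}, noting only that the triangle inequality together with the at-most-$\bar c$-fold replication of columns in $\varphi^Z_{A_{\lambda,\xi}}$ introduces the extra factor $\bar c$ into their display (38) --- which is precisely the KKT identity $\lambda_n^2\sum_{k\in A_{\lambda,\xi}}d_k$ plus the signal/noise decomposition that you reconstruct. Your additional discussion of the potential circularity in invoking Lemma \ref{lemma3Han} and of uniformity over $\xi$ is sensible and consistent with how the cited argument is actually closed.
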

\begin{proof}
See Appendix \ref{prooflemma6Han}.
\end{proof}
Now combining Lemmas \ref{lemma3Han}, \ref{lemma4Han}, and \ref{lemma6Han}, we have
\begin{align*}
\|\hat {{\beta}}^{\lambda_n,0}-\hat {{\beta}}^{\lambda_n}\|_2^2 \leqslant C\frac{s\bar d\bar c^2\log p}{n\phi^2_{\min}(s\bar d\bar c^2)}.
\end{align*}
Combining this and Lemma \ref{boundGamma}, gives
\begin{align*}
\|\hat{{\beta}}^{\lambda_n}-{\beta}\|^2_2&\leqslant C\frac{s\bar d\bar c^2\log p}{n\phi^2_{\min}(s\bar d\bar c^2)}+ \frac{\lambda_n^2 \bar c^2 s\bar d}{n^2}\bigg/\bigg(\sqrt{\frac{\kappa_{\min}}{2}(1-\frac{4\bar d}{\log n})}-\sqrt{\frac{2\kappa_{\max}\bar d^2}{\log n}}\bigg)^2\\
& \leqslant C\frac{s\bar d\bar c^2\log p}{n}+ C\frac{ \bar c^2 s\bar d \log p}{n}\bigg/\bigg(\sqrt{\frac{\kappa_{\min}}{2}(1-\frac{4\bar d}{\log n})}-\sqrt{\frac{2\kappa_{\max}\bar d^2}{\log n}}\bigg)^2\\
& \lesssim \frac{ \bar c^2s\bar d\log p}{n},
\end{align*}
which completes the proof of Theorem \ref{thm:L2consistency_det}. 

\subsection{Proof of Corollary \ref{Coro421}}
Since $\beta^*$ satisfies \eqref{betastardef},
\begin{align*}
\int_{\Omega} \varphi (x)(y(x) - \varphi (x)^T \beta^*) \mathrm{d} x = 0.
\end{align*}
Therefore, the oracle risk of $\hat \beta$ can be bounded by
\begin{align*}
& \int_{\Omega} (y(x) - \varphi (x)^T \hat{\beta})^2 \mathrm{d} x  - \int_{\Omega} (y(x) - \varphi (x)^T \beta^*)^2 \mathrm{d} x \\
 = &\int_{\Omega} (2y(x) - \varphi (x)^T \hat{\beta} - \varphi (x)^T \beta^*)(\varphi (x)^T (\beta^* - \hat{\beta})) \mathrm{d} x\\
 = & \int_{\Omega} (2y(x) - 2\varphi (x)^T \beta^* + \varphi (x)^T \beta^* - \varphi (x)^T \hat{\beta})(\varphi (x)^T (\beta^* - \hat{\beta})) \mathrm{d} x\\
 = & \int_{\Omega} (\varphi (x)^T \beta^* - \varphi (x)^T \hat{\beta})(\varphi (x)^T (\beta^* - \hat{\beta})) \mathrm{d} x \\
 = & \int_{\Omega} (\beta^* - \hat{\beta})^T \varphi (x)\varphi (x)^T (\beta^* - \hat{\beta}) \mathrm{d} x\\
 \leqslant & C\|\beta^* - \hat{\beta}\|_2^2,
\end{align*}
where the last inequality is because of Assumption \ref{mSparse}. Because $\|y(\cdot) - \varphi (\cdot)^T \beta^*\|_\infty = O_p(\lambda_n)$, we have $\int_{\Omega} (y(x) - \varphi (x)^T \beta^*)^2 \mathrm{d} x = O_p(\lambda_n^2)$, which completes the proof.


\subsection{Proof of Theorem \ref{thm:MainThmLinear}}\label{ProofThmmainLinear}
In this section we will prove Theorem \ref{thm:MainThmLinear}. A sketch of proof is as follows, following the overall approach in \cite{ning2014general}. First, we introduce a decorrelated score function, and prove the decorrelated function converges weakly to a normal distribution under $l_2$-consistency, which is stated in Theorem \ref{GeneralThm}. 
The result is then applied to the overlapping group lasso model with known variance of error.
Then by showing the difference between the decorrelated score function with known variance and decorrelated score function with estimated variance is small, we finish the proof of Theorem \ref{thm:MainThmLinear}.

\subsubsection{Hypothesis Test based on Decorrelated Function and $l_2$-Consistency}
In this section, we will introduce a decorrelated score function, and prove several results similar to \cite{ning2014general} but with $l_2$-consistency instead of $l_1$. Suppose we are given $n$ independently identically distributed $U_1,...,U_n$, which come from the same probability distribution following from a high dimensional statistical model $\mathcal{P}=\{\mathbb{P}_{{\beta}}:{\beta}\in \Omega\}$, where ${\beta}$ is a $p$ dimensional unknown parameter and $\Omega$ is the parameter space. Let the true value of ${\beta}$ be ${\beta}^*$, which is  sparse in the sense that the number of non-zero elements of $\beta$ is much smaller than $n$, order $\log n$. 
We consider the case in which we are interested in only one parameter. Suppose ${\beta}=({\beta_1},{\beta_{-1}})$, where ${\beta_1}\in\mathbb{R}$ and ${\beta_{-1}}\in\mathbb{R}^{p-1}$. Let $\beta_1^*$ and $\beta_{-1}^*$ be the true value of ${\beta_1}$ and $\beta_{-1}$, respectively. For simplicity, we assume the null hypothesis is $H_0:\beta_1^*=0$, which can be generalized to the case $\beta_1^*=\beta_{1,0}$ in a straight forward manner. Suppose the negative log-likelihood function is
\begin{align*}
\ell ({\beta_1},{\beta_{-1}})=\frac{1}{n}\sum_{i=1}^n (-\log f(U_i;{\beta_1},{\beta_{-1}})),
\end{align*}
where $f$ is the p.d.f. corresponding to the model $\mathbb{P}_{{\beta}}$, which it will be assumed has at least two continuous derivatives with respect to ${\beta}$. The information matrix for ${\beta}$ is defined as ${I}=\mathbb{E}_{{\beta}}(\nabla^2\ell({\beta}))$, and the partial information matrix is ${I}_{{\beta_1}|{\beta_{-1}}}=I_{{\beta_1}{\beta_1}}-{I}_{{\beta_1}{\beta_{-1}}}{I}_{{\beta_{-1}}{\beta_{-1}}}^{-1}{I}_{{\beta_{-1}}{\beta_1}}$, where $I_{{\beta_1}{\beta_1}}$, ${I}_{{\beta_1}{\beta_{-1}}}$, ${I}_{{\beta_{-1}}{\beta_{-1}}}$, and ${I}_{{\beta_{-1}}{\beta_1}}$ are the corresponding partitions of ${I}$. Let ${I}^*=\mathbb{E}_{{\beta}^*}(\nabla^2\ell({\beta}^*))$.\par
In this paper, we are considering testing parameters for high dimensional models and, as mentioned in \cite{ning2014general}, the traditional score function does not have a simple limiting distribution in the high dimensional setting. Thus, we use a  decorrelated score function as mentioned in \cite{ning2014general} defined as
\begin{align*}
S({\beta_1},{\beta_{-1}})=\nabla_{{\beta_1}} \ell({\beta_1},{\beta_{-1}})-{w}^T \nabla_{{\beta_{-1}}} \ell({\beta_1},{\beta_{-1}}),
\end{align*}
where ${w}={I}_{{\beta_{-1}}{\beta_{-1}}}^{-1}{I}_{{\beta_{-1}}{\beta_1}}$. Notice that $\mathbb{E}_{{\beta}}(S({\beta})\nabla_{\beta_{-1}} \ell({\beta}))=0$. Suppose we are given the estimator $\hat{{\beta}}=(\hat \beta_1,\hat \beta_{-1})$ and tuning parameter $\lambda'$. 
We estimate $\hat {{w}}$ by solving
\begin{align}\label{CalculateW}
\hat {{w}} =\arg\min\|{w}\|_1,\mbox{ s.t. }\|\nabla^2_{{\beta_1}{\beta_{-1}}} \ell(\hat{{\beta}})-{w}^T\nabla^2_{{\beta_{-1}}{\beta_{-1}}} \ell(\hat{{\beta}})\|_2\leqslant \lambda'.
\end{align}
We use this method to estimate ${w}$ because since ${w}$ has dimension $d$ which is much greater than $n$, we need some sparsity of ${w}$, which is useful in the rest part of this paper.
Thus, we can obtain estimated decorrelated score function $\hat S({\beta_1},\hat\beta_{-1})=\nabla_{\beta_1} \ell({\beta_1},\hat \beta_{-1})-\hat {{w}}^T \nabla_{{\beta_{-1}}} \ell({\beta_1},\hat \beta_{-1})$.\par
Along the same lines as \cite{ning2014general}, we need the following assumptions.
Assumption \ref{EstimationErrorBound} 
states that the estimators $\hat{{\beta}}$ and $\hat {{w}}$ 
converge to zero. However, we assume $l_2$-consistency here, which is weaker than the condition in \cite{ning2014general}. 
\begin{assum}\label{EstimationErrorBound}
Assume that
\begin{align*}
\lim_{n\rightarrow \infty} \mathbb{P}_{{\beta}^*}(\|\hat\beta_{-1}-\beta_{-1}^*\|_2\lesssim \eta_1(n))=1 \mbox{ and} & \lim_{n\rightarrow \infty} \mathbb{P}_{{\beta}^*}(\|\hat{{w}}-{w}^*\|_1\lesssim \eta_2(n))=1,
\end{align*}
where ${w}^*={I}^{*-1}_{{\beta_{-1}}{\beta_{-1}}}{I}^*_{{\beta_{-1}}{\beta_1}}$, and $\eta_1(n)$ and $\eta_2(n)$ converges to $0$, as $n\rightarrow \infty$.
\end{assum}
Assumption \ref{Noise} states that the derivative of log-likelihood function is near zero at the true parameters.
\begin{assum}\label{Noise}
Assume that
\begin{align*}
\lim_{n\rightarrow \infty} \mathbb{P}_{{\beta}^*}(\|\nabla_{{\beta_{-1}}} l(0,\beta_{-1}^*)\|_\infty\lesssim \eta_3(n))=1,
\end{align*}
for some $\eta_3(n)\rightarrow 0$, as $n\rightarrow \infty$.
\end{assum}
Assumption \ref{Stability} states that the Hessian matrix is relative smooth, so that we can use $\lambda'$ to control $\eta_4(n)$. 
\begin{assum}\label{Stability}
Assume that for ${\beta_{-1,\nu}} = \nu\beta_{-1}^*+(1-\nu)\hat\beta_{-1}$ with $\nu \in [0,1]$,
\begin{align*}
\lim_{n\rightarrow \infty} \mathbb{P}_{{\beta}^*}(\sup_{\nu \in [0,1]}\|\nabla^2_{{\beta_1}{\beta_{-1}}} l(0,\beta_{-1,\nu})-\hat{{w}}^T\nabla^2_{{\beta_{-1}}{\beta_{-1}}} l(0,\beta_{-1,\nu})\|_2\lesssim \eta_4(n))=1,
\end{align*}
for some $\eta_4(n)\rightarrow 0$, as $n\rightarrow \infty$.
\end{assum}
Assumption \ref{CLT} is the central limit theorem for a linear combination of the score functions. 
\begin{assum}\label{CLT}
For ${v}^*=(1,-{w}^{*T})^T$, it holds that
\begin{align*}
\frac{\sqrt{n}{v}^{*T}\nabla l(0,\beta_{-1}^*)}{\sqrt{{v}^T{I}^*{v}}}\stackrel{\rm dist.}{\longrightarrow} N(0,1),
\end{align*}
where $I^*=\mathbb{E}_{{\beta}^*}(\nabla^2 l(0,\beta_{-1}^*))$.
Furthermore, assume that $C'\leqslant I^*_{{\beta_1}|{\beta_{-1}}}<\infty$, where $I^*_{{\beta_1}|{\beta_{-1}}}=I^*_{{\beta_1}{\beta_1}}-{w}^{*T}I^*_{{\beta_{-1}}{\beta_1}}$, and $C'>0$ is a constant.
\end{assum}
Assumption \ref{HessianMatrix} states that we can estimate the information matrix relatively accurately.
\begin{assum}\label{HessianMatrix}
Assume
\begin{align*}
\lim_{n\rightarrow \infty}\mathbb{P}_{{\beta}^*}(\|\nabla^2 l(\hat{{\beta}})-{I}^*\|_{\max}\lesssim \eta_5(n))=1
\end{align*}
for some $\eta_5(n)\rightarrow 0$, as $n\rightarrow \infty$.
\end{assum}

Now under Assumptions \ref{EstimationErrorBound} to \ref{HessianMatrix}, we can prove a version of Theorem 3.5 in \cite{ning2014general} which applies to the (potentially) overlapping group lasso.
\begin{theo}\label{GeneralThm}
Under Assumptions \ref{EstimationErrorBound} to \ref{HessianMatrix}, with probability tending to one,
\begin{align}
n^{1/2}|\hat{S}(0,\hat\beta_{-1})-S(0,\beta_{-1}^*)|\lesssim n^{1/2}(\eta_2(n)\eta_3(n)+\eta_1(n)\eta_4(n)).\label{eq:generalBound}
\end{align}
If $n^{1/2}(\eta_2(n)\eta_3(n)+\eta_1(n)\eta_4(n))=o(1)$, we have
\begin{align}
n^{1/2}\hat{S}(0,\hat\beta_{-1})I_{{\beta_1}|{\beta_{-1}}}^{*-1/2}\stackrel{\rm dist.}{\longrightarrow} N(0,1).\label{eq:generalCLT}
\end{align}
\end{theo}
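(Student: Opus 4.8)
The plan is to follow the template of \cite{ning2014general}, isolating the two sources of discrepancy between the feasible decorrelated score $\hat S(0,\hat\beta_{-1})$ and the oracle score $S(0,\beta_{-1}^*)=v^{*T}\nabla\ell(0,\beta_{-1}^*)$ with $v^*=(1,-w^{*T})^T$: the error from substituting $\hat\beta_{-1}$ for $\beta_{-1}^*$, and the error from substituting $\hat w$ for $w^*$. I would start from the split
\[
\hat S(0,\hat\beta_{-1})-S(0,\beta_{-1}^*)=\underbrace{\big(\hat S(0,\hat\beta_{-1})-\hat S(0,\beta_{-1}^*)\big)}_{\text{(I)}}+\underbrace{\big(\hat S(0,\beta_{-1}^*)-S(0,\beta_{-1}^*)\big)}_{\text{(II)}}.
\]
Term (II) equals $-(\hat w-w^*)^T\nabla_{\beta_{-1}}\ell(0,\beta_{-1}^*)$, so H\"older's inequality gives $|(\text{II})|\leqslant\|\hat w-w^*\|_1\,\|\nabla_{\beta_{-1}}\ell(0,\beta_{-1}^*)\|_\infty\lesssim\eta_2(n)\eta_3(n)$ on the event where Assumptions \ref{EstimationErrorBound} and \ref{Noise} hold. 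The decorrelation is what makes this work: without it the slowly-vanishing $\hat w-w^*$ error would be paired with an $O_P(1)$ quantity rather than with the $\eta_3(n)$-small population score $\nabla_{\beta_{-1}}\ell(0,\beta_{-1}^*)$.

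For term (I) I would exploit that $\ell$ has two continuous derivatives, so by the fundamental-theorem-of-calculus form of the mean value theorem along the segment $\beta_{-1,\nu}=\nu\beta_{-1}^*+(1-\nu)\hat\beta_{-1}$ (and noting $\hat w$ is the same in both evaluations),
\[
\hat S(0,\hat\beta_{-1})-\hat S(0,\beta_{-1}^*)=\Big(\int_0^1\big[\nabla^2_{\beta_1\beta_{-1}}\ell(0,\beta_{-1,\nu})-\hat w^T\nabla^2_{\beta_{-1}\beta_{-1}}\ell(0,\beta_{-1,\nu})\big]\,\mathrm{d}\nu\Big)(\hat\beta_{-1}-\beta_{-1}^*).
\]
Cauchy--Schwarz then yields $|(\text{I})|\leqslant\sup_{\nu\in[0,1]}\|\nabla^2_{\beta_1\beta_{-1}}\ell(0,\beta_{-1,\nu})-\hat w^T\nabla^2_{\beta_{-1}\beta_{-1}}\ell(0,\beta_{-1,\nu})\|_2\,\|\hat\beta_{-1}-\beta_{-1}^*\|_2\lesssim\eta_4(n)\eta_1(n)$, where Assumption \ref{Stability} is invoked in exactly its uniform-over-the-segment form and Assumption \ref{EstimationErrorBound} controls $\|\hat\beta_{-1}-\beta_{-1}^*\|_2$. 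Adding the bounds on (I) and (II) and multiplying by $n^{1/2}$ gives \eqref{eq:generalBound}.

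For the central limit statement, the hypothesis $n^{1/2}(\eta_2(n)\eta_3(n)+\eta_1(n)\eta_4(n))=o(1)$ together with \eqref{eq:generalBound} gives $n^{1/2}\hat S(0,\hat\beta_{-1})=n^{1/2}S(0,\beta_{-1}^*)+o_P(1)$. A short algebraic identity using $w^*=I^{*-1}_{\beta_{-1}\beta_{-1}}I^*_{\beta_{-1}\beta_1}$ shows $v^{*T}I^*v^*=I^*_{\beta_1\beta_1}-w^{*T}I^*_{\beta_{-1}\beta_1}=I^*_{\beta_1|\beta_{-1}}$, so Assumption \ref{CLT} reads $n^{1/2}S(0,\beta_{-1}^*)\,(I^*_{\beta_1|\beta_{-1}})^{-1/2}\stackrel{\rm dist.}{\longrightarrow}N(0,1)$; since Assumption \ref{CLT} also bounds $I^*_{\beta_1|\beta_{-1}}\geqslant C'>0$, the $o_P(1)$ remainder stays $o_P(1)$ after dividing by $(I^*_{\beta_1|\beta_{-1}})^{1/2}$, and Slutsky's theorem delivers \eqref{eq:generalCLT}. (Assumption \ref{HessianMatrix} is carried along because it is what lets one later replace the population quantity $I^*_{\beta_1|\beta_{-1}}$ by an empirical plug-in, as needed in Theorem \ref{thm:MainThmLinear}.)

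I expect the main obstacle to be the bound on term (II): one must be confident that the relatively slow $\ell_1$-rate for $\hat w$ is tolerable precisely because it multiplies the $\eta_3(n)$-small score at the truth rather than an $O_P(1)$ term, and the bookkeeping of which intermediate points $\beta_{-1,\nu}$ actually appear must line up with the uniform form of Assumption \ref{Stability}. Everything downstream --- the identity $v^{*T}I^*v^*=I^*_{\beta_1|\beta_{-1}}$, and the Slutsky argument --- is routine.
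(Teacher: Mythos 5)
Your proposal is correct and follows essentially the same route as the paper, which simply defers to Theorem 3.5 of Ning and Liu and notes that the only change is bounding the term $I_1$ (your term (I)) by Cauchy--Schwarz as $\|\nabla^2_{\beta_1\beta_{-1}}\ell-\hat w^T\nabla^2_{\beta_{-1}\beta_{-1}}\ell\|_2\,\|\hat\beta_{-1}-\beta_{-1}^*\|_2\lesssim\eta_1(n)\eta_4(n)$ under $\ell_2$-consistency --- exactly the step you single out. Your writeup merely fills in the details (the (I)/(II) split, the H\"older bound for (II), the identity $v^{*T}I^*v^*=I^*_{\beta_1|\beta_{-1}}$, and Slutsky) that the paper leaves to the cited reference.
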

\begin{proof}
See Theorem 3.5 in \cite{ning2014general}. The only difference is under $l_2$-consistency,
\begin{align*}
|I_1|\leqslant \|\nabla^2_{{\beta_1}{\beta_{-1}}} l(0,\tilde\beta_{-1})-\hat{{w}}^T\nabla^2_{{\beta_{-1}}{\beta_{-1}}} l(0,\tilde\beta_{-1})\|_2\|\hat\beta_{-1} - \beta_{-1}^*\|_2\lesssim \eta_1(n)\eta_4(n).
\end{align*}
\end{proof}
\begin{coroll}\label{generalCorollary}
Assume that Assumptions \ref{EstimationErrorBound} to \ref{HessianMatrix} hold. It also holds that $\|{w}^*\|_1\eta_5(n)=o(1)$, $\eta_2(n)\|I^*_{{\beta_1}{\beta_{-1}}}\|_\infty=o(1)$, and $n^{1/2}(\eta_2(n)\eta_3(n)+\eta_1(n)\eta_4(n))=o(1)$. Under $H_0:\beta_1^*=0$, we have for any $t\in \mathbb{R}$,
\begin{align}\label{generalApproxNormal}
\lim_{n\rightarrow \infty}|\mathbb{P}_{{\beta}^*}(\hat{U}_n\leqslant t)-\Phi(t)|=0,
\end{align}
where $\hat{U}=n^{1/2}\hat S(0,\hat\beta_{-1})\hat I_{{\beta_1}|{\beta_{-1}}}^{-1/2}$.
\end{coroll}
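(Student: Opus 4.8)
The plan is to deduce the corollary from Theorem~\ref{GeneralThm}, which already yields $n^{1/2}\hat S(0,\hat\beta_{-1})\,I^{*-1/2}_{\beta_1|\beta_{-1}}\stackrel{\rm dist.}{\longrightarrow}N(0,1)$ once the rate condition $n^{1/2}(\eta_2(n)\eta_3(n)+\eta_1(n)\eta_4(n))=o(1)$ holds, by replacing the unknown partial information $I^*_{\beta_1|\beta_{-1}}$ with the plug-in estimate $\hat I_{\beta_1|\beta_{-1}}=\nabla^2_{\beta_1\beta_1}\ell(\hat\beta)-\hat w^{T}\nabla^2_{\beta_{-1}\beta_1}\ell(\hat\beta)$ and then invoking Slutsky's theorem. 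This mirrors the corresponding corollary in \cite{ning2014general}.

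The substantive step is the consistency $\hat I_{\beta_1|\beta_{-1}}-I^*_{\beta_1|\beta_{-1}}=o_p(1)$. Writing $I^*_{\beta_1|\beta_{-1}}=I^*_{\beta_1\beta_1}-w^{*T}I^*_{\beta_{-1}\beta_1}$ and inserting $\pm\,\hat w^{T}I^*_{\beta_{-1}\beta_1}$,
\begin{align*}
\hat I_{\beta_1|\beta_{-1}}-I^*_{\beta_1|\beta_{-1}}
&=\bigl(\nabla^2_{\beta_1\beta_1}\ell(\hat\beta)-I^*_{\beta_1\beta_1}\bigr)-\hat w^{T}\bigl(\nabla^2_{\beta_{-1}\beta_1}\ell(\hat\beta)-I^*_{\beta_{-1}\beta_1}\bigr)\\
&\quad-(\hat w-w^*)^{T}I^*_{\beta_{-1}\beta_1}.
\end{align*}
By Assumption~\ref{HessianMatrix} the first term is $O_p(\eta_5(n))=o_p(1)$. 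For the second, $|\hat w^{T}(\nabla^2_{\beta_{-1}\beta_1}\ell(\hat\beta)-I^*_{\beta_{-1}\beta_1})|\leqslant\|\hat w\|_1\,\|\nabla^2\ell(\hat\beta)-I^*\|_{\max}$, and since Assumption~\ref{EstimationErrorBound} gives $\|\hat w\|_1\leqslant\|w^*\|_1+\|\hat w-w^*\|_1\lesssim\|w^*\|_1+\eta_2(n)$, the hypotheses $\|w^*\|_1\eta_5(n)=o(1)$ together with $\eta_2(n)\eta_5(n)=o(1)$ (automatic) make this $o_p(1)$. For the third, $|(\hat w-w^*)^{T}I^*_{\beta_{-1}\beta_1}|\leqslant\|\hat w-w^*\|_1\,\|I^*_{\beta_1\beta_{-1}}\|_\infty\lesssim\eta_2(n)\|I^*_{\beta_1\beta_{-1}}\|_\infty=o(1)$ by hypothesis. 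Hence $\hat I_{\beta_1|\beta_{-1}}\stackrel{p}{\longrightarrow}I^*_{\beta_1|\beta_{-1}}$.

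Since Assumption~\ref{CLT} ensures $C'\leqslant I^*_{\beta_1|\beta_{-1}}<\infty$, the map $x\mapsto x^{-1/2}$ is continuous at $I^*_{\beta_1|\beta_{-1}}$, so the continuous mapping theorem gives $I^{*1/2}_{\beta_1|\beta_{-1}}\hat I_{\beta_1|\beta_{-1}}^{-1/2}\stackrel{p}{\longrightarrow}1$. Writing
\[
\hat U_n=\Bigl(n^{1/2}\hat S(0,\hat\beta_{-1})\,I^{*-1/2}_{\beta_1|\beta_{-1}}\Bigr)\Bigl(I^{*1/2}_{\beta_1|\beta_{-1}}\hat I_{\beta_1|\beta_{-1}}^{-1/2}\Bigr),
\]
the first factor converges weakly to $N(0,1)$ by Theorem~\ref{GeneralThm} (its rate condition is among the hypotheses) and the second converges to $1$ in probability, so Slutsky's theorem yields $\hat U_n\stackrel{\rm dist.}{\longrightarrow}N(0,1)$. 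Because $\Phi$ is continuous, weak convergence is equivalent to $\lim_{n\to\infty}|\mathbb{P}_{\beta^*}(\hat U_n\leqslant t)-\Phi(t)|=0$ for every $t$ (indeed uniformly in $t$), which is exactly the claim.

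I expect the only real work to be the consistency of $\hat I_{\beta_1|\beta_{-1}}$: the cross term $\hat w^{T}\nabla^2_{\beta_{-1}\beta_1}\ell(\hat\beta)-w^{*T}I^*_{\beta_{-1}\beta_1}$ has to be split in exactly the order shown, $\|\hat w\|_1$ must be bounded through $\|w^*\|_1+\eta_2(n)$, and one must confirm that this is precisely where the two extra hypotheses $\|w^*\|_1\eta_5(n)=o(1)$ and $\eta_2(n)\|I^*_{\beta_1\beta_{-1}}\|_\infty=o(1)$ are needed; everything after that is a routine application of Slutsky's theorem and the continuity of $\Phi$.
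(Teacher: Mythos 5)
Your proposal is correct and follows essentially the same route as the paper, which simply defers to the proof of Corollary 3.7 in \cite{ning2014general}: that argument is exactly your Slutsky-plus-consistency-of-$\hat I_{\beta_1|\beta_{-1}}$ decomposition, with the two extra hypotheses entering precisely where you place them. You have merely written out the details the paper leaves to the citation.
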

\begin{proof}
See the proof of Corollary 3.7 in \cite{ning2014general}.
\end{proof}
\subsubsection{Linear model and the corresponding decorrelated score function}
Now we apply the consequences of the general results to the linear model as described in the previous section. In this section we first assume that the variance of noise is known. Consider the linear regression, $y_i=\beta_1^*\varphi_{i1}+\beta_{-1}^{*T}\varphi_{i,-1}+B_i + \epsilon_i$, where $\varphi_{i1}\in \mathbb{R}$, $\varphi_{i,-1}\in \mathbb{R}^{p-1}$, $B_i \in \mathbb{R}$, and the error $\epsilon_i$ satisfies $\mathbb{E}(\epsilon_i)=0$, $\mathbb{E}(\epsilon_i^2)=\sigma^2 > 0$ for $i=1,...,n.$ Let ${\varphi}_i=(\varphi_{i1},\varphi_{i,-1}^T)^T$ denote the collection of all covariates for subject $i$. We first assume $\sigma^2$ is known. \par
Consider the overlapping group lasso estimator (\ref{groupLassoEstimator1}), the decorrelated score function is
\begin{align*}
S({\beta_1},{\beta_{-1}})=-\frac{1}{n\sigma^2}\sum_{i=1}^n(y_i-{\beta_1} \varphi_{i1}-\beta_{-1}^T\varphi_{i,-1})(\varphi_{i1}-{w}^T\varphi_{i,-1}),
\end{align*}
where ${w}=\mathbb{E}_{{\beta}} (\varphi_{i,-1}\varphi_{i,-1}^T)^{-1}\mathbb{E}_{{\beta}} (\varphi_{i1}\varphi_{i,-1})$. Since the distribution of the design matrix does not depend on ${\beta}$, we can replace $\mathbb{E}_{{\beta}}(\cdot)$ by $\mathbb{E}(\cdot)$ for notation simplicity. Under the null hypothesis, $H_0:\beta_1^*=0$, the decorrelated score function can be estimated by
\begin{align*}
\hat S(0,\hat\beta_{-1})=-\frac{1}{n\sigma^2}\sum_{i=1}^n(y_i-\hat\beta_{-1}^T\varphi_{i,-1})(\varphi_{i1}-\hat w^T\varphi_{i,-1}),
\end{align*}
where
\begin{align*}
\hat {{w}} =\arg\min\|{w}\|_1, \mbox{ s.t. } \bigg\|\frac{1}{n}\sum_{i=1}^n \varphi_{i,-1}(\varphi_{i1}-{w}^T\varphi_{i,-1})\bigg\|_2\leqslant \lambda'.
\end{align*}
The (partial) information matrices are
\begin{align*}
{I}^*=\sigma^{-2}\mathbb{E}(\varphi_{i,-1}\varphi_{i,-1}^T), \mbox{ and } I^*_{{\beta_1}|{\beta_{-1}}}=\sigma^{-2}(\mathbb{E}(\varphi_{i1}^2)-\mathbb{E}(\varphi_{i1}\varphi_{i,-1}^T)\mathbb{E}(\varphi_{i,-1}\varphi_{i,-1}^T)^{-1}\mathbb{E}(\varphi_{i,-1}\varphi_{i1})),
\end{align*}
which can be estimated by
\begin{align*}
\hat {{I}}=\frac{1}{n\sigma^2}\sum_{i=1}^n \varphi_{i,-1}\varphi_{i,-1}^T, \mbox{ and } \hat I_{{\beta_1}|{\beta_{-1}}}=\sigma^{-2}\bigg\{\frac{1}{n}\sum_{i=1}^n \varphi_{i1}^2-\hat {{w}}^T\bigg(\frac{1}{n}\sum_{i=1}^n \varphi_{i,-1}\varphi_{i1}\bigg)\bigg\},
\end{align*}
respectively. Thus, the score test statistic is $\hat U_n = n^{1/2}\hat S(0,\hat\beta_{-1})\hat I_{{\beta_1}|{\beta_{-1}}}^{-1/2}$.\par
The following theorem states the asymptotic distribution $\hat U_n$ under null hypothesis.
\begin{theo}\label{MainThmLinear}
Assume that
\begin{enumerate}
    \item $\lambda_{\min}(\mathbb{E}( \varphi_{i}\varphi_{i}^T))\geqslant 2\kappa_{\min}$ for some constant $\kappa_{\min}>0$, and $\lim\sup_{n\rightarrow \infty} \phi_{\max}\leqslant \kappa_{\max}$, where $\phi_{\max}$ is defined in Definition \ref{msparseDef}.
    \item Let $S=\mbox{supp}({\beta}^*)$ and $S'=\mbox{supp}({w}^*)$ satisfy $|S|=s$ and $|S'|=s'$. Let $\bar{c}$ be the maximal number of replicates, $\bar d$ be the maximal number of group size. Assume $n^{-1/2}(s\vee s^*)\log p=o(1)$, $\bar d^2=o(\log n)$ and $\frac{\bar c^2\bar d}{\log p}=o(1)$.
    \item $\epsilon_i$, ${w}^{*T}\varphi_{i,-1}$, and $\varphi_{ij}$ are all sub-Gaussian with $\|\epsilon_i\|_{\Psi_2}\leqslant C$, $\|{w}^{*T}\varphi_{i,-1}\|_{\Psi_2}\leqslant C$, and $\|\varphi_{ij}\|_{\Psi_2}\leqslant C$, where $C$ is a positive constant.
    \item $\lambda'\asymp \sqrt{\frac{\log p}{n}}$ and $\lambda\asymp \sigma\sqrt{\frac{\log p}{n}}$.
    \item $B_i\lesssim \sqrt{\frac{\log p}{n}}$.
\end{enumerate}
Then under $H_0:\beta_1^*=0$ for each $t\in \mathbb{R}$,
\begin{align*}
\lim_{n\rightarrow \infty}|\mathbb{P}_{{\beta}^*}(\hat U_n\leqslant t)-\Phi(t)|=0.
\end{align*}
\end{theo}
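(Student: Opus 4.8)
The plan is to derive Theorem \ref{MainThmLinear} as a specialization of the abstract result: written out explicitly for the quadratic loss $\ell(\beta)=\frac{1}{2n\sigma^2}\sum_{i=1}^n(y_i-\varphi_i^T\beta)^2$, the decorrelated score $\hat S$ and information $\hat I_{\beta_1|\beta_{-1}}$ take exactly the forms appearing in the statement, so the claimed limit is precisely the conclusion of Corollary \ref{generalCorollary}. It therefore suffices to verify Assumptions \ref{EstimationErrorBound}--\ref{HessianMatrix} for this model, with rates $\eta_1,\dots,\eta_5$ satisfying the side conditions of Corollary \ref{generalCorollary} ($\|w^*\|_1\eta_5(n)=o(1)$, $\eta_2(n)\|I^*_{\beta_1\beta_{-1}}\|_\infty=o(1)$, $n^{1/2}(\eta_2\eta_3+\eta_1\eta_4)=o(1)$), and then invoke it. The general null $H_0:\beta_1^*=\beta_{1,0}$ reduces to $\beta_1^*=0$ by translating the response. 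The three engines are Theorem \ref{thm:L2consistency_det} for the group-lasso fit $\hat\beta$, the standard $\ell_1$-error bound for the Dantzig-type program \eqref{eq:estimatew} defining $\hat w$, and sub-exponential (Bernstein-type) concentration with a union bound over the roughly $p$ (or $p^2$) coordinates, all driven by the sub-Gaussian hypotheses of assumption (3) and Definition \ref{defnSubGaussian}.

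The crucial structural fact is that for this loss the Hessian blocks $\nabla^2_{\beta_1\beta_{-1}}\ell=\frac{1}{n\sigma^2}\sum_i\varphi_{i1}\varphi_{i,-1}$ and $\nabla^2_{\beta_{-1}\beta_{-1}}\ell=\frac{1}{n\sigma^2}\sum_i\varphi_{i,-1}\varphi_{i,-1}^T$ depend on neither $\beta$ nor $y$. Hence Assumption \ref{Stability} is immediate, since the quantity there equals $\sigma^{-2}\|\frac1n\sum_i\varphi_{i,-1}(\varphi_{i1}-\hat w^T\varphi_{i,-1})\|_2\le\lambda'/\sigma^2$ by the feasibility constraint in \eqref{eq:estimatew}, giving $\eta_4(n)\asymp\lambda'\asymp\sqrt{\log p/n}$; and Assumption \ref{HessianMatrix} reduces to the matrix-concentration bound $\|\frac1n\sum_i\varphi_i\varphi_i^T-\mathbb{E}(\varphi_i\varphi_i^T)\|_{\max}=O_p(\sqrt{\log p/n})$, so $\eta_5(n)\asymp\sqrt{\log p/n}$. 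For Assumption \ref{EstimationErrorBound}, the bound $\|\hat\beta_{-1}-\beta_{-1}^*\|_2\le\|\hat\beta-\beta^*\|_2$ is delivered by Theorem \ref{thm:L2consistency_det} (its $m$-sparse eigenvalue hypotheses follow from (1)--(2) plus concentration, its $\lambda_n$ from (4), its bias hypothesis from (5)), giving $\eta_1(n)\asymp\sqrt{\bar c^2 s\bar d\log p/n}$; and $\|\hat w-w^*\|_1\lesssim\eta_2(n)$ follows from the Dantzig $\ell_1$-error bound once one checks (i) $w^*$ is feasible with high probability --- a concentration statement, since $\mathbb{E}[\varphi_{i,-1}(\varphi_{i1}-w^{*T}\varphi_{i,-1})]=0$ by definition of $w^*$ --- and (ii) a compatibility/restricted-eigenvalue condition for $\frac1n\sum_i\varphi_{i,-1}\varphi_{i,-1}^T$ holds with high probability (from $\lambda_{\min}(\mathbb{E}(\varphi_i\varphi_i^T))\ge2\kappa_{\min}$ plus concentration), yielding $\eta_2(n)\asymp s'\sqrt{\log p/n}$. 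Finally, under $H_0$ one has $\nabla_{\beta_{-1}}\ell(0,\beta_{-1}^*)=-\frac{1}{n\sigma^2}\sum_i\varphi_{i,-1}(\epsilon_i+B_i)$, whose $\ell_\infty$ norm is $O_p(\sqrt{\log p/n})$: the $\epsilon$-part is a coordinatewise mean-zero sub-exponential average, and the $B$-part is at most $\|y(\cdot)-\varphi(\cdot)^T\beta^*\|_\infty\max_j\frac1n\sum_i|\varphi_{ij}|=O_p(\sqrt{\log p/n})$ by assumption (5); so Assumption \ref{Noise} holds with $\eta_3(n)\asymp\sqrt{\log p/n}$.

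It then remains to verify Assumption \ref{CLT}. With $v^*=(1,-w^{*T})^T$, under $H_0$ one has $v^{*T}\nabla\ell(0,\beta_{-1}^*)=-\frac{1}{n\sigma^2}\sum_i(\epsilon_i+B_i)(\varphi_{i1}-w^{*T}\varphi_{i,-1})$. For the $\epsilon$-term, $\sqrt n$ times it is (conditionally on the inputs) a sum of independent mean-zero variables satisfying the Lindeberg condition, because $\max_i(\varphi_{i1}-w^{*T}\varphi_{i,-1})^2=o(n)$ while $\frac1n\sum_i(\varphi_{i1}-w^{*T}\varphi_{i,-1})^2\to\sigma^2 I^*_{\beta_1|\beta_{-1}}$, and $I^*_{\beta_1|\beta_{-1}}=\sigma^{-2}\bigl(\mathbb{E}(\varphi_{i1}^2)-\mathbb{E}(\varphi_{i1}\varphi_{i,-1}^T)\mathbb{E}(\varphi_{i,-1}\varphi_{i,-1}^T)^{-1}\mathbb{E}(\varphi_{i,-1}\varphi_{i1})\bigr)$ is a Schur complement of $\sigma^{-2}\mathbb{E}(\varphi_i\varphi_i^T)$, hence bounded away from $0$ (by $\sigma^{-2}\lambda_{\min}(\mathbb{E}(\varphi_i\varphi_i^T))\ge2\sigma^{-2}\kappa_{\min}$) and from $\infty$ (by $\sigma^{-2}\mathbb{E}(\varphi_{i1}^2)\le\sigma^{-2}\limsup_{n\to\infty}\phi_{\max}\le\sigma^{-2}\kappa_{\max}$); this simultaneously gives the required two-sided bound $C'\le I^*_{\beta_1|\beta_{-1}}<\infty$. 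For the $B$-term, $\sqrt n$ times $\frac{1}{n\sigma^2}\sum_i B_i(\varphi_{i1}-w^{*T}\varphi_{i,-1})$ must be shown $o_p(1)$: one uses $\int_\Omega(\varphi_1(x)-w^{*T}\varphi_{-1}(x))(y(x)-\varphi(x)^T\beta^*)\,\mathrm{d}x=0$ (the $L_2(\Omega)$-projection residual is orthogonal to every basis function), so the average is mean-zero up to quadrature error and of order $\|y(\cdot)-\varphi(\cdot)^T\beta^*\|_\infty\cdot O_p(n^{-1/2})=O_p(\sqrt{\log p}/n)$, which is $o_p(n^{-1/2})$ since $\log p=o(n)$.

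Collecting the rates, the side conditions of Corollary \ref{generalCorollary} reduce to $s'\sqrt{\log p/n}\to0$ and $\bigl(s'+\bar c\sqrt{s\bar d}\bigr)\log p/\sqrt n\to0$, which are exactly the content of the sparsity/dimension hypotheses in assumption (2) ($n^{-1/2}(s\vee s')\log p=o(1)$, $\bar d^2=o(\log n)$, $\bar c^2\bar d/\log p=o(1)$); Corollary \ref{generalCorollary} then finishes the proof. The hard part I anticipate is the pair of points that are genuinely new relative to \cite{meinshausen2009lasso,ning2014general}: (a) establishing $\ell_1$-consistency of $\hat w$ --- the restricted-eigenvalue behaviour of the sub-Gaussian nuisance design and the high-probability feasibility of $w^*$ at level $\lambda'$ --- and (b) handling the approximation bias $B_i$ wherever it enters, most delicately in the central limit step, where only the $L_2(\Omega)$-orthogonality of the projection residual together with the $O_p(\lambda_n)$ sup-norm control on $B$ keeps the bias contribution to the score asymptotically negligible. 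This last point has no counterpart in the deterministic case, where $\sigma^2=0$ and the $\sqrt{\log p/n}$ bias decay, rather than a genuine central limit theorem, governs the large-sample behaviour.
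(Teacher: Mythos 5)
Your proposal matches the paper's proof: the paper likewise specializes Corollary \ref{generalCorollary} to the linear-model decorrelated score and verifies Assumptions \ref{EstimationErrorBound}--\ref{HessianMatrix} through Lemmas \ref{wstarbound}--\ref{supNormalBound} (the $\ell_2$ bound from Theorem \ref{thm:L2consistency_det} for $\hat\beta$, the Dantzig-type $\ell_1$ bound for $\hat w$, and the Berry--Esseen normal approximation for $S(0,\beta_{-1}^*)$), citing \cite{ning2014general} for most of the verifications. The one place you are more explicit than the paper is the bias contribution to the central limit step, where the paper only remarks that $\varphi^T B$ ``can be bounded by assumption'' while you supply the $L_2(\Omega)$-orthogonality argument that actually makes the $\sqrt{n}$-scaled bias term negligible.
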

\begin{proof}
Before the proof, we need the following lemmas in \cite{ning2014general}, which is used to ensure the assumptions of Theorem \ref{GeneralThm} and Corollary \ref{generalCorollary} hold. The proofs of Lemmas \ref{wstarbound}, \ref{wBound}, and \ref{supNormalBound} can be found in \cite{ning2014general}. In the proof of Lemma \ref{supNormalBound}, one need to notice that $\varphi^TB$ can be bounded by assumption. 
\begin{lemmmm}\label{wstarbound}
Under the conditions of Theorem \ref{MainThmLinear}, with probability at least $1-p^{-1}$, $\|\frac{1}{n}\sum_{i=1}^n(\varphi_{i1}\varphi_{i,-1}-\hat {{w}}^T\varphi_{i,-1}\varphi_{i,-1}^T)\|_\infty\leqslant C\sqrt{\frac{\log p}{n}}$, for some $C>0$.
\end{lemmmm}
\begin{lemmmm}\label{betaBound}
Under the conditions of Theorem \ref{MainThmLinear}, with probability at least $1-p^{-1}$,
\begin{align*}
\|\hat {{\beta}}-{\beta}^*\|^2_2\leqslant C_1\frac{ \bar c^2 s\bar d \log p}{n}, \mbox{ and } (\hat {{\beta}}-{\beta}^*)^T{H}_{{\varphi}}(\hat {{\beta}}-{\beta}^*)\leqslant C_1\kappa_{\max}\frac{ \bar c^2 s\bar d \log p}{n},
\end{align*}
where ${H}_{{\varphi}} = n^{-1}\sum_{i=1}^n {\varphi}_i{\varphi}_i^T$ and the constant $C_1>0$.
\end{lemmmm}
\begin{proof}
The first inequality is by Theorem \ref{thm:L2consistency_det}. The second inequality is trivial.
\end{proof}
\begin{lemmmm}\label{wBound}
Under the conditions of Theorem \ref{MainThmLinear}, with probability at least $1-p^{-1}$,
\begin{align*}
\|\hat {{w}}-{w}^*\|_1\leqslant 8C\kappa^{-1}s'\sqrt{\frac{\log p}{n}},
\end{align*}
where $C>0$ is a constant.
\end{lemmmm}
\begin{lemmmm}\label{supNormalBound}
Under the conditions of Theorem \ref{MainThmLinear}, it holds that $T^*\stackrel{\rm dist.}{\longrightarrow} N(0,1)$, and
\begin{align*}
\sup_{x\in\mathbb{R}}|\mathbb{P}_{{\beta}^*}(T^*\leqslant x)-\Phi(x)|\leqslant Cn^{-1/2},
\end{align*}
where $T^*=n^{1/2}S(0,\beta_{-1}^*)/I^{*1/2}_{{\beta_1}|{\beta_{-1}}}$ and $C$ is a positive constant not depending on ${\beta}^*$.
\end{lemmmm}
Now we can check that the assumptions of Theorem \ref{GeneralThm} and Corollary \ref{generalCorollary} hold, which finishes the proof of Theorem \ref{MainThmLinear}.
\end{proof}
Next we introduce some lemmas which give properties of sub-exponential variables and norms, as well as sub-Gaussian variables and norms, which will be used in the proof of Theorem \ref{thm:MainThmLinear}.
\begin{lemmmm}\label{Bernstein}
(Bernstein Inequality) Let $X_1,...,X_n$ be independent mean $0$ sub-exponential random variables and let $K=\max_i\|X_i\|_{\Psi_1}$. Then for any $t>0$,
\begin{align*}
\mathbb{P}_{\beta^*}\bigg(\frac{1}{n}\bigg|\sum_{i=1}^nX_i\bigg|\geqslant t\bigg)\leqslant 2\exp \bigg[-C\min\bigg(\frac{t^2}{K^2},\frac{t}{K}\bigg)n\bigg],
\end{align*}
where $C>0$ is a constant.
\end{lemmmm}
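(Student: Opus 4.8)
The plan is to use the classical exponential-moment (Chernoff) method, so that essentially all of the work reduces to bounding the moment generating function of a single centered sub-exponential summand in a neighborhood of the origin. First I would record the standard fact that the tail characterization and the $\psi_1$-norm characterization in the preceding definitions are equivalent up to absolute constants, so that $\|X_i\|_{\psi_1}\le K$ yields the moment bounds $\mathbb{E}|X_i|^q\le (qK)^q\le q!\,(eK)^q$ for all integers $q\ge 1$. Taylor-expanding $e^{\lambda X_i}$ term by term, discarding the linear term by $\mathbb{E}X_i=0$, and summing the resulting geometric-type series then produces absolute constants $c_0,C_0>0$ with
\[
\mathbb{E}\,e^{\lambda X_i}\le \exp\!\big(C_0\lambda^2K^2\big)\qquad\text{whenever } |\lambda|\le c_0/K .
\]

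Next, fix $t>0$ and $\lambda\in(0,\,c_0/K]$. Applying Markov's inequality to $\exp\!\big(\lambda\sum_{i=1}^n X_i\big)$ and factoring over the independent summands gives
\[
\mathbb{P}\!\Big(\tfrac{1}{n}\sum_{i=1}^n X_i\ge t\Big)\le e^{-\lambda n t}\prod_{i=1}^n\mathbb{E}\,e^{\lambda X_i}\le \exp\!\big(-\lambda n t+C_0 n\lambda^2K^2\big).
\]
It then remains to minimize the exponent $-\lambda t+C_0\lambda^2 K^2$ over $\lambda\in(0,c_0/K]$. The unconstrained minimizer is $\lambda^\star=t/(2C_0K^2)$; when $\lambda^\star\le c_0/K$ (the ``small-$t$'' regime, $t\lesssim K$) substituting it in gives a bound of order $\exp(-c\,nt^2/K^2)$, whereas when $\lambda^\star> c_0/K$ (the ``large-$t$'' regime, $t\gtrsim K$) the exponent is monotone decreasing on the admissible interval, so one takes $\lambda=c_0/K$ and obtains $\exp(-c\,nt/K)$. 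Writing the two cases uniformly as $\exp\!\big(-c\,n\min(t^2/K^2,\,t/K)\big)$ gives the one-sided tail bound; repeating the argument with $-X_i$ in place of $X_i$ and taking a union bound over the two events contributes the factor $2$, completing the proof.

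The only genuine obstacle is the first step — deriving the near-origin sub-Gaussian-type MGF bound from the $\psi_1$-norm definition — and even that is routine bookkeeping; the one point requiring care is choosing a single absolute constant $C$ so that the final $\min$-form holds simultaneously for all $t>0$, which in particular means absorbing the additive $1$ in the tail definition $\mathbb{P}(|X|>t)\le\exp(1-t/K_1)$ and the discrepancy between the tail constant $K_1$ and the norm $\|X\|_{\psi_1}$. Everything else is the textbook Chernoff argument, and this lemma simply packages Bernstein's inequality in the form used later in the proof of Theorem~\ref{thm:MainThmLinear}.
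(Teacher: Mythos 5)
Your proposal is correct and follows the standard Chernoff-method proof of Bernstein's inequality for sub-exponential variables; the paper itself supplies no argument, deferring entirely to \cite{ning2014general}, whose proof is exactly this one. The only simplification available to you is that, since the paper defines $\|X\|_{\psi_1}=\sup_{q\geqslant 1}q^{-1}(\mathbb{E}|X|^q)^{1/q}$ directly in terms of moments, your starting bound $\mathbb{E}|X_i|^q\leqslant (qK)^q$ is immediate from the definition, so the tail-versus-norm equivalence you flag as the ``one point requiring care'' is not actually needed.
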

\begin{lemmmm}\label{XiEpsiloni}
Under the conditions of Theorem \ref{MainThmLinear} with probability at least $1-p^{-1}$, $\|\frac{1}{n}\sum_{i=1}^n\varphi_i\epsilon_i\|_\infty\leqslant C\sqrt{\frac{\log p}{n}}$, for some $C>0$.
\end{lemmmm}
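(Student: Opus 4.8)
The plan is to bound each of the $p$ entries of $\frac{1}{n}\sum_{i=1}^{n}\varphi_i\epsilon_i$ by a Bernstein-type concentration inequality for a sum of independent, mean-zero sub-exponential random variables, and then to combine the $p$ coordinate bounds by a union bound over $j=1,\dots,p$.

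First I would fix a coordinate $j\in\{1,\dots,p\}$ and put $Z_i^{(j)}:=\varphi_{ij}\epsilon_i$. Since $\epsilon_i$ is mean zero and independent of the design, $\mathbb{E}_{\beta^*}(Z_i^{(j)})=0$, and the $Z_i^{(j)}$ are independent over $i$. The structural fact used here is that a product of two sub-Gaussian variables is sub-exponential: by Cauchy--Schwarz on moments, $(\mathbb{E}|\varphi_{ij}\epsilon_i|^{q})^{1/q}\le(\mathbb{E}|\varphi_{ij}|^{2q})^{1/(2q)}(\mathbb{E}|\epsilon_i|^{2q})^{1/(2q)}$, which together with the definition of the $\Psi_2$-norm and Assumption 3 of Theorem \ref{MainThmLinear} gives $\|Z_i^{(j)}\|_{\Psi_1}\le 2\,\|\varphi_{ij}\|_{\Psi_2}\,\|\epsilon_i\|_{\Psi_2}\le 2C^2=:K$, a bound uniform in $i$ and $j$. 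Applying Lemma \ref{Bernstein} to $Z_1^{(j)},\dots,Z_n^{(j)}$ then yields, for every $t>0$,
\begin{align*}
\mathbb{P}_{\beta^*}\!\left(\left|\frac{1}{n}\sum_{i=1}^{n}\varphi_{ij}\epsilon_i\right|\ge t\right)\le 2\exp\!\left[-C'\min\!\left(\frac{t^2}{K^2},\frac{t}{K}\right)n\right].
\end{align*}

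Next I would take $t=t_n:=C''\sqrt{(\log p)/n}$. Assumption 2 of Theorem \ref{MainThmLinear} forces $\log p=o(\sqrt{n})$ (as $s\vee s'\ge 1$), so $t_n\to 0$ and for all large $n$ the minimum above equals $t_n^2/K^2$, whence the right-hand side is $2p^{-C'C''^2/K^2}$. A union bound over the $p$ coordinates of $\varphi_i$ gives
\begin{align*}
\mathbb{P}_{\beta^*}\!\left(\left\|\frac{1}{n}\sum_{i=1}^{n}\varphi_i\epsilon_i\right\|_\infty\ge t_n\right)\le 2\,p^{\,1-C'C''^2/K^2},
\end{align*}
and choosing $C''$ large enough that $C'C''^2/K^2\ge 2$ makes this at most $p^{-1}$ (for $p\ge 2$), establishing the claim with $C=C''$. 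The argument presents no real obstacle; the only point requiring care is that $\varphi_{ij}$ is itself random, merely sub-Gaussian, rather than a bounded deterministic design entry --- this is exactly why the summands are treated as mean-zero sub-exponential products and Bernstein is invoked, instead of a fixed-weight sub-Gaussian tail bound. The remainder (checking $t_n/K<1$ eventually and tuning $C''$ so the union bound collapses to $p^{-1}$) is routine constant-chasing.
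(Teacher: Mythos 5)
Your proof is correct and is essentially the argument the paper relies on: the paper does not prove this lemma itself but defers to \cite{ning2014general}, whose proof is exactly this route (product of sub-Gaussians is sub-exponential, Bernstein's inequality coordinatewise with $t\asymp\sqrt{\log p/n}$, then a union bound over the $p$ coordinates). Your handling of the constants and of the check that the quadratic branch of the Bernstein minimum is active is sound.
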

The proofs of Lemmas \ref{Bernstein} and \ref{XiEpsiloni} can be found in \cite{ning2014general}.
Now, we can begin the proof of Theorem \ref{thm:MainThmLinear}.
\begin{proof}
The proof is similar to \cite{ning2014general} with a few changes. It is enough to show for any $\epsilon>0$,
\begin{align}\label{UtassymSame}
\lim_{n\rightarrow \infty} \sup_{{\beta}^*\in\Omega_0} \mathbb{P}_{{\beta}^*}(|\tilde U_n - \hat U_n|\geqslant \epsilon)=0.
\end{align}
Notice that $|\tilde U_n - \hat U_n|=|\hat U_n||1-\frac{\sigma^*}{\hat\sigma}|$. For a sequence of positive constants $t_n\rightarrow 0$ to be chosen later, we can show that $\lim_{n\rightarrow \infty}\sup_{\beta^*\in\Omega_0} \mathbb{P}_{{\beta}^*}(|\hat U_n|\geqslant t_n^{-1})=0$. It remains to show that
\begin{align}\label{sigmatn}
\lim_{n\rightarrow \infty}\sup_{{\beta}^*\in\Omega_0} \mathbb{P}_{{\beta}^*}\bigg(|1-\frac{\sigma^*}{\hat\sigma}|\geqslant t_n\bigg)=0.
\end{align}
Notice that
\begin{align}\label{sigmaBound}
\hat \sigma^2-\sigma^{*2} & =\bigg(\frac{1}{n}\sum_{i=1}^n(B_i+\epsilon_i)^2-\sigma^{*2}\bigg)+\hat {{\Delta}}^T{H}_{{\varphi}}\hat {{\Delta}}-2\hat {{\Delta}}^T\frac{1}{n}\sum_{i=1}^n(\epsilon_i + B_i){\varphi}_i\nonumber\\
& = \bigg(\frac{1}{n}\sum_{i=1}^n(B_i+\epsilon_i)^2-\sigma^{*2}\bigg)+\hat {{\Delta}}^T{H}_{{\varphi}}\hat {{\Delta}}-2\hat {{\Delta}}^T\frac{1}{n}\sum_{i=1}^n\epsilon_i {\varphi}_i-2\hat {{\Delta}}^T\frac{1}{n}\sum_{i=1}^nB_i {\varphi}_i\nonumber\\
&  = \bigg(\frac{1}{n}\sum_{i=1}^n\epsilon_i^2-\sigma^{*2}\bigg)+\hat {{\Delta}}^T{H}_{{\varphi}}\hat {{\Delta}}-2\hat {{\Delta}}^T\frac{1}{n}\sum_{i=1}^n\epsilon_i {\varphi}_i + \frac{1}{n}\sum_{i=1}^nB_i^2 + \frac{1}{n}\sum_{i=1}^n\epsilon_iB_i -2\hat {{\Delta}}^T\frac{1}{n}\sum_{i=1}^nB_i {\varphi}_i.
\end{align}
where $\hat {{\Delta}}=\hat {{\beta}}-{\beta}^*$. Since $\|\epsilon_i^2\|_{\psi_1}\leqslant 2C^2$, by Lemma \ref{Bernstein}, $|\frac{1}{n}\sum_{i=1}^n\epsilon_i^2-\sigma^{*2}|\leqslant C\sqrt{\frac{\log n}{n}}$, for some constant $C$, with probability tending to one. By Lemma \ref{betaBound}, we have ${\Delta}^T{H}_{{\varphi}}{\Delta}\leqslant C_1\kappa_{\max}\frac{ \bar c^2 s\bar d \log p}{n}$, for some constant $C_1$, with probability tending to one. By Lemma \ref{lemma6Han} and Lemma \ref{betaBound}, we have
\begin{align*}
\|\hat {{\Delta}}\|_1 & \leqslant C_1s\bar d\bar c^2\|\hat {{\Delta}}\|_2\\
                      &  \leqslant C_2s\bar d\bar c^2\sqrt{\frac{ \bar c^2 s\bar d \log p}{n}},
\end{align*}
for some constant $C_2>0$.
By Lemma \ref{XiEpsiloni}, we have
\begin{align*}
\bigg\| \frac{1}{n}\sum_{i=1}^n\epsilon_i{\varphi}_i \bigg\|_\infty\leqslant C_3\sqrt {\frac{\log p}{n}}.
\end{align*}
By Lemma \ref{Bernstein}, $|\frac{1}{n}\sum_{i=1}^n\epsilon_iB_i|\lesssim \sqrt{1/n}$. By the assumptions of Theorem \ref{MainThmLinear}, $\frac{1}{n}\sum_{i=1}^n B_i^2  \lesssim \frac{\log p}{n}$.
Thus,
\begin{align*}
\bigg|\hat {{\Delta}}^T\frac{1}{n}\sum_{i=1}^n\epsilon_i{\varphi}_i\bigg| & \leqslant \|\hat {{\Delta}}\|_1\bigg\| \frac{1}{n}\sum_{i=1}^n\epsilon_i{\varphi}_i \bigg\|_\infty\\
& \leqslant C_4 s\bar d\bar c^2\sqrt{\bar c^2 s\bar d}\frac{\log p}{n},
\end{align*}
for some constant $C_4>0$. By assumption $B_i\lesssim \sqrt{\frac{\log p}{n}}$, 
\begin{align*}
\bigg|\hat {{\Delta}}^T\frac{1}{n}\sum_{i=1}^nB_i{\varphi}_i\bigg| & \leqslant \|\hat {{\Delta}}\|_1\bigg\| \frac{1}{n}\sum_{i=1}^nB_i{\varphi}_i \bigg\|_\infty\\
& \leqslant C_5 s\bar d\bar c^2\sqrt{\bar c^2 s\bar d}\frac{\log p}{n},
\end{align*}
for some constant $C_5>0$.
Thus, by (\ref{sigmaBound}), we have
\begin{align*}
|\hat \sigma^2-\sigma^{*2}|\leqslant C_0 \sqrt{\frac{\log n}{n}}\vee (\bar c^2 s\bar d)^{3/2}\frac{\log p}{n},
\end{align*}
for some constant $C_0$, with probability tending to one. Thus,
\begin{align*}
|1-\frac{\sigma^*}{\hat\sigma}|=\hat\sigma^{-2}|1+\frac{\sigma^*}{\hat\sigma}||\hat \sigma^2-\sigma^{*2}|\lesssim |\hat \sigma^2-\sigma^{*2}|\lesssim \sqrt{\frac{\log n}{n}}\vee (\bar c^2 s\bar d)^{3/2}\frac{\log p}{n},
\end{align*}
with probability tending to one, because $\sigma^{*2}>C^2$ and $\hat\sigma^2=\sigma^{*2}+o_{\mathbb{P}}(1)$. Thus, if we choose $t_n \gtrsim \sqrt{\frac{\log n}{n}}\vee (\bar c^2 s\bar d)^{3/2}\frac{\log p}{n}$, then (\ref{sigmatn}) holds and (\ref{UtassymSame}) holds. Then by Theorem \ref{MainThmLinear}, the result holds.
\end{proof}

\subsection{Proofs of Lemmas}\label{PfofLemmasC}
\subsubsection{Proof of Lemma \ref{boundGamma}}\label{proofboundGamma}
\begin{proof}
For simplicity, we use $\lambda$ instead of $\lambda_n$, ${\gamma}$ instead of ${\gamma}^\lambda$, and ${\gamma}^Z$ instead of ${\gamma}^{Z,\lambda}$ in Appendix \ref{PfofLemmasC}. In this proof we will use ${\gamma}_t$ instead of ${\gamma}_{J_t}$ for brevity. Let ${\gamma}^Z(T)$ be the vector with elements $\gamma^Z_{ij_{ik}}(T)=\gamma^Z_{ij_{ik}}I_{\{\beta_i^*\neq 0\}}$. Similarly, $\gamma^Z_{ij_{ik}}(T^c)=\gamma_{ij_{ik}}^ZI_{\{\beta_i^*=0\}}$. Thus, ${\gamma}^Z={\gamma}^Z(T)+{\gamma}^Z(T^c)$. Notice $\{\beta_i^*\neq 0\}=\{i\in J_t, \mbox{ for some }t\in T\}$. Since $f(0,0)=0$, and (\ref{OptGamma}) is a minimizing problem, we have $f({\gamma},{\gamma}^Z)\leqslant 0$. Since ${\gamma}^T{C}{\gamma}\geqslant 0$ for any ${\gamma}$, and $\|{\beta}^Z_t\|_2-\|{\gamma}^Z_t+{\beta}^Z_t\|_2\leqslant \|{\gamma}^Z_t\|_2$ for any $t\in T$, combining $f({\gamma},{\gamma}^Z)\leqslant 0$, we have $\sum_{t\in T^c}\sqrt{d_t} \|{\gamma}^Z_t\|_2\leqslant \sum_{t\in T}\sqrt{d_t} \|{\gamma}^Z_t\|_2$. Also, we have
\begin{align}\label{equation5inlemma2Han}
\sum_{t\in T}\sqrt{d_t} \|{\gamma}^Z_t\|_2\leqslant \sqrt{\sum_{t\in T}d_t} \|{\gamma}^Z(T)\|_2\leqslant \sqrt{s_n\bar d}\|{\gamma}^Z\|_2.
\end{align}
The first inequality is true because of Cauchy's inequality, and the second inequality is true because $\bar d=\max\{d_1,...,d_n\}$ and $s_n=|T|$.\par
For any $\beta^{\lambda}_{ij_{im_1}}$ and $\beta^{\lambda}_{ij_{im_2}}$, if they are both not zero, by KKT conditions, we have
\begin{align*}
-\frac{1}{n}{\psi}_i^T(y- {\varphi}{\beta})+\frac{\lambda \sqrt{d_{j_{im_1}}} \beta^{\lambda}_{ij_{im_1}}}{\|{\beta}_{J_{j_{im_1}}}\|_2}=0,\mbox{ and }
-\frac{1}{n}{\psi}_i^T(y- {\varphi}{\beta})+\frac{\lambda \sqrt{d_{j_{im_2}}} \beta^{\lambda}_{ij_{im_2}}}{\|{\beta}_{J_{j_{im_2}}}\|_2}=0,
\end{align*}
which indicates
\begin{align*}
\frac{\lambda \sqrt{d_{j_{im_1}}} \beta^{\lambda}_{ij_{im_1}}}{\|{\beta}_{J_{j_{im_1}}}\|_2}=\frac{\lambda \sqrt{d_{j_{im_2}}} \beta^{\lambda}_{ij_{im_2}}}{\|{\beta}_{J_{j_{im_2}}}\|_2}.
\end{align*}
Since $\lambda>0$, we have $\beta^{\lambda}_{ij_{im_1}}\beta^{\lambda}_{ij_{im_2}}\geqslant 0$. Notice if $\beta^{\lambda}_{ij_{im_1}}$ or $\beta^{\lambda}_{ij_{im_2}}$ is zero, $\beta^{\lambda}_{ij_{im_1}}\beta^{\lambda}_{ij_{im_2}}\geqslant 0$ still holds. Together with the constraints of optimization problem, we have $\gamma^{\lambda}_{ij_{im_1}}\gamma^{\lambda}_{ij_{im_2}}\geqslant 0$, which indicates $\|{\gamma}^Z\|_2\leqslant \|{\gamma}\|_2$. Thus, together with (\ref{equation5inlemma2Han}), we have
\begin{align}\label{equation6inlemma2Han}
\sum_{t=1}^{p_n} \sqrt{d_t}\|{\gamma}^Z_t\|_2\leqslant 2\sqrt{s_n\bar d}\|{\gamma}^Z\|_2\leqslant 2\sqrt{s_n\bar d}\|{\gamma}\|_2.
\end{align}
Since $f({\gamma},{\gamma}^Z)\leqslant 0$, and ignoring the non-negative term $\lambda\sum_{t\in T^c}\sqrt{d_t} \|{\gamma}^Z_t\|_2$, it follows that
\begin{align}\label{equation7inlemma2Han}
n{\gamma}^T{C}{\gamma} \leqslant \lambda \sqrt{s_n\bar d}\|{\gamma}^Z\|_2 \leqslant \lambda \sqrt{s_n\bar d}\|{\gamma}\|_2.
\end{align}
Next, we bound the term $n{\gamma}^T{C}{\gamma} $ from below. 
Pplugging the result into (\ref{equation7inlemma2Han}) will yield the desired upper bound on the $l_2$-norm of ${\gamma}$. Let $\|{\gamma}^Z_{(1)}\|_2\geqslant \|{\gamma}^Z_{(2)}\|_2\geqslant \cdots \geqslant\|{\gamma}^Z_{(p_n)}\|_2$ be the ordered block entries of ${\gamma}$. Let $\{u_n\}$ be a sequence of positive integers, such that $1\leqslant u_n\leqslant p_n$ and define the set of $u_n$-largest groups as $U=\{k:\|{\gamma}^Z_k\|_2\geqslant \|{\gamma}^Z_{(u_n)}\|_2\}$. Define analogously as before ${\gamma}^Z(U)$, ${\gamma}^Z(U^c)$, ${\gamma}(U)$, and ${\gamma}(U^c)$. Thus, ${\gamma}^T{C}{\gamma} = ({\gamma}(U)+{\gamma}(U^c))^T{C}({\gamma}(U)+{\gamma}(U^c)=\|{a}+{b}\|_2^2$, where ${a}={\varphi}{\gamma}(U)/\sqrt{n}$ and ${b}={\varphi}{\gamma}(U^c)/\sqrt{n}$. Thus,
\begin{align}\label{equation8inlemma2Han}
{\gamma}^T{C}{\gamma}={a}^T{a}+2{b}^T{a}+{b}^T{b}\geqslant (\|{a}\|_2-\|{b}\|_2)^2.
\end{align}
Assume $l=\sum_{t=1}^{p_n}\|{\gamma}^Z_t\|_2$. Then for every $t=1,...,p_n$, $\|{\gamma}^Z_{(t)}\|_2\leqslant l/t$, since ${\gamma}^Z_{(t)}$ is the $t^{th}$ largest group with respect to $\|\cdot\|_2$. Thus,
\begin{align}\label{gammaU}
\|{\gamma}^Z(U^c)\|^2_2= \sum_{t=u_n+1}^{p_n} \|{\gamma}^Z_{(t)}\|_2\leqslant \bigg(\sum_{t=1}^{p_n}\|{\gamma}^Z_t\|_2^2\bigg)^2\sum_{t=u_n+1}^{p_n}\frac{1}{t^2}\leqslant \bigg(\sum_{t=1}^{p_n}\sqrt{d_t}\|\gamma^Z_t\|_2\bigg)^2\frac{1}{u_n},
\end{align}
where the last inequality is because
\begin{align*}
\sum_{t=u_n+1}^{p_n}\frac{1}{t^2}\leqslant \int_{s=u_n}^\infty \frac{1}{s^2} ds = \frac{1}{u_n},
\end{align*}
and $\sqrt{d_t}\geqslant 1$.\par
Together with (\ref{equation6inlemma2Han}), we have $\|{\gamma}^Z(U^c)\|^2_2\leqslant 4s_n\bar d\|{\gamma}^Z\|_2^2\frac{1}{u_n}$. Since ${\gamma}(U)$ has at most $\sum_{t\in U}d_t$ non-zero coefficients, and $\sum_{t\in U}d_t\leqslant u_n\bar d$,
\begin{align}\label{equation9inlemma2Han}
\|{a}\|_2^2 & \geqslant \phi_{\min}\bigg(\sum_{t\in U}d_t\bigg)\|{\gamma}(U)\|_2^2
\geqslant \phi_{\min}\bigg(\sum_{t\in U}d_t\bigg)\|{\gamma}^Z(U)\|_2^2\nonumber\\
& = \phi_{\min}\bigg(\sum_{t\in U}d_t\bigg)(\|{\gamma}^Z\|_2^2-\|{\gamma}^Z(U^c)\|_2^2)
\geqslant \phi_{\min}\bigg(\sum_{t\in U}d_t\bigg)(1-\frac{4s_n\bar d}{u_n})\|{\gamma}^Z\|_2^2\nonumber\\
&\geqslant \phi_{\min}(u_n\bar d)(1-\frac{4s_n\bar d}{u_n})\|{\gamma}^Z\|_2^2.
\end{align}
The first inequality is true because of the definition of $\phi_{\min}(\cdot)$, and the equality is true because $\gamma^Z=\gamma^Z(U)+\gamma^Z(U^c)$.
From Lemma \ref{lemma1Han}, $\gamma(U^c)$ has at most $n$ non-zero groups, which indicates
\begin{align}\label{equation10inlemma2Han}
\|{b}\|_2^2\leqslant \phi_{\max}(n\bar d)\|{\gamma}(U^c)\|_2^2\leqslant \phi_{\max}\|{\gamma}(U^c)\|_2^2\leqslant \bar d\phi_{\max}\|{\gamma}^Z(U^c)\|_2^2\leqslant \frac{4\phi_{\max}s_n\bar d^2}{u_n}\|{\gamma}^Z\|_2^2.
\end{align}
The first inequality is true because the definition of $\phi_{\max}(\cdot)$, the third inequality is true is because of Cauchy's inequality, and the last inequality is true because of (\ref{equation6inlemma2Han}) and (\ref{gammaU}).
Thus, plugging (\ref{equation9inlemma2Han}) and (\ref{equation10inlemma2Han}) into (\ref{equation8inlemma2Han}), and combining with the facts $\sum_{t\in U} d_t\leqslant \bar du_n$ and $\phi_{\max}\geqslant \phi_{\min}(u_n)$, under Assumption \ref{mSparse}, for sufficient large $n$, we have
\begin{align*}
\|{a}\|_2-\|{b}\|_2 & \geqslant \bigg(\sqrt{\phi_{\min}(u_n\bar d)(1-\frac{4s_n\bar d}{u_n})}-\sqrt{\frac{4\phi_{\max}s_n\bar d^2}{u_n}}\bigg)\|{\gamma}^Z\|_2\\
& \geqslant \bigg(\sqrt{\phi_{\min}(u_n\bar d)(1-\frac{4s_n\bar d}{u_n})}-\sqrt{\frac{2\kappa_{\max}s_n\bar d^2}{u_n}}\bigg)\|{\gamma}^Z\|_2
\end{align*}
Let $u_n=s_n\log n$, under Assumption \ref{mSparse}, for large $n$, we have
\begin{align*}
\|{a}\|_2-\|{b}\|_2 & \geqslant \bigg(\sqrt{\frac{\kappa_{\min}}{2}(1-\frac{4\bar d}{\log n})}-\sqrt{\frac{2\kappa_{\max}\bar d^2}{\log n}}\bigg)\|{\gamma}^Z\|_2.
\end{align*}
Together with (\ref{equation7inlemma2Han}), we have
\begin{align*}
\frac{\lambda \sqrt{s_n\bar d}}{n}\|{\gamma}^Z\|_2 \geqslant {\gamma}^T{C}{\gamma} \geqslant
\bigg(\sqrt{\frac{\kappa_{\min}}{2}(1-\frac{4\bar d}{\log n})}-\sqrt{\frac{2\kappa_{\max}\bar d^2}{\log n}}\bigg)^2\|{\gamma}^Z\|^2_2.
\end{align*}
Since by Cauchy's inequality, we have $\|{\gamma}^Z\|^2_2\geqslant \|{\gamma}\|^2_2/\bar c$. Thus,
\begin{align*}
\|{\gamma}\|_2^2\leqslant \frac{\lambda^2 \bar c s_n\bar d}{n^2}\bigg/\bigg(\sqrt{\frac{\kappa_{\min}}{2}(1-\frac{4\bar d}{\log n})}-\sqrt{\frac{2\kappa_{\max}\bar d^2}{\log n}}\bigg)^2,
\end{align*}
which completes the proof.
\end{proof}
\subsubsection{Proof of Lemma \ref{lemma3Han}}\label{prooflemma3Han}
\begin{proof}
From (\ref{equation12inlemma2Han}), for every $M$ with $|M|\leqslant \bar m_n$,
\begin{align}\label{equation13inlemma2Han}
\|{\theta}^M\|_2^2\leqslant \frac{1}{n^2\phi^2_{\min}(\bar m_n)}\|{\varphi}_M^T(\epsilon+B)\|_2^2 \leqslant \frac{2}{n^2\phi^2_{\min}(\bar m_n)}(\|{\varphi}_M^T\epsilon\|_2^2 + \|{\varphi}_M^TB\|_2^2)
\end{align}

By Lemma \ref{XiEpsiloni}, with probability at least $1-d^{-1}$, $\|\sum_{i=1}^n\varphi_i\epsilon_i\|_\infty\leqslant C_1\sqrt{n\log p}$. Thus,
\begin{align*}
\max_{M:|M|\leqslant \bar m_n} \|{\varphi}^T_M{\epsilon}\|_2^2\leqslant \bar m_n \|\sum_{i=1}^n\varphi_i\epsilon_i\|_\infty^2 \leqslant \bar m_n C_1^2n\log p,
\end{align*}
where the first inequality is true because $\|{\varphi}^T_M{\epsilon}\|_2^2\leqslant |M|\|{\varphi}^T_M{\epsilon}\|_\infty^2$, and $|M|\leqslant \bar m_n$.

By assumptions of Theorem \ref{thm:L2consistency_det}, 
\begin{align*}
\max_{M:|M|\leqslant \bar m_n} \|{\varphi}^T_M{B}\|_2^2\leqslant \bar m_n \|\sum_{i=1}^n\varphi_iB_i\|_\infty^2 \leqslant \bar m_n C_2^2n\log p.
\end{align*}

Thus,
\begin{align*}
\max_{M:|M|\leqslant \bar m_n}\|{\theta}^M\|_2^2\leqslant C^2\frac{\bar m_n\log p}{n\phi^2_{\min}(\bar m_n)},
\end{align*}
which finishes the proof.
\end{proof}
\subsubsection{Proof of Lemma \ref{lemma4Han}}\label{prooflemma4Han}
\begin{proof}
Before the proof, we state a lemma.
\begin{lemmmm}\label{lemma5Han}
For ${x}\in \mathbb{R}^q$, suppose $\hat {{x}}_1 =\arg\min_{{x}}f_1({x})$ and $\hat {{x}}_2 =\arg\min_{{x}}f_2({x})$ where $f_1({x})=\frac{1}{2}{x}^T{A}^T{A}{x}+{b}^T{x}$ with ${A}\in \mathbb{R}^{n\times q}$ which is full rank and ${b}\in \mathbb{R}^q$. Also, $f_2({x})=f_1({x})+{c}^T{x}$ with ${c}\in \mathbb{R}^q$. Let ${A}^Z$, ${b}^Z$ and ${c}^Z$ be defined in the same way as before. Let $g_1({y}^Z)=\frac{1}{2}\|{A}^Z{y}^Z\|_2^2+({b}^Z)^T{y}^Z+h({y}^Z)$ and $g_2({y}^Z)=\frac{1}{2}\|{A}^Z{y}^Z\|_2^2+({b}^Z)^T{y}^Z+({c}^Z)^T{y}^Z+h({y}^Z)$, where $h({y})$ is a convex function with respect to ${y}$ and everywhere sub-differentiable, and define $\hat {{y}}^Z_1 =\arg\min_{{y}}^Zg_1({y}^Z)$ and $\hat {{y}}^Z_2 =\arg\min_{{y}}^Zg_1({y}^Z)$. Then we have
\begin{align*}
\|\hat {{y}}_2-\hat {{y}}_1\|_2\leqslant {\gamma} \|\hat {{x}}_2-\hat {{x}}_1\|_2.
\end{align*}
\end{lemmmm}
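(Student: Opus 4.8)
The plan is to prove this as the group-lasso-with-replicates counterpart of the classical Lasso stability estimate in \cite{meinshausen2009lasso}, using first-order optimality conditions together with monotonicity of the subdifferential of the convex penalty $h$.

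First I would make the replication structure explicit. Let $R$ be the $0$--$1$ matrix whose column indexed by a (variable, containing-group) pair $(k,j_{km})$ is the $k$-th canonical basis vector of $\mathbb{R}^q$, so that ``summing over replicates'' is exactly multiplication by $R$. With the ``$Z$'' notation of Appendix \ref{AppReform} one then has $A^Z=AR$, $b^Z=R^Tb$, $c^Z=R^Tc$, hence $g_1(y^Z)=f_1(Ry^Z)+h(y^Z)$ and $g_2(y^Z)=f_2(Ry^Z)+h(y^Z)$. Writing $\Delta x=\hat x_2-\hat x_1$ and $\Delta y^Z=\hat y^Z_2-\hat y^Z_1$ (with $\hat y^Z_2=\arg\min_{y^Z} g_2(y^Z)$; the ``$g_1$'' in the displayed statement is a typo), the quantity to be bounded is the collapsed difference $\hat y_2-\hat y_1:=R\Delta y^Z$, matching the convention used for $\hat\beta^{\lambda_n}$ in \eqref{groupLassoEstimator1}.

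Next I would write down and combine the optimality conditions. For the quadratics, $A^TA\hat x_i+b_i=0$ with $b_1=b$, $b_2=b+c$, so $A^TA\Delta x=-c$. For the penalized problems, sub-differentiability of $h$ gives $v_i\in\partial h(\hat y^Z_i)$ with $(A^Z)^TA^Z\hat y^Z_i+b^Z_i+v_i=0$; subtracting, taking the inner product with $\Delta y^Z$, and using monotonicity $(v_2-v_1)^T\Delta y^Z\ge 0$ yields $\|AR\Delta y^Z\|_2^2\le -(c^Z)^T\Delta y^Z=-c^T(R\Delta y^Z)\le\|c\|_2\|R\Delta y^Z\|_2$. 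Since $A$ has full column rank (the regime of interest, where the restricted design has $q\le n$ columns and $A^TA\succ 0$), $\|AR\Delta y^Z\|_2^2\ge\lambda_{\min}(A^TA)\|R\Delta y^Z\|_2^2$, so $\|R\Delta y^Z\|_2\le\|c\|_2/\lambda_{\min}(A^TA)$; combining with $\|c\|_2=\|A^TA\Delta x\|_2\le\lambda_{\max}(A^TA)\|\Delta x\|_2$ gives $\|\hat y_2-\hat y_1\|_2\le\gamma\|\hat x_2-\hat x_1\|_2$ with $\gamma$ a bound on the condition number of $A^TA$. In the application to Lemma \ref{lemma4Han}, $A$ is a scaled restricted design $\varphi_M/\sqrt{n}$ with $|M|\le m'$, so $\gamma\le\phi_{\max}/\phi_{\min}(m')$, which is $O(1)$ under Assumption \ref{mSparse} together with the cardinality bound of Lemma \ref{lemma6Han}.

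The step I expect to require the most care is handling the degeneracy of the lifted quadratic form $(A^Z)^TA^Z=R^TA^TAR$, which is only positive semidefinite because $\ker R\ne\{0\}$: this turns out to be harmless here since every inequality involves only the collapsed vector $R\Delta y^Z\in\mathbb{R}^q$, on which $A^TA$ is positive definite. If instead one wanted the raw lifted estimate $\|\Delta y^Z\|_2\lesssim\|\Delta x\|_2$, the extra ingredient would be sign-consistency of the replicates of each coordinate enforced by the group penalty (the KKT computation already used in the proof of Lemma \ref{boundGamma}), at the cost of a factor of order $\sqrt{\bar c}$; but only the collapsed version is needed downstream.
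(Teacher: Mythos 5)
Your proposal is correct and follows essentially the same route as the paper: the paper's proof simply defers the optimality-condition/monotonicity argument to \cite{liu2009estimation} and records only the collapse identity $\|{A}^Z(\hat{{y}}^Z_1-\hat{{y}}^Z_2)\|_2^2+({c}^Z)^T(\hat{{y}}^Z_1-\hat{{y}}^Z_2)=\|{A}(\hat{{y}}_1-\hat{{y}}_2)\|_2^2+{c}^T(\hat{{y}}_1-\hat{{y}}_2)$, which is exactly what your replication matrix $R$ (with $A^Z=AR$, $c^Z=R^Tc$) delivers. Your write-up is in fact more self-contained, correctly flags the $\arg\min g_1$ typo, and pins down an explicit admissible $\gamma$ (a condition-number bound on $A^TA$, slightly cruder than the $\sqrt{\lambda_{\max}/\lambda_{\min}}$ obtainable by applying Cauchy--Schwarz in the $A$-metric, but sufficient for the use in Lemma \ref{lemma4Han}).
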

\begin{proof}
Our proof is similar to \cite{liu2009estimation}, with the only difference that $\|{A}^Z(\hat {{y}}^Z_1-\hat {{y}}^Z_2)\|_2^2+({c}^Z)^T(\hat {{y}}^Z_1-\hat {{y}}^Z_2)=\|{A}(\hat {{y}}_1-\hat {{y}}_2)\|_2^2+{c}^T(\hat {{y}}_1-\hat {{y}}_2)$.
\end{proof}
Let $M(\xi)=A_{\lambda,\xi}$. Let $0=\xi_1<...<\xi_{J+1}=1$ be the points of discontinuity of $M(\xi)$. At these locations, variables either join the active set or are dropped from the active set. Fix some $j$ with $1\leqslant j\leqslant J$. Denote by $M_j$ be the set of active groups $M(\xi)$ for any $\xi\in(\xi_j,\xi_{j+1})$. Assuming
\begin{align}\label{equation15Han}
\forall \xi\in(\xi_j,\xi_{j+1}):\|\hat {{\beta}}^{\lambda,\xi}-\hat {{\beta}}^{\lambda,\xi_j}\|_2\leqslant C(\xi-\xi_{j})\|\hat {{\theta}}^{M_j}\|_2
\end{align}
is true, where ${\theta}^{M_j}$ is the restricted OLS estimator of noise. Then
\begin{align*}
\|\hat {{\beta}}^{\lambda,0}-\hat {{\beta}}^\lambda\|_2 & \leqslant \sum_{j=1}^J\|\hat {{\beta}}^{\lambda,\xi_j}-\hat {{\beta}}^{\lambda,\xi_{j+1}}\|_2\\
                                                          & \leqslant C \max_{M:|M|\leqslant m}\|{\theta}^M\|_2 \sum_{j=1}^J (\xi_{j+1}-\xi_j)\\
                                                          & =C \max_{M:|M|\leqslant m}\|{\theta}^M\|_2.
\end{align*}
By replacing $\hat {{x}}_1$, $\hat {{x}}_2$, $\hat {{y}}_1$ and $\hat {{y}}_2$ with $\xi\hat {{\theta}}^{M_j}$, $\xi_j\hat {{\theta}}^{M_j}$, $\hat {{\beta}}^{\lambda,\xi}$ and $\hat {{\beta}}^{\lambda,\xi_j}$ in Lemma \ref{lemma5Han}, respectively, we obtain (\ref{equation15Han}). Hence, we complete the proof.
\end{proof}
\subsubsection{Proof of Lemma \ref{lemma6Han}}\label{prooflemma6Han}
\begin{proof}
Our proof is similar to \cite{meinshausen2009lasso}. The only thing need to be noticed is that for (38) in \cite{meinshausen2009lasso}, we have
\begin{align*}
(\|({\varphi}^Z_{A_{\lambda,\xi}})^T{\varphi}({\beta}-\hat {{\beta}}^{\lambda,\xi})\|_2+\|({\varphi}^Z_{A_{\lambda,\xi}})^T(\epsilon+B)\|_2)^2 & \leqslant 2(\|({\varphi}^Z_{A_{\lambda,\xi}})^T{\varphi}({\beta}-\hat {{\beta}}^{\lambda,\xi})\|^2_2+\|({\varphi}^Z_{A_{\lambda,\xi}})^T{(\epsilon+B)}\|^2_2)\\
& \leqslant 2\bar c (\|{\varphi}_{A_{\lambda,\xi}}^T{\varphi}({\beta}-\hat {{\beta}}^{\lambda,\xi})\|^2_2+\|{\varphi}_{A_{\lambda,\xi}}^T({\epsilon} +B)\|^2_2).
\end{align*}
\end{proof}

\subsection{Description of Functions in Section \ref{sec:morefunctions}}\label{append:morefunctions}
\begin{itemize}
\item The amount of deflection of a bending function is given by
\[
D_e=\frac{4}{10^9}\frac{L^3}{bh^3},
\]
where the 3 inputs are $L,b$, and $h$.
\item The midpoint voltage of a transformerless OTL circuit function is given by
\[
V_m=\frac{(V_{b1}+0.74)B(R_{c2}+9)}{B(R_{c2}+9)+R_f}+\frac{11.35R_f}{B(R_{c2}+9)+R_f}+\frac{0.74R_f\beta(R_{c2}+9)}{(B(R_{c2}+9)+R_f)R_{c1}},
\]
where $V_{b1}=12R_{b2}/(R_{b1}+R_{b2})$, and the 6 inputs are $R_{b1},R_{b2},R_{f},R_{c1},R_{c2},$ and $B$.

\item The wing weight function models a light aircraft wing, where the wing's weight is given by
\begin{equation}\label{eq:wingfunction}
W=0.036S^{0.758}_wW^{0.0035}_{fw}\left(\frac{A}{\cos^2(\Lambda)}\right)^{0.6}q^{0.006}R^{0.04}\left(\frac{100t_c}{\cos(\Lambda)}\right)^{-0.3}(N_zW_{dg})^{0.49}+S_wW_p,
\end{equation}
where the 10 inputs are $S_w,W_{fw},A,\Lambda,q,R,t_c,N_z,W_{dg},$ and $W_p$.
\end{itemize}

The input ranges are given in Table \ref{tab:inputrange}.
\begin{table}[h]
\centering
\begin{tabular}{rlrlrl}
\toprule
\multicolumn{2}{c}{Bending} & \multicolumn{2}{c}{OTL circuit} & \multicolumn{2}{c}{Wing weight}\\
\midrule
$L$ & $\in[10,20]$ & $R_{b1}$ & $\in[50,150]$  & $S_w$ & $\in[150,200]$\\
$b$ & $\in[1,2]$ & $R_{b2}$ & $\in[25,70]$ & $W_{fw}$ & $\in[220,300]$\\
$h$ & $\in[0.1,0.2]$ & $R_{f}$ & $\in[0.5,3]$& $A$ & $\in[6,10]$\\
& & $R_{c1}$ & $\in[1.2,2.5]$ & $\Lambda$ & $\in[-10,10]$\\
& & $R_{c2}$ & $\in[0.25,1.2]$ & $q$ & $\in[16,45]$\\
& & $\beta$ & $\in[50,300]$ & $R$ & $\in[0.5,1]$\\
& & & & $t_c$ & $\in[0.08,0.18]$\\
& & & & $N_z$ & $\in[2.5,6]$\\
& & & & $W_{dg}$ & $\in[1700,2500]$\\
& & & & $W_p$ & $\in[0.025,0.08]$\\
\bottomrule
\end{tabular}
\caption{Input ranges of the OTL circuit function, the piston simulation function, and the wing weight function.}
\label{tab:inputrange}
\end{table}

\newpage 

\bibliography{bib}

\begin{thebibliography}{}

\bibitem[Apley, 2017]{apley2017spuq}
Apley, D.~W. (2017).
\newblock An empirical adjustment of the uncertainty quantification in
  {G}aussian process modeling.
\newblock In {\em Statistical Perspectives of Uncertainty Quantification 2017}.
\newblock https://pwp.gatech.edu/spuq-2017/program/.

\bibitem[Aronszajn, 1950]{aronszajn1950theory}
Aronszajn, N. (1950).
\newblock Theory of reproducing kernels.
\newblock {\em Transactions of the American Mathematical Society},
  68(3):337--404.

\bibitem[Asmussen and Glynn, 2007]{asmussen2007stochastic}
Asmussen, S. and Glynn, P.~W. (2007).
\newblock {\em Stochastic Simulation: Algorithms and Analysis}.
\newblock Springer-Verlag, New York.

\bibitem[Bartle, 1995]{bartle2014elements}
Bartle, R.~G. (1995).
\newblock {\em The Elements of Integration and Lebesgue Measure}.
\newblock John Wiley \& Sons, New York.

\bibitem[Ben-Ari and Steinberg, 2007]{ben2007modeling}
Ben-Ari, E.~N. and Steinberg, D.~M. (2007).
\newblock Modeling data from computer experiments: an empirical comparison of
  kriging with mars and projection pursuit regression.
\newblock {\em Quality Engineering}, 19(4):327--338.

\bibitem[Bibbins-Domingo et~al., 2010]{bibbins2010projected}
Bibbins-Domingo, K., Chertow, G.~M., Coxson, P.~G., Moran, A., Lightwood,
  J.~M., Pletcher, M.~J., and Goldman, L. (2010).
\newblock Projected effect of dietary salt reductions on future cardiovascular
  disease.
\newblock {\em New England Journal of Medicine}, 362(7):590--599.

\bibitem[Breiman, 1996]{breiman1996bagging}
Breiman, L. (1996).
\newblock Bagging predictors.
\newblock {\em Machine Learning}, 24(2):123--140.

\bibitem[Breiman, 1999]{breiman1999pasting}
Breiman, L. (1999).
\newblock Pasting small votes for classification in large databases and
  on-line.
\newblock {\em Machine Learning}, 36(1-2):85--103.

\bibitem[B{\"u}chlmann and Yu, 2002]{buchlmann2002analyzing}
B{\"u}chlmann, P. and Yu, B. (2002).
\newblock Analyzing bagging.
\newblock {\em Annals of Statistics}, 30(4):927--961.

\bibitem[Buja and Stuetzle, 2006]{buja2006observations}
Buja, A. and Stuetzle, W. (2006).
\newblock Observations on bagging.
\newblock {\em Statistica Sinica}, 16(2):323--351.

\bibitem[Craven and Wahba, 1978]{craven1978smoothing}
Craven, P. and Wahba, G. (1978).
\newblock Smoothing noisy data with spline functions.
\newblock {\em Numerische Mathematik}, 31(4):377--403.

\bibitem[Dancik, 2013]{mlegp}
Dancik, G.~M. (2013).
\newblock {\em mlegp: Maximum Likelihood Estimates of Gaussian Processes}.
\newblock R package version 3.1.4.

\bibitem[Fan et~al., 2012]{fan2012variance}
Fan, J., Guo, S., and Hao, N. (2012).
\newblock Variance estimation using refitted cross-validation in ultrahigh
  dimensional regression.
\newblock {\em Journal of the Royal Statistical Society: Series B},
  74(1):37--65.

\bibitem[Fan and Li, 2001]{fan2001variable}
Fan, J. and Li, R. (2001).
\newblock Variable selection via nonconcave penalized likelihood and its oracle
  properties.
\newblock {\em Journal of the American Statistical Association},
  96(456):1348--1360.

\bibitem[Fang et~al., 2006]{fang2005design}
Fang, K.-T., Li, R., and Sudjianto, A. (2006).
\newblock {\em Design and Modeling for Computer Experiments}.
\newblock Chapman \& Hall/CRC, New York.

\bibitem[Forrester et~al., 2008]{forrester2008engineering}
Forrester, A. I.~J., Sobester, A., and Keane, A.~J. (2008).
\newblock {\em Engineering Design via Surrogate Modelling: a Practical Guide}.
\newblock John Wiley \& Sons, Chichester, UK.

\bibitem[Friedman, 1991]{friedman1991multivariate}
Friedman, J.~H. (1991).
\newblock Multivariate adaptive regression splines.
\newblock {\em The Annals of Statistics}, 19(1):1--67.

\bibitem[Friedman and Hall, 2007]{friedman2007bagging}
Friedman, J.~H. and Hall, P. (2007).
\newblock On bagging and nonlinear estimation.
\newblock {\em Journal of Statistical Planning and Inference}, 137(3):669--683.

\bibitem[Furrer et~al., 2006]{furrer2006covariance}
Furrer, R., Genton, M.~G., and Nychka, D. (2006).
\newblock Covariance tapering for interpolation of large spatial datasets.
\newblock {\em Journal of Computational and Graphical Statistics},
  15(3):502--523.

\bibitem[Gneiting and Raftery, 2007]{gneiting2007strictly}
Gneiting, T. and Raftery, A.~E. (2007).
\newblock Strictly proper scoring rules, prediction, and estimation.
\newblock {\em Journal of the American Statistical Association},
  102(477):359--378.

\bibitem[Goh et~al., 2013]{goh2013prediction}
Goh, J., Bingham, D., Holloway, J.~P., Grosskopf, M.~J., Kuranz, C.~C., and
  Rutter, E. (2013).
\newblock Prediction and computer model calibration using outputs from
  multifidelity simulators.
\newblock {\em Technometrics}, 55(4):501--512.

\bibitem[Gramacy, 2016]{gramacy2015lagp}
Gramacy, R.~B. (2016).
\newblock {laGP}: large-scale spatial modeling via local approximate {G}aussian
  processes in {R}.
\newblock {\em Journal of Statistical Software}, 72(1):1--46.

\bibitem[Gramacy and Apley, 2015]{gramacy2015local}
Gramacy, R.~B. and Apley, D.~W. (2015).
\newblock Local {G}aussian process approximation for large computer
  experiments.
\newblock {\em Journal of Computational and Graphical Statistics},
  24(2):561--578.

\bibitem[Gramacy and Haaland, 2016]{gramacy2015speeding}
Gramacy, R.~B. and Haaland, B. (2016).
\newblock Speeding up neighborhood search in local {G}aussian process
  prediction.
\newblock {\em Technometrics}, 58(3):294--303.

\bibitem[Gramacy and Lee, 2009]{gramacy2009adaptive}
Gramacy, R.~B. and Lee, H.~K. (2009).
\newblock Adaptive design and analysis of supercomputer experiments.
\newblock {\em Technometrics}, 51(2):130--145.

\bibitem[Gramacy et~al., 2014]{gramacy2014massively}
Gramacy, R.~B., Niemi, J., and Weiss, R.~M. (2014).
\newblock Massively parallel approximate {G}aussian process regression.
\newblock {\em SIAM/ASA Journal on Uncertainty Quantification}, 2(1):564--584.

\bibitem[Gu, 2013]{gu2013smoothing}
Gu, C. (2013).
\newblock {\em Smoothing Spline ANOVA Models (Second Edition)}.
\newblock Springer-Verlag, New York.

\bibitem[Haaland and Qian, 2011]{haaland2011accurate}
Haaland, B. and Qian, P. Z.~G. (2011).
\newblock Accurate emulators for large-scale computer experiments.
\newblock {\em The Annals of Statistics}, 39(6):2974--3002.

\bibitem[Harville, 1997]{harville1997matrix}
Harville, D.~A. (1997).
\newblock {\em Matrix Algebra from a Statistician's Perspective}.
\newblock Springer-Verlag, New York.

\bibitem[Hastie and Tibshirani, 1990]{hastie1990}
Hastie, T. and Tibshirani, R. (1990).
\newblock {\em Generalized Additive Models}.
\newblock Chapman \& Hall, London.

\bibitem[H{\"o}tzer et~al., 2015]{hotzer2015large}
H{\"o}tzer, J., Jainta, M., Steinmetz, P., Nestler, B., Dennstedt, A., Genau,
  A., Bauer, M., K{\"o}stler, H., and R{\"u}de, U. (2015).
\newblock Large scale phase-field simulations of directional ternary eutectic
  solidification.
\newblock {\em Acta Materialia}, 93:194--204.

\bibitem[Jacob et~al., 2009]{jacob2009group}
Jacob, L., Obozinski, G., and Vert, J.-P. (2009).
\newblock Group lasso with overlap and graph lasso.
\newblock In {\em Proceedings of the 26th Annual International Conference on
  Machine Learning}, pages 433--440.

\bibitem[Juditsky and Nemirovski, 2000]{juditsky2000functional}
Juditsky, A. and Nemirovski, A. (2000).
\newblock Functional aggregation for nonparametric regression.
\newblock {\em The Annals of Statistics}, 28(3):681--712.

\bibitem[Kaufman et~al., 2011]{kaufman2011efficient}
Kaufman, C.~G., Bingham, D., Habib, S., Heitmann, K., and Frieman, J.~A.
  (2011).
\newblock Efficient emulators of computer experiments using compactly supported
  correlation functions, with an application to cosmology.
\newblock {\em The Annals of Applied Statistics}, 5(4):2470--2492.

\bibitem[Kenett and Zacks, 1998]{zacks1998modern}
Kenett, R. and Zacks, S. (1998).
\newblock {\em Modern Industrial Statistics: Design and Control of Quality and
  Reliability}.
\newblock Duxbury Press, Pacific Grove, CA.

\bibitem[Lafferty and Wasserman, 2006]{wasserman2005rodeo}
Lafferty, J. and Wasserman, L. (2006).
\newblock Rodeo: Sparse nonparametric regression in high dimensions.
\newblock In {\em Advances in Neural Information Processing Systems (NIPS) 18},
  pages 707--714.

\bibitem[Li, 1987]{li1987asymptotic}
Li, K.-C. (1987).
\newblock Asymptotic optimality for $c_p$, $c_l$, cross-validation and
  generalized cross-validation: discrete index set.
\newblock {\em The Annals of Statistics}, 15(3):958--975.

\bibitem[Lin and Zhang, 2006]{lin2006component}
Lin, Y. and Zhang, H.~H. (2006).
\newblock Component selection and smoothing in multivariate nonparametric
  regression.
\newblock {\em The Annals of Statistics}, 34(5):2272--2297.

\bibitem[Liu and Zhang, 2009]{liu2009estimation}
Liu, H. and Zhang, J. (2009).
\newblock Estimation consistency of the group lasso and its applications.
\newblock In {\em Proceedings of the Twelth International Conference on
  Artificial Intelligence and Statistics (AISTATS)}, pages 376--383.

\bibitem[Luki{\'c} and Beder, 2001]{lukic2001stochastic}
Luki{\'c}, M. and Beder, J. (2001).
\newblock Stochastic processes with sample paths in reproducing kernel hilbert
  spaces.
\newblock {\em Transactions of the American Mathematical Society},
  353(10):3945--3969.

\bibitem[MacDonald et~al., 2015]{GPfit}
MacDonald, B., Ranjan, P., and Chipman, H. (2015).
\newblock {GPfit}: An {R} package for fitting a {G}aussian process model to
  deterministic simulator outputs.
\newblock {\em Journal of Statistical Software}, 64(12):1--23.

\bibitem[Meier, 2015]{packagegrplasso}
Meier, L. (2015).
\newblock {\em grplasso: Fitting User Specified Models with Group Lasso
  Penalty}.
\newblock R package version 0.4-5.

\bibitem[Meier et~al., 2008]{meier2008group}
Meier, L., Van De~Geer, S., and B{\"u}hlmann, P. (2008).
\newblock The group lasso for logistic regression.
\newblock {\em Journal of the Royal Statistical Society: Series B},
  70(1):53--71.

\bibitem[Meinshausen and Yu, 2009]{meinshausen2009lasso}
Meinshausen, N. and Yu, B. (2009).
\newblock Lasso-type recovery of sparse representations for high-dimensional
  data.
\newblock {\em The Annals of Statistics}, 37(1):246--270.

\bibitem[Ning and Liu, 2017]{ning2014general}
Ning, Y. and Liu, H. (2017).
\newblock A general theory of hypothesis tests and confidence regions for
  sparse high dimensional models.
\newblock {\em The Annals of Statistics}, 45(1):158--195.

\bibitem[Nychka et~al., 2015]{nychka2014multi}
Nychka, D., Bandyopadhyay, S., Hammerling, D., Lindgren, F., and Sain, S.
  (2015).
\newblock A multi-resolution {G}aussian process model for the analysis of large
  spatial data sets.
\newblock {\em Journal of Computational and Graphical Statistics},
  24(2):579--599.

\bibitem[Owen, 1997]{owen1997monte}
Owen, A.~B. (1997).
\newblock Monte {C}arlo variance of scrambled net quadrature.
\newblock {\em SIAM Journal on Numerical Analysis}, 34(5):1884--1910.

\bibitem[Paciorek et~al., 2015]{paciorek2013parallelizing}
Paciorek, C.~J., Lipshitz, B., Zhuo, W., Kaufman, C.~G., Thomas, R.~C., et~al.
  (2015).
\newblock Parallelizing {G}aussian process calculations in {R}.
\newblock {\em Journal of Statistical Software}, 63(10):1--23.

\bibitem[Plumlee, 2014]{plumlee2014fast}
Plumlee, M. (2014).
\newblock Fast prediction of deterministic functions using sparse grid
  experimental designs.
\newblock {\em Journal of the American Statistical Association},
  109(508):1581--1591.

\bibitem[Plumlee and Apley, 2017]{plumlee2017lifted}
Plumlee, M. and Apley, D.~W. (2017).
\newblock Lifted {B}rownian kriging models.
\newblock {\em Technometrics}, 59(2):165--177.

\bibitem[Pratola and Higdon, 2016]{pratola2016bayesian}
Pratola, M. and Higdon, D. (2016).
\newblock Bayesian additive regression tree calibration of complex
  high-dimensional computer models.
\newblock {\em Technometrics}, 58(2):166--179.

\bibitem[{R Core Team}, 2015]{R2015}
{R Core Team} (2015).
\newblock {\em R: A Language and Environment for Statistical Computing}.
\newblock R Foundation for Statistical Computing, Vienna, Austria.

\bibitem[Ranjan et~al., 2011]{ranjan2011computationally}
Ranjan, P., Haynes, R., and Karsten, R. (2011).
\newblock A computationally stable approach to {G}aussian process interpolation
  of deterministic computer simulation data.
\newblock {\em Technometrics}, 53(4):366--378.

\bibitem[Ravikumar et~al., 2009]{ravikumar2009spam}
Ravikumar, P., Lafferty, J., Liu, H., and Wasserman, L. (2009).
\newblock Sparse additive models.
\newblock {\em Journal of the Royal Statistical Society: Series B},
  71(5):1009--1030.

\bibitem[{Revolution Analytics} and Weston, 2015]{pkgforeach}
{Revolution Analytics} and Weston, S. (2015).
\newblock {\em foreach: Provides Foreach Looping Construct for R}.
\newblock R package version 1.4.3.

\bibitem[Roth and Fischer, 2008]{roth2008group}
Roth, V. and Fischer, B. (2008).
\newblock The group-lasso for generalized linear models: uniqueness of
  solutions and efficient algorithms.
\newblock In {\em Proceedings of the 25th international conference on Machine
  learning}, pages 848--855.

\bibitem[Sacks et~al., 1989]{sacks1989design}
Sacks, J., Welch, W.~J., Mitchell, T.~J., and Wynn, H.~P. (1989).
\newblock Design and analysis of computer experiments.
\newblock {\em Statistical Science}, 4(4):409--423.

\bibitem[Santner et~al., 2003]{santner2013design}
Santner, T.~J., Williams, B.~J., and Notz, W.~I. (2003).
\newblock {\em The Design and Analysis of Computer Experiments}.
\newblock Springer-Verlag, New York.

\bibitem[Shao, 1997]{shao1997asymptotic}
Shao, J. (1997).
\newblock An asymptotic theory for linear model selection.
\newblock {\em Statistica Sinica}, 7(2):221--242.

\bibitem[Shibata, 1984]{shibata1984approximate}
Shibata, R. (1984).
\newblock Approximate efficiency of a selection procedure for the number of
  regression variables.
\newblock {\em Biometrika}, 71(1):43--49.

\bibitem[Stone et~al., 1997]{stone1997polynomial}
Stone, C.~J., Hansen, M.~H., Kooperberg, C., Truong, Y.~K., et~al. (1997).
\newblock Polynomial splines and their tensor products in extended linear
  modeling: 1994 wald memorial lecture.
\newblock {\em The Annals of Statistics}, 25(4):1371--1470.

\bibitem[Sung, 2019]{sung2017mrfa}
Sung, C.-L. (2019).
\newblock {\em MRFA: Fitting and Predicting Large-Scale Nonlinear Regression
  Problems using Multi-Resolution Functional ANOVA (MRFA) Approach}.
\newblock R package version 0.4.

\bibitem[Sung et~al., 2018]{sung2016potentially}
Sung, C.-L., Gramacy, R.~B., and Haaland, B. (2018).
\newblock Exploiting variance reduction potential in local {G}aussian process
  search.
\newblock {\em Statistica Sinica}, 28(2):577--600.

\bibitem[Wahba, 1990]{wahba1990spline}
Wahba, G. (1990).
\newblock {\em Spline Models for Observational Data}.
\newblock SIAM, Philadelphia.

\bibitem[Wahba et~al., 1995]{wahba1995smoothing}
Wahba, G., Wang, Y., Gu, C., Klein, R., and Klein, B. (1995).
\newblock Smoothing spline {ANOVA} for exponential families, with application
  to the {W}isconsin {E}pidemiological {S}tudy of {D}iabetic {R}etinopathy.
\newblock {\em The Annals of Statistics}, 23(6):1865--1895.

\bibitem[Wang et~al., 2007]{wang2007tuning}
Wang, H., Li, R., and Tsai, C.-L. (2007).
\newblock Tuning parameter selectors for the smoothly clipped absolute
  deviation method.
\newblock {\em Biometrika}, 94(3):553--568.

\bibitem[Wang et~al., 2013]{wang2013optimisation}
Wang, K., Zhang, C., Su, J., Wang, B., and Hung, Y. (2013).
\newblock Optimisation of composite manufacturing processes with computer
  experiments and kriging methods.
\newblock {\em International Journal of Computer Integrated Manufacturing},
  26(3):216--226.

\bibitem[Wang and Haaland, 2018]{wangStochKrigingPaper}
Wang, W. and Haaland, B. (2018).
\newblock Controlling sources of inaccuracy in stochastic kriging.
\newblock {\em Technometrics}, to appear.

\bibitem[Wendland, 1995]{wendland1995piecewise}
Wendland, H. (1995).
\newblock Piecewise polynomial, positive definite and compactly supported
  radial functions of minimal degree.
\newblock {\em Advances in Computational Mathematics}, 4(1):389--396.

\bibitem[Wendland, 2005]{wendland2004scattered}
Wendland, H. (2005).
\newblock {\em Scattered Data Approximation}.
\newblock Cambridge University Press, New York.

\bibitem[Wu and Hamada, 2009]{wu2011experiments}
Wu, C. F.~J. and Hamada, M.~S. (2009).
\newblock {\em Experiments: Planning, Analysis, and Optimization (Second
  Edition)}.
\newblock John Wiley \& Sons, New York.

\bibitem[Yuan and Lin, 2006]{yuan2006model}
Yuan, M. and Lin, Y. (2006).
\newblock Model selection and estimation in regression with grouped variables.
\newblock {\em Journal of the Royal Statistical Society: Series B},
  68(1):49--67.

\bibitem[Zhang et~al., 2010]{zhang2010regularization}
Zhang, Y., Li, R., and Tsai, C.-L. (2010).
\newblock Regularization parameter selections via generalized information
  criterion.
\newblock {\em Journal of the American Statistical Association},
  105(489):312--323.

\end{thebibliography}

\end{document}